%%%%%%%%%%%%%%%%%%%%%%%%%%%%%%%%%%%%%%%%%%%%%%%%%%%%%%%%%%%%%%%%%%%%
%
%      Crossing Sequence Technique
%           * Use the command `latex2e'
%
%%%%%%%%%%%%%%%%%%%%%%%%%%%%%%%%%%%%%%%%%%%%%%%%%%%%%%%%%%%%%%%%%%%%
\pdfoutput=1
\documentclass[10pt,a4paper]{article}

\usepackage{amsmath,amsgen,latexsym}
\usepackage{amstext,amssymb,amsfonts,latexsym}
\usepackage{theorem}
\usepackage{pifont}
\usepackage{graphicx}

\setlength{\evensidemargin}{-0.1cm}
\setlength{\oddsidemargin}{-0.1cm}
\setlength{\topmargin}{-0.6cm}
\setlength{\textheight}{24.5cm}
\setlength{\textwidth}{16.0cm}

\setlength{\headsep}{0cm}
\setlength{\headheight}{0cm}
\setlength{\marginparwidth}{0cm}

%%%%%%%%%%%%%%%%%%%%%%%%%%%%%%%%%%%%%%%%%%%%%%%%%%%%%%%%%%%%%%%%%%%%%%

% \renewcommand{\thefootnote}{\fnsymbol{footnote}}

%%%%%%%%%%%%%%%%%%%%%%%%%%%%%%%%%%%%%%%%%%%%%%%%%%%%%%%%%%%%%%%%%%%%
 % General abbreviations
%%%%%%%%%%%%%%%%%%%%%%%%%
 % Skip and noindent
 \newcommand{\bs}{\bigskip}
 \newcommand{\ms}{\medskip}
 \newcommand{\n}{\noindent}
 \newcommand{\s}{\smallskip}
 \newcommand{\hs}[1]{\hspace*{ #1 mm}}
 \newcommand{\vs}[1]{\vspace*{ #1 mm}}

% Abstract

% Fonts

% \newcommand{\setempty}{\mathrm{\O}}
 \newcommand{\setempty}{\varnothing}
 
 \newcommand{\nat}{\mathbb{N}}
 
 \newcommand{\integer}{\mathbb{Z}}

%%%%%%%%%%%%%%%%%%%%%%%%%
% Functions

% Complementation

% Abbreviations

%%%%%%%%%%%%%%%%%%%%%%%%%%%%
% Class abbreviations

 \newcommand{\CC}{{\cal C}}

 \newcommand{\LL}{{\cal L}}

 \newcommand{\TT}{{\cal T}}
 \newcommand{\PP}{{\cal P}}
 \newcommand{\QQ}{{\cal Q}}

%%%%%%%%%%%%%%%%%%%%%
%%%%%%%%%%%%%%%%%%%%
% Complexity classes

 \newcommand{\dl}{\mathrm{L}}
 
 \newcommand{\p}{\mathrm{P}}

% Other Complexity Classes

% Function classes

 \newcommand{\fl}{\mathrm{FL}}

% Operators

 \newcommand{\cfl}{\mathrm{CFL}}
 
 \newcommand{\dcfl}{\mathrm{DCFL}}

%%%%%%%%%%%%%%%%%%%%%%%%%%%%%%%%%%%
% Delimiters

%%%%%%%%%%%%%%%%%%%%%%%%%%%%%%%%%%

% Logic connectives, etc.

% \newcommand{\implies}{\longrightarrow}

%%%%%%%%%%%%%%%%%%%%%%%%%%%%%%%%%%%
%%%%%%%%%%%%%%%%%%%%%%%%%%%%%%%%%%%

% Theorems, etc.

% Theorems, etc.

\theoremstyle{plain}
\theoremheaderfont{\bfseries}
\setlength{\theorempreskipamount}{3mm}
\setlength{\theorempostskipamount}{3mm}

 \newtheorem{theorem}{Theorem}[section]
 \newtheorem{lemma}[theorem]{Lemma}
 \newtheorem{proposition}[theorem]{{\bf Proposition}}
 \newtheorem{corollary}[theorem]{Corollary}

 {\theorembodyfont{\rmfamily}
  \newtheorem{definition}[theorem]{Definition}}
 {\theorembodyfont{\rmfamily} }
 {\theorembodyfont{\rmfamily} }

 \newenvironment{proof}{\par \noindent
            {\bf Proof. \hs{2}}}{\hfill$\Box$ \vspace*{3mm}}

 \newenvironment{yproof}{\par \noindent
            {\bf Proof. \hs{2}}}{\hfill$\Box$ \vspace*{3mm}}

 \newenvironment{proofof}[1]{\vspace*{5mm} \par \noindent
         {\bf Proof of #1.\hs{2}}}{\hfill$\Box$ \vspace*{3mm}}

%%%%%%%%%%%%%%%%%%

 \newcommand{\ceilings}[1]{\lceil #1 \rceil}
 
 \newcommand{\pair}[1]{\langle #1 \rangle}

%%%%%%%%%%%%%%%%%%%%%%%%%%%%%%%%%%%%%%%%%%%%%%%%%%%%%%%

\newcommand{\ignore}[1]{}

\newcommand{\track}[2]{[\: \begin{subarray}{c} #1 \\%
      #2 \end{subarray} ]}

\newcommand{\dollar}{\$}

%%%%

%\newcommand{\runtime}{\mathrm{Time}}

\newcommand{\logcfl}{\mathrm{LOGCFL}}
\newcommand{\logdcfl}{\mathrm{LOGDCFL}}

\newcommand{\Lmreduces}{\leq^{\mathrm{L}}_{m}}
\newcommand{\ksda}[1]{#1 \mathrm{SDA}}

\newcommand{\logksda}[1]{\mathrm{LOG} #1 \mathrm{SDA}}
\newcommand{\logksna}[1]{\mathrm{LOG} #1 \mathrm{SNA}}
\newcommand{\auxsna}{\mathrm{AuxSNA}}
\newcommand{\mmid}{\!\mid\!}

%%%%%%%%%%%%%%%%%%%%%%%%%%%%%%%%%%%

% End of general abbreviations

%%%%%%%%%%%%%%%%%%%%%%%%%%%%%%%%%%

%%%%%%%%%%%%%%%%%%%%%%%%%%%%%%%%%%%%%%%%%%%%%%%%%%%%%%%%%%%%%%%%%%
%
 \begin{document}
%%%%%%%%%%%%%%%%%%
%%%%%%%%%%%%%%%%%%
\pagestyle{plain}
\pagenumbering{arabic}
\setcounter{page}{1}
\setcounter{footnote}{0}

\begin{center}
{\Large {\bf Nondeterministic Auxiliary Depth-Bounded Storage Automata and Semi-Unbounded Fan-in Cascading Circuits}}\footnote{This current article extends and corrects its preliminary report \cite{Yam22} that has appeared in the Proceedings of the 28th International Computing and Combinatorics Conference (COCOON 2022), Shenzhen, China, October 22--24, 2022,
Lecture Notes in Computer Science, vol. 13595, pp. 61--69, Springer, 2022. The conference talk was given online because of the coronavirus pandemic.}
\bs\s\\

{\sc Tomoyuki Yamakami}\footnote{Present Affiliation: Faculty of Engineering, University of Fukui, 3-9-1 Bunkyo, Fukui 910-8507, Japan}\\
\end{center}
\ms

%%%%%%%%%%%%%%%%%%
%\s
\begin{abstract}
We discuss a nondeterministic variant of the recently introduced  machine model of deterministic auxiliary depth-$k$ storage automata (or aux-$k$-sda's) by Yamakami. It was proven that all languages recognized by polynomial-time logarithmic-space aux-$k$-sda's are located between $\mathrm{LOGDCFL}$ and $\mathrm{SC}^k$ (the $k$th level of Steve's class SC).
We further propose a new and simple computational model of semi-unbounded fan-in Boolean circuits composed partly of cascading blocks, in which the first few AND gates of unbounded fan-out (called AND$_{(\omega)}$ gates) at each layer from the left (where all gates at each layer are indexed from left to right) are linked in a ``cascading'' manner to their right neighbors though specific AND and OR gates. We use this new circuit model to characterize a nondeterministic variant of the aux-$2k$-sda's (called aux-$2k$-sna's) that run in polynomial time using logarithmic work space.
By relaxing the requirement for cascading circuits, we also demonstrate how such cascading circuit families characterize the complexity class $\p$. This yields an upper bound on the computational complexity of  $\mathrm{LOG}k\mathrm{SNA}$ by $\p$.

\s
\n{\bf Keywords.}
{auxiliary storage automata, LOG$k$SNA, semi-unbounded fan-in circuit, cascading circuit, frozen blank sensitivity, NSC$^{k}$}
\end{abstract}

%%%%%%%%%%%%%%%%%
%%%%%%%%%%%%%%%%%
\sloppy
\section{Background and Main Contributions}\label{sec:introduction}

We quickly review necessary background knowledge on parallel computation, introduce central computational models, and describe our main contributions associated with these models.

%%%%%
\subsection{The Language Families $k$SDA and LOG$k$SDA}

Opposed to sequential computation, \emph{parallel computation} has gained great interests in many real-life situations.
Such parallel computation has been realized by numerous computational models, typically alternating Turing machines (or ATMs, for short) and circuit families, in the past literature. As such examples, uniform families of polynomial-size $O(\log^kn)$-depth Boolean circuits characterize the complexity class $\mathrm{AC}^k$ and polynomial-time ATMs with $O(\log^kn)$ alternations induce the complexity class $\mathrm{SC}^k$.

Here, let us zero in on the complexity class, known as $\logcfl$, which is originally defined by Cook \cite{Coo71} as the closure of $\cfl$ (context-free language class) under logarithmic-space many-one reductions (or $\dl$-m-reductions, for short).
As Sudborough \cite{Sud78} demonstrated, this complexity class is precisely characterized by two  restricted forms of machine-based models, including auxiliary pushdown automata and multi-head pushdown automata.
These models still rely on the use of \emph{(pushdown) stacks} and corresponding stack operations, where a ``stack'' is a quite simple memory device whose access is limited to the topmost stored data and it is disallowed to read/write the data stored in the other part of the device.
Ruzzo \cite{Ruz80} used ATMs to precisely characterize languages in $\logcfl$.
The simulation of stack behaviors plays a crucial role in those  characterizations because a series of stack operations is, intuitively,  related to depth-first search of a computation tree of an ATM.
Venkateswaran \cite{Ven91} analyzed Ruzzo's simulation techniques and  further took a step to give a precise characterization of languages in $\logcfl$ in terms of uniform families of semi-unbounded fan-in circuits.
Notice that circuit characterizations are in general quite useful for a better understanding of parallelism of computations for given languages. Similar characterizations were obtained  for the counting variant of $\logcfl$ in \cite{Vin91} and also for the unambiguous variant of $\logcfl$ in \cite{NR95}.
The deterministic variant of $\logcfl$, denoted $\logdcfl$, is also characterized by PRAMs \cite{DR86,FLR96} and multiplex circuits \cite{FLR96,MRV99}. It is important to note that $\logdcfl$ is included in $\mathrm{SC}^2$ \cite{Coo79}.

To expand $\cfl$ as well as $\dcfl$ (deterministic context-free language class), numerous models have been proposed in the past literature by simply allowing a direct access to more  stored data in the entire memory device. By allowing more accesses to stacks, for instance, Ginsburg, Greibach, and Harrison \cite{GGH67a,GGH67b} studied  \emph{stack automata} and Meduna \cite{Med06} introduced \emph{deep pushdown automata}.
Our particular interest lies on Hibbard's \cite{Hib67} \emph{deterministic $k$-limited automaton} (abbreviated as $k$-lda), which uses a single read/write tape for reading inputs and storing information but the access to each tape cell is severely limited to the first $k$ visits. Such a rewriting restriction may possibly model, for example, physical abrasions of computer's hard drives caused by frequent frictions between the surface of the drive and a scanning head.
It turns out that the ``rewriting'' of tape cells provides enormous power to computation although the number of rewriting opportunities is limited.
This makes the language families defined in terms of $k$-lda's for various numbers $k$ form a proper infinite hierarchy in between $\dcfl$ and  $\cfl$ \cite{Hib67}. In sharp contrast, the nondeterministic variant of $k$-lda's recognize only context-free languages \cite{Hib67}.
As for a recent progress on a study of $k$-limited automata, the reader refers to, e.g., \cite{PP14,Yam19}.

In 2021, a new machine model, called \emph{(one-way) deterministic depth-$k$ storage automata}\footnote{It is possible, as noted in \cite{Yam21}, to differentiate between two variations of $k$-sda's by way of \emph{allowing} or \emph{disallowing} the input-tape head to move around while scanning any  frozen and near frozen storage tape cells. The former condition introduces an ``depth-immune'' model and the latter with no tape head moves on a specific frozen blank symbol introduces a ``depth-susceptible'' model. See \cite{Yam21} for their definitions and properties.} (or $k$-sda's, for short), was proposed in \cite{Yam21} as a natural extension of deterministic pushdown automata as well as Hibbard's $k$-lda's.
In its formulation, a separate \emph{storage tape} plays a key role as a functional extension of a (pushdown) stack, by allowing its tape head to revise the content of the  storage tape only during the first $k$ visits. This is intended to handle more complex operations over stored items in the memory device.
The roles of an input tape and a storage tape are clearly different. The input tape is read only, maintaining given fixed data whereas the storage tape is rewritable, storing and freely modifying stored data.
The family of languages recognized by $k$-sda's, denoted  $\ksda{k}$, naturally contains the language families induced by (one-way) deterministic pushdown automata and Hibbard's $k$-lda's.
Moreover, the non-context-free language $L_{abc}=\{a^nb^nc^{2n}\mid n\geq1\}$ belongs to  $4\mathrm{SDA}$; on the contrary, $2\mathrm{SDA}$\footnote{In Lemma 2.1 of \cite{Yam21}, the depth-susceptible model of 2-sda's are used to claim this assertion. In contrast, this work presents another characterization of $\dcfl$ by 2-sda's with a new constraint of \emph{(frozen) blank sensitivity} in Section \ref{sec:FBS}.} coincides with $\dcfl$ \cite{Yam21}.
Similar to $\logdcfl$ induced from $\dcfl$, for each index $k\geq1$, $\logksda{k}$ is defined to be composed of all languages that are $\dl$-m-reducible to languages in $\ksda{k}$.
This family $\logksda{k}$  was shown in \cite{Yam21} to be located between $\logdcfl$ and $\mathrm{SC}^k$ (the $k$th level of Steve's class SC) for any $k\geq2$.
It was proven also in \cite{Yam21} that all languages in $\logksda{k}$ are precisely characterized in terms of   \emph{(two-way) deterministic auxiliary depth-$k$ storage automata} (or aux-$k$-sda's) running in polynomial time using logarithmic (auxiliary) work space.
This characterization was proven via another multi-head machine model, called \emph{two-way multi-head deterministic depth-$k$ storage automata}.

Unfortunately, many characteristic features of storage automata still remain elusive from our understandings. For example, we do not know whether  $\logksda{(k+1)}$ is different from $\logksda{k}$ or whether $\logksda{k}$ is properly contained in $\mathrm{SC}^k$. To promote our understanding of storage automata, we nevertheless need to explore further features of these exquisite machines and their variants.

%%%%%
%%%%%
\subsection{Nondeterministic Variants: $k$-sna's and aux-$k$-sna's}\label{sec:contribution}

The nondeterministic variants of $k$-lda's, called $k$-lna's, was also studied by Hibbard \cite{Hib67} by allowing underlying machines to make a nondeterministic choice at every step. Unlike the deterministic case of $k$-lda's, nevertheless,
the recognition power of $k$-lna's does not exceed that of nondeterministic pushdown automata.
As an instant consequence, the family of all languages recognized by $k$-lna's are neither closed under intersection nor complementation.  In the case of storage automata,  it is possible to analogously define  \emph{(one-way) nondeterministic depth-$k$ storage automata} (or  $k$-sna's) from $k$-sda's by requiring that, for every computation path generated by each $k$-sna, the number of rewriting each tape cell is upper-bounded by $k$.
Therefore, the rewriting restriction severely hinders the computational power. Different from $k$-lna's, $k$-sna's are  more powerful than pushdown automata.
For their precise definitions, refer to Section \ref{sec:k-sda's}.
In natural analogy to $k\mathrm{SDA}$ and $\logksda{k}$, we intend to use the notation $k\mathrm{SNA}$ for the class of all languages recognized by $k$-sna's and the notation $\logksna{k}$ for the closure of $k\mathrm{SNA}$ under logspace reductions. Naturally, we can ask a question of what the computational complexity of $\logksna{k}$ is. By the nondeterminization of $k\mathrm{SDA}$, we obtain the following statement.

\renewcommand{\labelitemi}{$\circ$}
\begin{enumerate}\vs{-1}
\item[(1)] For any positive integer $k$, $\logksna{k}$ is located between $\logcfl$ and $\mathrm{NSC}^k$, where $\mathrm{NSC}^k$ is the nondeterministic version of $\mathrm{SC}^k$.
     (Proposition  \ref{LOGkSNA-NSC})
\end{enumerate}

Similarly to aux-$k$-sda's, we introduce the computational model of \emph{nondeterministic auxiliary depth-$k$ storage automata} (or aux-$k$-sna's). Now, let us introduce the generic notation of $\auxsna\mathrm{depth,\!space,\!time}(k,s(n),t(n))$ for a family of languages  using aux-$k$-sna's that run in $O(t(n))$ time using $O(s(n))$ space.
In this work, we particularly focus on the frozen blank symbol $B$, which must be placed on storage-tape cells after the first $k$ visits to these tape cells and is never replaced afterward.
Later, we will assume that every aux-$k$-sna satisfies a condition on the behavior of input-tape heads while reading $B$, called \emph{(frozen) blank sensitivity}.

For the languages in $\logcfl$, Venkateswaran \cite{Ven91} demonstrated an elegant characterization of them  in terms of  uniform families of polynomial-size $O(\log{n})$-depth semi-unbounded fan-in circuits, where a family of  \emph{semi-unbounded fan-in circuits} refers to a circuit family for which there is a constant $\ell\geq1$ such that any path from the root to a leaf node in any circuit in the family has at most $\ell$ consecutive gates of AND. Similar circuit characterizations were obtained by Vinay \cite{Vin91} and Niedermeier and Rossmanith \cite{NR95} for unambiguous and counting variants of $\logcfl$.
As a natural question, we ask whether the circuit characterization of languages in $\logcfl$ can be extended to $\logksna{k}$.

This  work intends to seek out an alternative characterization  of aux-$k$-sna's  in terms of certain forms of Boolean circuits. To establish the desired characterization, we introduce a modified form of Boolean circuits, which contain special subcircuits made of AND gates of ``unbounded fan-out'' (denoted AND$_{(\omega)}$) together with standard AND and OR gates.
Such a subcircuit is specifically called a  \emph{cascading block}. See Section \ref{sec:cascading} for its formal definition.
Cascading blocks with cascading length at most $k$ are succinctly called $k$-cascading blocks.

In particular, we consider logarithmic-depth polynomial-size semi-unbounded fan-in circuits built up partly with $k$-cascading blocks. These special circuits are dubbed as \emph{$k$-cascading circuits}.
The \emph{alternation} of such a circuit is the maximum number of times when the gate types switch between AND and OR (ignoring AND$_{(\omega)}$ gates) along any path from the output gate to an input gate. The \emph{size} of a circuit is the total number of gates and wires (which connect between gates) in it.
The notation $\mathrm{CCIRcasc,\!alt,\!size}(k,O(\log{n}),n^{O(1)})$ expresses the collection of all languages computed by families of such circuits of semi-unbounded fan-in, polynomial-size, and logarithmically-bounded alternations under the log-space uniformity condition.

Of all key results of this work, we will prove the following.

\renewcommand{\labelitemi}{$\circ$}
\begin{enumerate}\vs{-1}
\item[(2)]
For any positive integer $k$,  $\auxsna\mathrm{depth,\!space,\!time}(2k,O(\log{n}),n^{O(1)})$   coincides with $\mathrm{CCIRcasc,\!alt,\!size}(k,O(\log{n}),n^{O(1)})$, provided that all aux-$2k$-sna's are (frozen) blank sensitive. (Theorem   \ref{circuit-character})
\end{enumerate}

The upper bound $k$ of cascading length is an important parameter in characterizing $\auxsna\mathrm{depth,\!space,\!time}(2k,O(\log{n}),n^{O(1)})$ by $\mathrm{CCIRcasc,\!alt,\!size}(k,O(\log{n}),n^{O(1)})$. By removing this upper bound, we further obtain the characterization of $\p$.

\renewcommand{\labelitemi}{$\circ$}
\begin{enumerate}\vs{-1}
\item[(3)]
The complexity class $\p$ coincides with $\mathrm{CCIRcasc,\!alt,\!size}(n^{O(1)}, n^{O(1)}, n^{O(1)})$. (Proposition \ref{P-cascading-circuit})
\end{enumerate}

The aforementioned three results will be proven in Sections \ref{sec:simulation} and \ref{sec:P-connection} after explaining our basic computational models in Section \ref{sec:basics}.

%%%%%%%%%%%%%%%%%
%%%%%%%%%%%%%%%%%
\section{Basic Machine Models: $k$-sna's and aux-$k$-sna's}\label{sec:basics}

We briefly explain our machine models used in the rest of this work: nondeterministic storage automata and nondeterministic auxiliary storage automata.

%%%%
\subsection{Numbers, Sets, and Languages}\label{sec:numbers-sets}

We use two notations $\integer$ and $\nat$ for the set of all \emph{integers} and that of all \emph{natural numbers} (i.e., nonnegative integers), respectively.
Moreover, another notation $\nat^{+}$ is used for the set $\nat-\{0\}$.
An \emph{integer interval} $[m,n]_{\integer}$ refers to the set $\{m,m+1,m+2,\ldots,n\}$ for two arbitrary numbers $m,n\in\integer$ with $m\leq n$. We further abbreviate $[1,n]_{\integer}$ as $[n]$ if $n\geq1$.
In this work, all \emph{polynomials} are assumed to have natural-number coefficients and all \emph{logarithms} are taken to the base $2$.
Given a set $A$, $\PP(A)$ denotes the \emph{power set} of $A$, i.e., the set of all subsets of $A$.

A finite nonempty set of letters or symbols is called an \emph{alphabet} and a finite sequence of such symbols from an alphabet $\Sigma$ is called a \emph{string} over $\Sigma$. The \emph{length} of a string $x$,  denoted $|x|$, is the total number of symbols in $x$. A unique string of length $0$ is called the \emph{empty string} and always denoted $\varepsilon$. The set of all strings over $\Sigma$ is expressed as $\Sigma^*$. A \emph{language} over $\Sigma$ is just a subset of $\Sigma^*$. Given a number $n\in\nat$, $\Sigma^n$ denotes the set of all strings of length exactly $n$. Notice that $\Sigma^0 = \{\varepsilon\}$.

For a string $x$ of length $n$ over alphabet $\Sigma$ and for any number $i\in[n]$, the notation $x_{(i)}$ denotes the $i$th symbol of $x$. Thus, the sequence $x_{(1)}x_{(2)}\cdots x_{(n)}$ coincides with $x$. For our convenience, we further set $x_{(0)}=\varepsilon$ and $x_{(n+i)}=x_{(n)}$ for any integer $i\geq0$.

%%%%%%
%%%%%%
\subsection{Auxiliary Storage Automata with Depth-Bounded Storage Tapes}\label{sec:k-sda's}

Let us explain two computational models of storage automata and auxiliary storage automata, which are central to this work.
To make the subsequent definitions concise, we assume the reader's familiarity with nondeterministic auxiliary pushdown automata (or aux-npda's, for short) introduced in \cite{Coo71}.
Notice that aux-npda's running in polynomial time using logarithmic work space characterize exactly languages in $\logcfl$ \cite{Sud78}.

%%%

The machine model of \emph{deterministic auxiliary depth-$k$ storage automata} (or aux-$k$-sda's, for short) was first studied in \cite{Yam21}.
The interested reader may refer to \cite{Yam21} for the detailed description of this model.
In this work, however, we introduce its ``nondeterministic'' variant, called a \emph{nondeterministic auxiliary depth-$k$ storage automaton} (or an aux-$k$-sna), as a two-way 3-tape nondeterministic Turing machine  with a read-only input tape, a rewritable auxiliary (work) tape, and a rewritable depth-$k$ storage tape.
Initially, an auxiliary tape and a storage tape are all blank except for the left endmarkers and their corresponding tape heads start at these endmarkers.

%%%%%%

We formally define an aux-$k$-sna $M$ as a nonuple $(Q,\Sigma, \{{\rhd},{\lhd}\}, \{\Gamma^{(e)}\}_{e\in[0,k]_{\integer}}, \Theta, \delta, q_0, Q_{acc}, Q_{rej})$, where $Q$ is a finite set of inner states, $\Sigma$ is an input alphabet, $\rhd$ and $\lhd$ are endmarkers, $\Gamma^{(e)}$ is the $e$th storage alphabet with $\Gamma^{(0)}=\{\Box\}$ (where $\Box$ is called the \emph{initial blank symbol}) and $\Gamma^{(k)}=\{{\rhd},B\}$ (where $B$ is the \emph{frozen blank symbol}), $\Theta$ is an auxiliary (work) tape  alphabet, $q_0$ is the initial state in $Q$, and $Q_{acc}$ and $Q_{rej}$ are respectively a set of accepting states and that of rejecting states with $Q_{acc}\cup Q_{rej}\subseteq Q$ and $Q_{acc}\cap Q_{rej}=\setempty$. The input tape has two endmarkers and both the storage and the auxiliary tapes have only the left endmarker. A transition function $\delta$ maps $(Q-Q_{halt}) \times\check{\Sigma}  \times \check{\Theta} \times \Gamma$ to $\PP(Q\times \check{\Theta} \times \Gamma \times  D^3)$ with $\check{\Sigma}=\Sigma\cup\{{\rhd},{\lhd}\}$,
$\Gamma=\bigcup_{e\in[0,k]_{\integer}} \Gamma^{(e)}$, $\check{\Theta} = \Theta\cup\{{\rhd}\}$, $Q_{halt} = Q_{acc}\cup Q_{rej}$, and $D =\{-1,0,+1\}$. For any symbol $\gamma$ in $\Gamma^{(e)}$, ``$e$'' is referred to as the \emph{depth} of $\gamma$ and denoted  $depth(\gamma)$. As customary, the endmarkers are never used to replace any tape symbol other than the endmarkers. The three head directions $-1$, $+1$, and $0$ mean ``moving to the left'', ``moving to the right'', and ``staying still'' (or stationary move), respectively.

A transition of the form ``$(p,\theta,\xi,d_1,d_2,d_3)\in\delta(q,\sigma,\tau,\gamma)$''  expresses that, in scanning three symbols $\sigma$, $\tau$, and $\gamma$ on $M$'s input, storage, and auxiliary tapes, $M$ changes its inner state from $q$ to $p$ by moving its input-tape head in direction $d_1$, writes $\theta$ over $\tau$ (possibly $\theta=\tau$) on an auxiliary (work) tape by moving its tape head in direction $d_2$, and writes $\xi$ over $\gamma$ on a storage tape by moving its tape head in direction $d_3$. In particular, when $d_3=0$ (resp., $d_1=0$), $N$'s move is conveniently referred to as a \emph{storage stationary move} (resp., an \emph{input stationary move}).
We need to distinguish between the following two types of storage stationary moves: (1) the first stationary move taken just after the storage-tape head comes to the current tape cell from its neighboring tape cell and (2) any stationary moves that follow other stationary moves. The stationary move of type (1) and type (2) are respectively called \emph{intrinsic} and \emph{extrinsic} and they lead to completely different tape head moves.

%%%

A \emph{left turn} (resp., a \emph{right turn}) of a tape head at a tape cell is a circumstance in which the tape head moves to this tape cell from the left (resp., the right) and, by a non-stationary move at the next step,
it leaves the tape cell to the left (resp., the right). A \emph{turn} refers to either a left turn or a right turn.
%
%We need to distinguish between two types of ``turns'': (1) the  circumstance where a nonempty series of stationary moves are taken between a right move and a left move and (2) the circumstance where a right move and a left move are taken successively. The first type of the tape head moves is distinctively referred to as a \emph{slow turn}, and the second type is called a \emph{quick turn}.
%
The frozen blank symbol $B$ is a key to this work. A \emph{frozen blank turn} means a turn of the storage-tape head while reading the frozen blank symbol $B$. The \emph{no frozen blank turn condition} states that the storage-tape head of an aux-$k$-sna $M$ does not make any frozen blank turn on any computation path of $M$ on any input.

The following conditions are collectively called the \emph{depth-$k$ requirement} of aux-$k$-sna's.

\s
\n{\sf [depth-$k$ requirement]}
When $M$ scans a symbol $\gamma\in\Gamma^{(e)}$ on a storage-tape cell with $e<k$,  if $M$ makes a turn (thus, $d_3\neq0$), then $M$ should rewrite another symbol $\xi$ in $\Gamma^{(\min\{e+2,k\})}$ over $\gamma$. Each turn made at a tape cell is viewed as ``double'' visits to this tape cell. If $M$ moves to a neighboring tape cell, then $M$ should rewrite another symbol $\xi$ in $\Gamma^{(\min\{e+1,k\})}$ over $\gamma$.
Additionally, all symbols in $\Gamma^{(k)}$ are not rewritable.

\s
\n{\sf [stationary requirement]}
In this work, we demand the following three conditions. (1) $d_1=d_2=d_3=0$ never happens; namely, at least one of
the tape heads must move at any step.
(2) $d_2=0$ implies $\theta=\tau$.
(3) In the case of $e<k$ and $d_3=0$, if $M$ makes an intrinsic stationary move, then it should revise the content of tape cell from $\gamma$ to another symbol $\xi$ in $\Gamma^{(\min\{e+1,k\})}$; if $M$ makes an extrinsic stationary move, then $M$ should maintain $\gamma$ on the current tape cell (which intuitively corresponds to the $\varepsilon$-move of a pushdown automaton).

\s

%%%%%%
%%%%%%

\begin{figure}[t]
\centering
\includegraphics*[height=4.8cm]{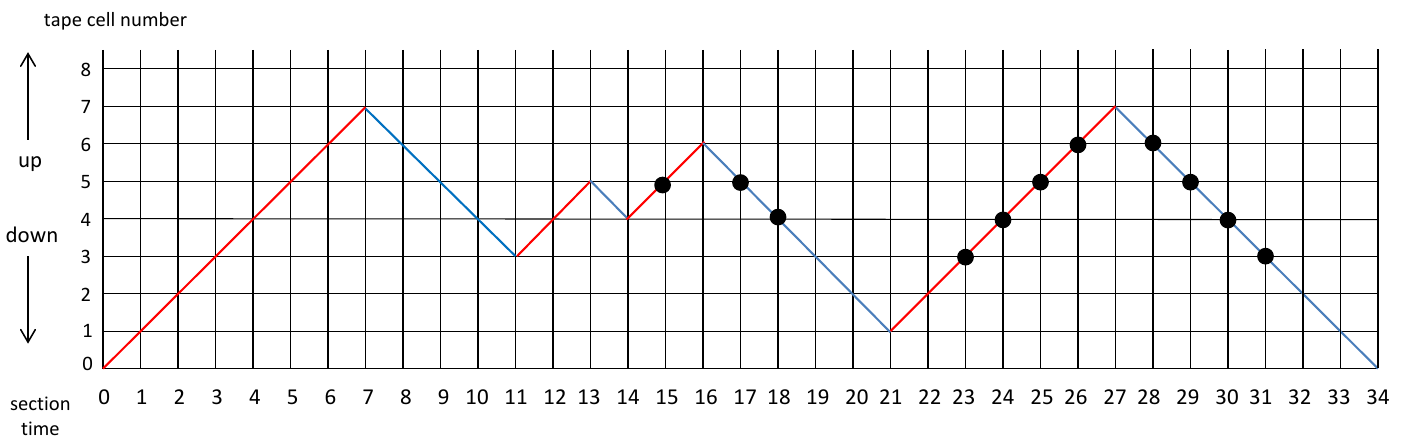}
\caption{Movement of $M$'s storage-tape head along a fixed computation path of $M$ on input $x$ with $k=4$, where each lattice point indicates a surface configuration at section time $t$ and is referred to as $C_t$.
A {section time} means the number of steps by merging a consecutive series of all stationary moves of the storage-tape head into a non-stationary move that precedes the series.
For example, the storage-tape head in $C_2,C_9,C_{11},C_{18}$ scans cells  $2,5,3,4$, respectively. Cell 4 is visited by the tape head in $C_4,C_{10},C_{12},C_{14},C_{18}$, etc. The black circles indicate that the storage-tape cells are frozen with the frozen blank symbol $B$. For example, $C_{17},C_{23},C_{26}$ contain frozen blank tape cells.}\label{fig:storage-tape-head-move}
\end{figure}

%%%%%
%%%%%

It is possible to modify the basic definition of aux-$k$-sna so that, for an appropriately fixed constant $c\in\nat^{+}$, all  storage-tape cells are partitioned into blocks of $c\ceilings{\log{n}}$ cells, which are succinctly called \emph{cell blocks}, and we force the machine to sweep each of such cell blocks from left to right or from left to right by rewriting all cells of the block, where $n$ denotes an input size. In the calculation of the storage depth, we can treat each cell block as a single tape cell. Those cell blocks will play an important role in Section \ref{sec:lemma-proof-II}.

A \emph{configuration} of $M$ on input $x$ is a sextuple $(q,l_1,l_2,w,l_3,u)$, which indicates that $M$ is in inner state $q$, scanning cell $l_1$ of the input tape and cell $l_3$ of the storage tape holding string $u$ together with the entire tape content $w$ of the auxiliary tape whose tape head rests on cell $l_2$.
The initial configuration of $M$ on input $x$ is $(q_0,0,0,\rhd\Box^{m_1},0,\rhd\Box^{m_2})$, where $m_1$ (resp., $m_2$) is the maximum number of storage-tape cells (resp., auxiliary-tape cells)  used by $M$ on all computation paths of $M$ on $x$.
When $q\in Q_{acc}\cup Q_{rej}$, the sextuple $(q,l_1,l_2,w,l_3,u)$ is called a \emph{halting configuration}.
After making a transition of the form $(p,v,b,d_1,d_2,d_3)\in \delta(q,\sigma,z,a)$, a configuration $P= (q,l_1,l_2,w,l_3,u)$ is changed to $P'= (p,l_1+d_1,l_2+d_2,w',l_3+d_3,u')$, where $\sigma$ is $x_{(l_1)}$,  $z$ is $w_{(l_2)}$, $a$ is $u_{(l_3)}$, $u'$ is obtained from $u$ by substituting $b$ for $a$, and $w'$ is obtained from $w$ by substituting $v$ for $z$.
We express this transition from $P$ to $P'$ as $P\vdash_{M} P'$.
A \emph{computation path} of $M$ on input $x$ is a series of configurations such that (i) the series begins with the initial configuration of $M$ and ends with a halting configuration (whenever the series is finite) and (ii) any two consecutive configurations $(P,P')$ in the series satisfies $P\vdash_{M} P'$.
An \emph{accepting computation path} (resp., a \emph{rejecting computation path}) refers to a computation path that terminates with an accepting configuration (resp., a rejecting configuration).

A \emph{computation} is defined as a \emph{computation graph}, which is a directed acyclic graph whose vertices are labeled by  configurations and directed edges represent single transitions between  configurations.
A path of such a computation graph from the initial configuration of $M$ on $x$ to a leaf, which is a halting configuration, thus represents a  computation path of $M$ on $x$. Given a number $t\in\nat^{+}$, we write $P\vdash_{M,t} R$ when there exists a computation path of length $t$ from $P$ to $R$ made by $M$ on $x$. We often drop ``$M$'' in $P\vdash_{M}R$ and $P\vdash_{M,t}R$, and write $P\vdash R$ and $P\vdash_{t}R$, respectively, when $M$ is clear from the context.
A \emph{storage history} describes a sequence of changes of contents of the storage tape together with locations of its tape head according to time.
To understand the movement of a storage-tape head of an aux-$k$-sna (with $k=4$) along its computation path, we include a figure depicting such a movement as Fig.~\ref{fig:storage-tape-head-move}.

It is convenient to take an entire series of storage stationary moves made between two non-stationary moves and merge it into a single non-stationary move that precede the series.
A \emph{section time} refers to the number of steps taken by $M$ by merging a entire series of consecutive stationary moves of the storage-tape head into a single non-stationary move that precedes the series.

An input string $x$ is \emph{accepted} by $M$ if $M$ starts with $x$ written on the input tape with the endmarkers and the computation graph of $M$ on $x$ contains an accepting computation path. In contrast, $x$ is  \emph{rejected} by $M$ if all finite computation paths of $M$'s computation graph on $x$ are rejecting.
A language $L$ over alphabet $\Sigma$ is \emph{recognized} by $M$ if, for all $x\in L$, $M$ accepts $x$ and, for all $x\in\Sigma^*-L$, $M$ rejects $x$. In this case, $L$ is also expressed as $L(M)$. Let $\mathrm{Time}_{M}(x)$ denote the runtime of $M$ on input $x$.

\begin{definition}
Let $s(n)$ and $t(n)$ denote any two bounding functions (i.e., functions from $\nat$ to $\nat$). The notation $\auxsna\mathrm{depth,\!space,\!time}(k,s(n),t(n))$ denotes the collection of all languages $L(M)$ recognized by aux-$k$-sna's $M$ using   $O(s(n))$ tape cells on the auxiliary tape within $O(t(n))$ steps, provided that aux-$k$-sna's should satisfy the depth-$k$ requirement.
Remember that no space restriction is imposed on the storage tape.
\end{definition}

In this work, we are particularly interested in the case of $s(n)=O(\log{n})$ and $t(n)=n^{O(1)}$.

%%%
\subsection{Frozen Blank Sensitivity and Weak Depth Susceptibility}\label{sec:FBS}

Finally, we introduce another computational model of \emph{(one-way) nondeterministic depth-$k$ storage automaton} (or $k$-sna, for short) by removing the device of auxiliary (work) tapes and by limiting the input-tape head to move in only one direction (i.e., from left to right) with additional stationary moves.
We further demand that either the input-tape head or the storage-tape head (or both) must move at any step.

To see the complexity of $k$-sna's, as a quick example, let us consider the following decision problem, called the \emph{k-XOR Problem}. For two binary strings $x$ and $y$ of length $n$, the notation $x\oplus y$ indicates the bitwise XOR  of $x$ and $y$. For example, if $x=0110$ and $y=1100$, then $x\oplus y$ equals $1010$.

\ms
{\bf $k$-XOR Problem:}
\begin{itemize}\vs{-1}
  \setlength{\topsep}{-2mm}%
  \setlength{\itemsep}{1mm}%
  \setlength{\parskip}{0cm}%

\item[] Instance: $e\# a_1\# a_2\# \cdots \# a_n$ with $n\geq1$ and $e,a_1,a_2,\ldots,a_n\in\{0,1\}^n$.

\item[] Question: Is there a series $(i_1,i_2,\ldots,i_k)$ of indices with $1\leq i_1<i_2<\cdots <i_k\leq n$ such that $e$ equals $(a_{(i_1)})^R\oplus a_{(i_2)} \oplus (a_{(i_3)})^R \oplus a_{(i_4)} \oplus \cdots \oplus (a_{(i_k)})^R$ if $k$ is odd, and $(a_{(i_1)})^R\oplus a_{(i_2)} \oplus (a_{(i_3)})^R \oplus a_{(i_4)} \oplus \cdots \oplus a_{(i_k)}$ otherwise?
\end{itemize}

This problem can be solvable by an appropriate $(k+1)$-sna, which first reads and writes $eB$, chooses $i_1,i_2,\ldots,i_k$ nondeterministically, reads $a_{(i_1)},a_{(i_2)},\ldots,a_{(i_k)}$ in order, writes down $b_1=e\oplus (a_{(i_1)})^R$, $b_2=b_1\oplus a_{(i_2)}$, $b_3=b_2\oplus (a_{(i_3)})^R$, $\ldots$ over the same block of $n$ tape cells by moving a tape head back and forth (using $\rhd$ in cell $0$ and $B$ in cell $n+1$) until $b_{k-1}$, and finally checks if the resulting string $b_{k}$ matches $0^n$.

Yamakami \cite{Yam22} studied two essentially different restrictions of $k$-sda's on the behavior of their input-tape heads and the usage of their inner states while scanning specific storage-tape symbols. In the \emph{depth-susceptible model}, (1) the input-tape head cannot move while scanning any symbol in $\Gamma^{(k-1)}\cup \Gamma^{(k)}$ and (2) an inner state cannot change while reading the frozen blank symbol $B$. In the \emph{depth-immune model}, on the contrary, the input-tape head is allowed to move around with no restriction and an inner state freely changes.
The same paper asserted in \cite[Lemma 2.1]{Yam21} that depth-susceptible $2$-sda's precisely characterize $\dcfl$ (i.e., the class of all deterministic content-free languages).

%%%

In this work, nevertheless, we wish to explore the potential of another notion for a precise characterization of $\dcfl$ as well as $\cfl$.
We say that a $k$-sna is \emph{(frozen) blank sensitive} if its input-tape head must make a stationary move while reading the frozen blank symbol $B$ on a storage tape on any computation path. Notice that any depth-susceptible $k$-sna is (frozen) blank sensitive. For clarity, the notation $\mathrm{2SDA}_{FBS}$ is used for the collection of all languages recognized by (frozen) blank-sensitive 2-sda's.

\begin{proposition}\label{new-result-FBS}
$2\mathrm{SDA}_{FBS} = \dcfl$.
\end{proposition}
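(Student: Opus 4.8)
The plan is to prove the two inclusions $\dcfl\subseteq 2\mathrm{SDA}_{FBS}$ and $2\mathrm{SDA}_{FBS}\subseteq\dcfl$ separately; the first is short and the second carries the weight. For $\dcfl\subseteq 2\mathrm{SDA}_{FBS}$ I would invoke the equality $\dcfl=2\mathrm{SDA}$ of \cite{Yam21}, whose proof (its Lemma~2.1) in fact yields depth-susceptible $2$-sda's: since a depth-susceptible machine keeps its input-tape head stationary whenever it scans a symbol of $\Gamma^{(k-1)}\cup\Gamma^{(k)}$ and $B\in\Gamma^{(k)}$, every such machine is in particular (frozen) blank sensitive, so that $\dcfl$, being recognized by depth-susceptible $2$-sda's, is contained in $2\mathrm{SDA}_{FBS}$. (The same inclusion is also accessible through a direct simulation of a deterministic pushdown automaton by a blank-sensitive $2$-sda whose depth-$2$ storage tape plays the role of the stack.)

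For $2\mathrm{SDA}_{FBS}\subseteq\dcfl$, fix a blank-sensitive $2$-sda $M$. The first step is a structural lemma describing the storage-tape history of $M$ read at section times. By the depth-$k$ requirement with $k=2$, every cell passes through depths $0\to1\to2$, a turn sends a cell straight to depth $2$, and a depth-$2$ cell holds $B$ forever and is inert; analysing the alternating left/right sweeps of the storage head, one shows that at each section time the storage tape decomposes, from cell $0$ rightward, into finitely many blocks of \emph{live} cells (depth $\le1$, carrying genuine content) separated by \emph{frozen} blocks (maximal runs of depth-$2$ cells, all equal to $B$), followed by a tail of still-fresh cells, with the storage head sitting at the right boundary of the live block it is currently building (on a rightward sweep) or consuming (on a leftward sweep); crucially, \emph{all non-fresh cells lying to the right of the storage head constitute one single frozen block}. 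Since each cell is meaningfully visited at most twice --- once outward and once on the way back --- this is a genuine last-in-first-out discipline, and that is what makes a single pushdown stack sufficient.

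I would then simulate $M$ by a deterministic pushdown automaton $N$. It scans the same one-way input as $M$ and advances its input head exactly when $M$ does; its finite control holds $M$'s inner state, the storage symbol currently scanned, and a bounded description of the unique frozen block immediately to the right of the storage head; its stack records the storage tape to the \emph{left} of the head, each live cell stored individually and each frozen block \emph{collapsed} into a bounded marker. The key observation is that a frozen block is a uniform run of $B$'s, so a storage-head traversal of it is just a two-way finite-state walk on $B$'s whose outcome --- exit on the left in some state, exit on the right in some state, or loop forever --- depends only on $M$'s entry state and on the block length taken modulo an effectively bounded period (plus the exact length when it is small); hence the bounded marker lets $N$ carry out an entire such traversal in one $\varepsilon$-move, and by blank sensitivity the traversal does not advance $M$'s input either, so $N$ stays deterministic and input-synchronized. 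Moves over live cells are the usual single push/pop (incrementing depths, turning a cell into a frozen marker once it is left and reaches depth $2$); a looping traversal is recognized from the bounded data and routed to a rejecting sink; and the residual $\varepsilon$-loops of $N$ are removed by the standard determinization of DPDAs. One checks $L(N)=L(M)$, hence $L(M)\in\dcfl$.

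The bulk of the work, and the main obstacle, is precisely this bookkeeping for the two-way storage head. When $M$'s head moves rightward, $N$ must know whether the cell it enters is a frozen $B$ or a genuinely fresh $\Box$ --- the two trigger different moves of $M$ --- and, on re-entering a frozen block, how the finite-state walk over it resolves, whereas a stack exposes only its top; reconciling this with the structural lemma (non-fresh cells to the right of the head forming a single frozen block), collapsing each frozen block into a bounded marker that faithfully summarizes traversals in \emph{both} directions, and verifying that $N$ remains deterministic and copes with infinite computations, is where the real care is needed. It is also worth stressing that the blank-sensitivity hypothesis is used exactly to align ``$M$ scans $B$'' with ``$N$ makes an $\varepsilon$-move'': without it the input synchronization, and hence the determinism of $N$, would break down.
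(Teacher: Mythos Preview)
Your proposal is correct and rests on the same key insight as the paper's proof: since the input-tape head is frozen while the storage head scans $B$, a traversal of a frozen block is a purely finite-state walk whose outcome can be summarized by bounded data (your ``marker''; the paper's transfer matrix $T_w$), and this summary is what lets a single pushdown suffice. The organization, however, differs. The paper first factors this summarization out as a separate machine-level conversion (Lemma~\ref{blank-sensitive-to-weak}): it rewrites the blank-sensitive $2$-sda into a \emph{weakly depth-susceptible} one that \emph{never turns and never changes state} inside a frozen block, storing the matrices $T_w$ on the fringe cells themselves. After this normalization the $2$-sda passes straight through every frozen block, so the DPDA simulation becomes a two-line argument (``write over $\Box$'' $\mapsto$ push; ``overwrite a depth-$1$ symbol by $B$'' $\mapsto$ pop), with no need to track frozen blocks at all. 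You instead build the same summaries directly into the DPDA's stack alphabet and finite control and simulate the original machine in one pass.

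What each buys: the paper's route is more modular---Lemma~\ref{blank-sensitive-to-weak} is stated for arbitrary $k$ and is reused later (e.g., for aux-$k$-sna's in Lemma~\ref{relax-FBS})---and it makes the final DPDA construction essentially trivial. Your route is self-contained for $k=2$ and avoids the somewhat heavy case analysis of Lemma~\ref{blank-sensitive-to-weak}, at the price of fiddlier bookkeeping in the DPDA (merging markers when adjacent cells freeze, and handling entry at the just-frozen boundary cell rather than at a fixed endpoint). Both are sound; the paper's presentation is cleaner because the normalization absorbs exactly the ``main obstacle'' you identify.
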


To prove this proposition, we require the key lemma, Lemma \ref{blank-sensitive-to-weak}.
As a slightly weaker notion of  depth-susceptibility, here we introduce another notion of weak depth-susceptible.
A $k$-sna $M$ is \emph{weakly depth-susceptible} if, while a storage-tape head is scanning symbols  $a$ in $\Gamma^{(k)}$, an input-tape head cannot move and, if $a=B$, then an inner state cannot be altered along each computation path of $M$.
As an immediate consequence of weak depth-susceptibility, once a $k$-sna enters a blank region, either it makes a turn or it continues moving deterministically in one direction (as long as an associated computation path terminates).

\begin{lemma}\label{blank-sensitive-to-weak}
Let $k\geq2$. For any (frozen) blank-sensitive $k$-sna $M$, there exists another $k$-sda $N$ such that $L(M)=L(N)$, $N$ is weakly depth-susceptible, and $N$ makes no frozen blank turn. The same statement holds also for a $k$-sna.
\end{lemma}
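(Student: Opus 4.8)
The plan is to show that a (frozen) blank-sensitive $k$-sna $M$ can be converted into a machine $N$ that is weakly depth-susceptible and makes no frozen blank turn, while preserving the recognized language. I would work through three incremental modifications.

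First, I would handle the \emph{no frozen blank turn condition}. The key observation is that whenever $M$'s storage-tape head sits on a frozen blank cell $B$ (a cell in $\Gamma^{(k)}$ with $B$ on it) and performs a left or right turn, blank sensitivity forces the input-tape head to be stationary at that step, and also at every step during which the storage head remains on that $B$-cell afterwards (since the head is reading $B$ until it leaves). I would replace each such ``frozen blank turn'' by a direct reversal that never actually enters the $B$ cell: when $M$ is about to step onto a $B$-cell and then turn back, $N$ instead stays on the neighboring cell and simulates the turn there. Because nothing is ever written on a $B$-cell (depth-$k$ requirement: $\Gamma^{(k)}$ symbols are not rewritable) and the input head does not advance during this excursion, no information is lost; the only subtlety is bookkeeping the inner state, which I would absorb into $N$'s finite control. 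This yields a $k$-sna $N_1$ with $L(N_1)=L(M)$ and no frozen blank turn, still (frozen) blank sensitive.

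Second, I would upgrade $N_1$ to be \emph{weakly depth-susceptible}. Weak depth-susceptibility asks: while the storage head reads a symbol in $\Gamma^{(k)}$, the input head cannot move, and if the symbol is $B$ the inner state cannot change. The $B$-part for a stationary input head is already guaranteed by blank sensitivity; the $B$-part for the inner state, together with the restriction for the symbol $\rhd$ (the other element of $\Gamma^{(k)}=\{\rhd,B\}$), is what must be forced. For the inner state not changing on $B$: once $N_1$ enters a maximal run of $B$-cells, blank sensitivity pins the input head, so the only data being consumed is the storage content, which on these cells is a fixed string of $B$'s; thus the computation restricted to this $B$-region is essentially a walk on a fixed string by a finite-state device. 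I would argue that any such walk that ever leads to acceptance/rejection can be replaced by a walk in which the inner state is constant within the $B$-region: standard crossing-sequence / surface-configuration reasoning shows that the behavior of $N_1$ on a homogeneous $B$-block is determined by a finite function from (entry state, entry side) to (exit state, exit side, or halt), which $N$ can precompute and apply in a single compressed move through the block while keeping one designated ``blank-traversal'' state. The handling of the left endmarker symbol $\rhd$ of $\Gamma^{(k)}$ is similar but easier since it occupies only cell $0$. This gives a weakly depth-susceptible $k$-sna $N$ with $L(N)=L(N_1)=L(M)$, still without frozen blank turns (the compression does not reintroduce them).

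The main obstacle I expect is the second step, specifically making the inner state constant on $B$-regions without changing the accepted language: one must be careful that $N$ still correctly detects \emph{which} cell it exits the $B$-block from (left versus right end), and that halting inside the $B$-block is handled, since $M$ might accept or reject while buried in frozen blanks. This requires the finite precomputed transition table on $B$-blocks to record not just the exit behavior but also whether the block walk halts and with what verdict, and $N$ must simulate that in a bounded number of its own steps. The fact that the input head is frozen throughout (by blank sensitivity) is exactly what makes this table finite and the simulation sound. The last sentence of the lemma — that the same statement holds for a $k$-sna — is immediate, since all three constructions above are already carried out in the nondeterministic ($k$-sna) setting and never use determinism; for the $k$-sda case one only notes that determinism is preserved by each transformation, as the replacements are deterministic rewrites of deterministic transition rules. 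Finally I would remark that $N$ inherits the polynomial time / logarithmic work-space bounds of $M$ when those are present, since each transformation introduces only an $O(1)$ multiplicative overhead per section time.
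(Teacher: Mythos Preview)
Your intuition is right—the input head being pinned on $B$ is what makes the exit behavior of a blank block a finite object—but the plan has a real gap in step~2, and step~1 as written does not work. In step~1 you only treat the case where $M$ steps onto a single $B$-cell and immediately turns; but $M$ may enter a blank region, walk several cells deep, and turn there, and your $N_1$ on the neighboring cell has no way to predict the exit state without already having the table you only build in step~2. More importantly, in step~2 you say $N$ ``precomputes'' the function (entry state, entry side) $\mapsto$ (exit state, exit side) and applies it before traversing the block in a single fixed state. The problem is that this function depends on the \emph{length} of the particular blank region, different regions generally have different tables, and regions grow dynamically as cells become frozen. You cannot determine the correct table from inside the region (that would require state changes on $B$, which you are trying to forbid), and you cannot carry it permanently in the finite control (it would be overwritten once $N$ leaves and visits another region). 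You need to say \emph{where} this table lives so that $N$ can read it on the fringe cell \emph{before} entering.

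The paper's proof fills exactly this gap: it enriches the storage alphabet so that each non-blank fringe cell carries a symbol of the form $[T_u,a,T_v]$, where $T_u,T_v$ are $\{0,1\}$-matrices encoding the exit behavior of the adjacent blank regions, and it carries the current table in the inner state as $(T_w,q,d)$ while crossing a blank stretch. The bulk of the argument is then a case analysis showing how to update these annotations consistently whenever a cell becomes frozen (so $T_u$ becomes $T_{uB}$, two regions merge into $T_{uBv}$, etc.) and how to use $D_\delta(T_u\!\mid\! q,d\!\mid\! T_v)$ at a fringe to jump directly to the correct exit pair without any state change or turn inside the $B$-region. Once this bookkeeping is in place, ``no frozen blank turn'' and ``state constant on $B$'' fall out simultaneously (so your two-step decomposition is unnecessary). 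Your proposal would become a proof if you added this fringe-storage mechanism and the update rules; without it, the ``precompute and apply'' step is not implementable.
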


Meanwhile, we postpone the proof of Lemma \ref{blank-sensitive-to-weak} and focus on  the proof of Proposition \ref{new-result-FBS} with the use of the lemma.
We assume the reader's familiarity with \emph{one-way deterministic pushdown automata} (or 1dpda's).

\begin{proofof}{Proposition \ref{new-result-FBS}}
To show that $\dcfl\subseteq 2\mathrm{SDA}_{FBS}$, we wish to simulate every 1dpda $M=(Q,\Sigma,\{\rhd,\lhd\},\Gamma, \delta,q_0,\bot, Q_{acc},Q_{rej})$ by an appropriately chosen (frozen) blank-sensitive 2-sda $N$. Let $\sigma\in\check{\Sigma}_{\varepsilon}$  ($=\Sigma\cup\{\rhd,\lhd,\varepsilon\}$) and $a\in\Gamma$.
If $M$ replaces $a$ in a stack by a string $\alpha$ while reading $\sigma$, then $N$ first moves its storage-tape head leftward to the first encountered non-$B$ tape cell (which contains $a$), changes $a$ to $B$, moves the tape head rightward to the first encountered $\Box$, writes $\alpha$ over $\Box$s, and makes the tape head stay still. In contrast, if $M$ pops $a$ from the stack while reading $\sigma$, then $N$ moves its storage-tape head leftward  until it hits a non-$B$ tape cell (which contains $a$), changes the content $a$ to $B$, and stays at the current tape cell. This makes $N$ simulate $M$ correctly.

To show that $2\mathrm{SDA}_{FBS}\subseteq \dcfl$, we wish to claim that a depth-2 storage tape is nothing more than a stack.  Let $N=(Q,\Sigma,\{\rhd,\lhd\},\{\Gamma^{(0)},\Gamma^{(1)},\Gamma^{(2)}\},   \delta,q_0, Q_{acc},Q_{rej})$ be any (frozen) blank-sensitive 2-sda. By Lemma \ref{blank-sensitive-to-weak}, it is possible to further assume that $N$ is weakly depth-susceptible and makes no frozen blank turn. Let $a\in\Gamma$ ($=\bigcup_{e\in[0,2]_{\integer}}\Gamma^{(e)}$). We construct an appropriate  1dpda $M$, which can simulate $N$ as follows.
If $N$ writes  a symbol $b$ over $\Box$, then $M$ pushes $b$ to its stack. If $N$ reads $a\in\Gamma^{(1)}$ and changes it to $B$, then $M$ pops $a$ from the stack. While $N$'s storage-tape head stays in a blank region, $N$ does nothing. This simulation is possible because of the weak depth-susceptibility. By the construction of $M$,
it correctly simulates $N$.
\end{proofof}

As a nondeterministic analogue of the above lemma, we immediately obtain the following corollary. By replacing 2-sda's with 2-sna's, we obtain $\mathrm{2SNA}$ (resp., $\mathrm{2SNA}_{FBS}$) from $\mathrm{2SDA}$ (resp., $\mathrm{2SDA}_{FBS}$).

For the storage tape, we use the term \emph{$w$-region} to indicate the tape area where a string $w$ is written as long as the exact location of $w$ is clear from the context. A \emph{left fringe} (resp., a \emph{right fringe}) of the $w$-region refers to the tape cell located on the left side (resp., the right side) of the tape cell holding the leftmost symbol (resp., the rightmost symbol) of $w$. A \emph{blank region} means the consecutive tape cells containing the frozen blank symbol $B$. We implicitly assume that both fringes of a blank region cannot be frozen blank.

\begin{corollary}
$2\mathrm{SNA}_{FBS} = \cfl$.
\end{corollary}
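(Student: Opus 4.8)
The plan is to transcribe the proof of Proposition~\ref{new-result-FBS}, replacing the deterministic pushdown automata by one-way nondeterministic pushdown automata (or 1npda's) and the $2$-sda's by $2$-sna's, and invoking the ``$k$-sna'' clause of Lemma~\ref{blank-sensitive-to-weak} in place of its deterministic clause. The two inclusions $\cfl\subseteq 2\mathrm{SNA}_{FBS}$ and $2\mathrm{SNA}_{FBS}\subseteq\cfl$ are treated separately, and in each case the only new ingredient beyond the deterministic argument is that every nondeterministic branch of one model is sent to a nondeterministic branch of the other.

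For $\cfl\subseteq 2\mathrm{SNA}_{FBS}$, given a 1npda $M=(Q,\Sigma,\{\rhd,\lhd\},\Gamma,\delta,q_0,\bot,Q_{acc},Q_{rej})$ I would build a $2$-sna $N$ that represents the current stack content of $M$ as the block of $\Gamma^{(1)}$-symbols lying between $\rhd$ and the first $\Box$ on its storage tape, with frozen blanks to its left. A transition in which $M$ nondeterministically replaces the stack top $a$ by a string $\alpha$ while reading $\sigma$ is simulated by $N$ as follows: $N$ moves its storage-tape head leftward through the current blank region to the first non-$B$ cell (which holds $a$), reads $a$ there while rewriting it to $B$, then moves rightward to the first $\Box$ and writes $\alpha$ over consecutive $\Box$'s, finally staying put on the new top; popping is the case $\alpha=\varepsilon$. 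The input-tape head advances (when $\sigma\neq\varepsilon$) only on the single step that reads the old top $a\in\Gamma^{(1)}$, so it is stationary whenever the storage head scans $B$; hence $N$ is (frozen) blank sensitive, and it moreover makes no frozen blank turn. Every storage cell is rewritten at most twice (once from $\Box$, once to $B$), so the depth-$2$ and stationary requirements hold, the branching of $M$ is inherited directly, and $L(N)=L(M)$.

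For $2\mathrm{SNA}_{FBS}\subseteq\cfl$, starting from a (frozen) blank-sensitive $2$-sna $N$, Lemma~\ref{blank-sensitive-to-weak} supplies a weakly depth-susceptible $2$-sna $N'$ with $L(N')=L(N)$ that makes no frozen blank turn. I would then build a 1npda $M'$ that keeps on its stack exactly the left-to-right sequence of $\Gamma^{(1)}$-symbols currently on $N'$'s storage tape: when $N'$ writes a symbol $b$ over $\Box$, $M'$ pushes $b$; when $N'$ reads $a\in\Gamma^{(1)}$ and rewrites it to $B$, $M'$ pops $a$; while $N'$'s storage head merely travels through a blank region, weak depth-susceptibility keeps the input head fixed and the inner state unchanged, and the absence of frozen blank turns makes this travel proceed deterministically in one direction, so $M'$ can ignore it entirely (a replace/push of $N'$ is thus simulated by a pop followed by one or more pushes, whose net effect on $M'$'s stack is correct). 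The nondeterminism of $N'$ becomes that of $M'$, so $L(M')=L(N')=L(N)$ and $L(N)\in\cfl$. I expect no genuine obstacle here: the one delicate point is precisely this last one — that navigation through a region of frozen blanks must be ``silent'' so that the $\Gamma^{(1)}$-block truly behaves as a LIFO stack — and it is resolved exactly as for Proposition~\ref{new-result-FBS} by the weak depth-susceptibility and the ``no frozen blank turn'' property guaranteed by Lemma~\ref{blank-sensitive-to-weak}; everything else is a routine copy of the deterministic proof with nondeterministic moves left untouched.
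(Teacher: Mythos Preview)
Your proposal is correct and follows exactly the route the paper intends: the paper does not spell out a separate proof but states that the corollary is the ``nondeterministic analogue'' of Proposition~\ref{new-result-FBS}, obtained by replacing 2-sda's with 2-sna's and 1dpda's with 1npda's and invoking the $k$-sna clause of Lemma~\ref{blank-sensitive-to-weak}, which is precisely what you do. The only cosmetic slip is the phrase ``with frozen blanks to its left'' --- in the simulation the $B$'s may be interspersed among the $\Gamma^{(1)}$-symbols rather than sitting entirely to one side --- but this does not affect the argument.
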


%%%%%%
%%%%%%

\begin{figure}[t]
\centering
\includegraphics*[height=3.8cm]{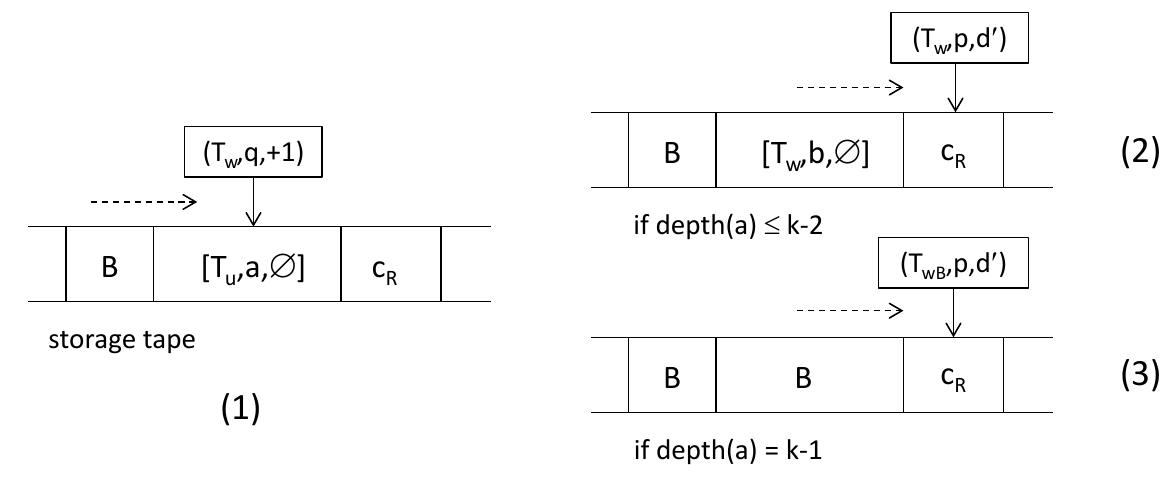}
\caption{A construction of $N$. An input tape is omitted in this illustration. The current tape symbol of a $k$-sda $N$ is $[T_u,a,\setempty]$ and $N$'s current inner state is $(T_w,q,+1)$ in (1). Two possible transitions of $N$ at the next step, depending on the value of $depth(a)$, are depicted in (2) and (3).}\label{fig:blank-sensitive}
\end{figure}

%%%%%
%%%%%

%%%%%%
%%%%%%

\begin{figure}[t]
\centering
\includegraphics*[height=3.8cm]{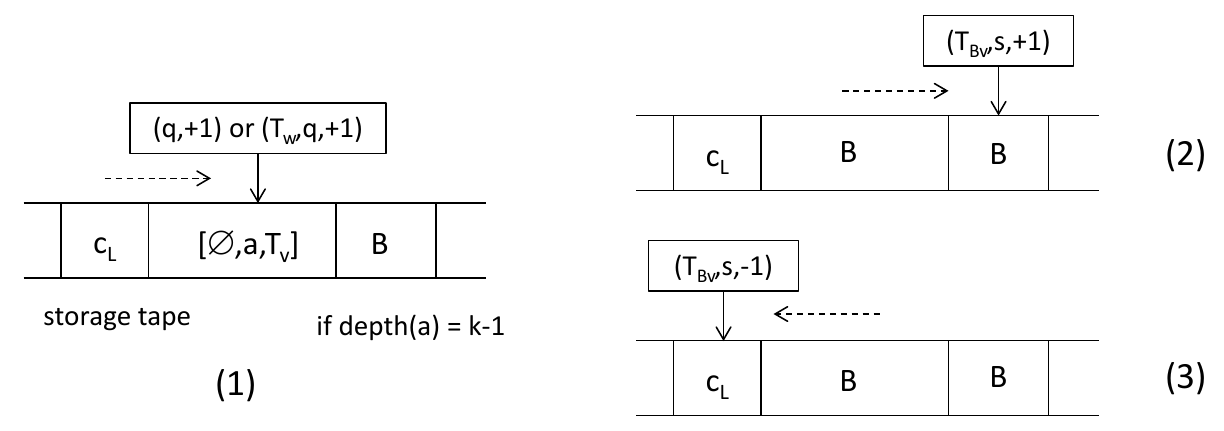}
\caption{Another construction of $N$. The $k$-sda $N$ is scanning $[\setempty,a,T_v]$ with inner state $(q,+1)$ or $(T_w,q,+1)$ in (1). Two possible moves of $N$ are depicted in (2) and (3).}\label{fig:blank-sensitive-second}
\end{figure}

%%%%%
%%%%%

Now, we return to the proof of Lemma \ref{blank-sensitive-to-weak}.
In the proof, a succinct notation $\Gamma^{(\leq t)}$ is meant for the union $\bigcup_{e\in[0,t]_{\integer}}\Gamma^{(e)}$, where $t\in[k]$.

\begin{proofof}{Lemma \ref{blank-sensitive-to-weak}}
This proof is inspired by the proof of the so-called \emph{blank skipping property} of $k$-limited automata \cite{Yam19}. Let $M=(Q,\Sigma,\{\rhd,\lhd\},\{\Gamma^{(e)}\}_{e\in[0,k]_{\integer}}, \delta,q_0,Q_{acc},Q_{rej})$ be any $k$-sda, which is (frozen) blank sensitive. Let $D_1=\{0,+1\}$ and $D_2=\{-1,0,+1\}$ be the set of tape head directions. Note that $\delta$ maps $(Q-Q_{halt})\times \check{\Sigma} \times \Gamma$ to $Q\times \Gamma\times D_1\times D_2$.
To simplify our proof, we introduce another ``equivalent'' $k$-sda $\tilde{M}$ by setting $\tilde{Q}=Q\times D_2$, $\tilde{q}_0=(q_0,+1)$, $\tilde{Q}_{acc}=Q_{acc}\times D_2$, $\tilde{Q}_{rej}=Q_{rej}\times D_2$, and $\tilde{\delta}((q,d),\sigma,a) = \{((p,d''),b,d',d'')\mid (p,b,d',d'')\in\delta(q,\sigma,a)\}$. It follows that $L(\tilde{M})$ coincides with $L(M)$.

Since $M$'s input-tape head stays still on reading $B$, it is possible to simulate the entire behavior of $M$'s storage-tape head with inner states while staying in a blank region. Thus, we can precompute the head direction and the inner state of $M$ when it leaves this blank region. We formalize this argument as follows.

For the construction of the desired $k$-sda $N$, we first introduce three notations $S_{\delta}$, $T_{w}$, and $D_{\delta}$. Given a string $w\in\Gamma^*$ and two pairs $(q,d),(q',d')\in Q\times D_2$, we set $S_{\delta}(q,d\mmid w\mmid q',d') =1$ if,  on a certain computation path of $M$, $M$ enters the $w$-region in direction $d$ with inner state $q$, stays in this region, and eventually leaves the region in direction $d'$ with inner state $q'$, provided that the input-tape head does not move during this process. Otherwise, we set $S_{\delta}(q,d\mmid w\mmid q',d')=0$.
Letting $\ell=|Q\times D_2|$, we next define an $\ell\times\ell$ $\{0,1\}$-matrix $T_w=(t_{ij})_{i,j\in Q\times D_2}$ by setting $t_{ij}=S_{\delta}(i\mmid w\mmid j)$ for any two indices $i,j\in Q\times D_2$.
Over all possible strings $w$, the total number of such matrices $T_w$ is upper-bounded by $2^{\ell^2}$. Finally, for strings $u,v\in\Gamma^*$, $D_{\delta}(T_u\mmid q,d\mmid T_v)$ denotes the set of all pairs $(p,d')\in Q\times D_2$ such that if $d=+1$ (resp., $d=-1$) then
$M$ overwrites $B$ and enters the $v$-region (resp., the $u$-region) on a storage tape in direction $+1$ (resp., $-1$), stays in the $uv$-region, and eventually leaves this $uv$-region in direction $d'$ with inner state $p$.
On contrast, we define $D_{\delta}(\setempty\mmid q,+1\mmid T_v)$ (resp., $D_{\delta}(T_u\mmid q,-1\mmid \setempty)$) to be the set of all pairs $(p,d')$ for which $M$ overwrites a non-$B$ symbol, enters the $v$-region from the left (resp., the $u$-region from the right), stays in this region, and eventually leaves this region in direction $d'$ with inner state $p$.

In what follows, we intend to construct from $\tilde{M}$ the $k$-sda $N$ of the form $(\hat{Q},\Sigma,\{\rhd,\lhd\}, \{\hat{\Gamma}^{(e)}\}_{e\in[0,k]_{\integer}}, \hat{\delta}, \hat{q}_0, \hat{Q}_{acc}, \hat{Q}_{rej})$.
In our construction of $N$, we simulate $\tilde{M}$'s moves almost step by step. During the simulation, we translate a symbol $a\in\Gamma^{(\leq k-1)}$ of $\tilde{M}$ into one of $N$'s symbols $a$, $[T_u,a,\setempty]$, $[\setempty,a,T_v]$, and $[T_u,a,T_v]$. The last three symbols respectively indicate the left fringe and the right fringe of a certain blank region,
and two opposite fringes of two blank regions. Their depths are the same as $a$'s. We also translate an inner state $(q,d)$ of $\tilde{M}$ into either $(q,d)$ or $(T_w,q,d)$. The last of them indicates that the tape cell visited at the previous step is frozen blank. When a tape head enters a blank region $D$, since it eventually leaves $D$, we can check if it moves out of $D$ from the left end or the right end of $D$ without reading any other portion of a given input string.

Let us describe the behavior of $N$ on each storage-tape cell.
In the definition of $N$, for simplicity, we allow a storage-tape head to modify the content of a tape cell even though it makes an intrinsic stationary move. We remark that this relaxation of the stationary requirement of $N$ does not affect the computational power of $N$ because we can remember the tape cell content (using inner states) and overwrite the last modified tape symbol when the tape head moves away to a neighboring tape cell.

\ms

(I)
Consider the case where $N$'s storage-tape head moves from the left to the storage-tape cell currently scanned by $M$ on $x$. Let $c_{L}$ (resp., $c_R$) denote the content of the storage-tape cell located on the left side (resp., the right side) of this tape cell. Assume that $N$ is reading symbol $\sigma$ on its input tape and symbol $\tilde{a}$ (which corresponds to $a\in\Gamma$) on the storage tape in inner state $\tilde{q}$.
We further assume that, on reading $\sigma$, $\tilde{M}$ changes inner state $(q,d)$ to $(p,d')$ with $d=+1$ and $d'\neq0$, and it writes $b$ over $a$.

(1)  Let us consider the case where $\tilde{a}$ equals $a\in\Gamma$. This indicates $c_R\neq B$.

[1] We first focus on the case of $c_L\neq B$, which implies that $\tilde{q}$ is of the form $(q,d)$.

(a) Consider the case where either ($depth(a)\leq k-2$ and $d'=+1$) or ($depth(a)\leq k-3$ and $d'=-1$). Notice that $d'=-1$ means $\tilde{M}$'s making a left turn.  We then force $N$ to write $b$ over $a$ and enter inner state $(p,d')$ from $\tilde{q}$.

(b) Consider the case where either ($k-2\leq depth(a)\leq k-1$ and $d'=-1$) or ($depth(a)=k-1$ and $d'=+1$). In this case, $N$ writes $B$ over $a$ and moves its storage-tape head in direction $d'$ by entering inner state $(T_B,p,d')$ from $\tilde{q}$.

[2] Contrary to [1], we examine the case of $c_L=B$. In this case, $\tilde{q}$ must have the form $(T_w,q,d)$.

(a) In the case where $depth(a)\leq k-2$ and $d'=+1$, since $c_L=B$, $N$ writes $[T_w,b,\setempty]$ over $a$ and enters $(p,d')$ from $\tilde{q}$ by moving in direction $d'$.
In the case where $depth(a)\leq k-3$ and $d'=-1$, in sharp contrast, $N$ writes $[T_w,b,\setempty]$ over $a$. Since $c_L=B$, $\tilde{M}$ enters a blank region. As noted before, it is possible to simulate the entire behavior of $\tilde{M}$ while it stays in this region.
Now, let $(s,d'')$ denote the outcome of $D_{\delta}(T_w\mmid p,-1\mmid \setempty)$. If $d''=+1$, then $\tilde{M}$ eventually leaves the blank region and comes back to the current cell. Thus, we force
$N$ to enter $(T_w,s,+1)$ and to make a storage stationary move.
On the contrary, when $d''=-1$, since $\tilde{M}$ eventually moves off the left end of the blank region, we force $N$ to change its inner state $\tilde{q}$ to $(T_w,s,-1)$ and to make a left turn. The construction (III) ensures that $N$ eventually moves off the left end of the blank region as $\tilde{M}$ does.

(b) Consider the case where $k-2\leq depth(a)\leq k-1$ and $d'=-1$. The tape symbol must be changed to $B$. We write the outcome of  $D_{\delta}(T_w\mmid p,-1\mmid T_B)$ as $(s,d'')$. If $d''=+1$, then $N$ enters inner state $(T_{wB},s,+1)$ by moving to the right. If $d''=-1$, on the contrary, $N$ enters $(T_{wB},s,-1)$ by making a left turn.
Next, let us consider the case where $depth(a)=k-1$ and $d'=+1$. In this case, $N$ writes $B$ over $a$ and move to the right by entering inner state $(T_{wB},p,+1)$.

%%%

(2) Here, we examine the case where $\tilde{a}$ has the form $[\setempty,a,T_v]$. This indicates that the current tape cell is the left fringe of a certain blank region, and thus $c_R=B$ follows.

[1] Assume that $c_L\neq B$. This implies that $\tilde{q}$ has the form $(q,d)$.

(a) We first consider the case where $depth(a)\leq k-2$ and $d'=+1$. Since $\tilde{a}=[\setempty,a,T_v]$, $N$ writes $[\setempty,b,T_v]$ over $\tilde{a}$. Let $(s,d'')$ denote $D_{\delta}(\setempty\mmid p,+1\mmid T_v)$. We then force $N$ to enter $(T_v,s,d'')$ from $\tilde{q}$. For the tape head direction, if $d''=+1$, then $N$ moves to the right; however, if $d''=-1$, then $N$ makes a storage stationary move.
In the next case where $depth(a)\leq k-3$ and $d'=-1$, nevertheless, we force $N$ to write $[\setempty,b,T_v]$ over $[\setempty,a,T_v]$ and to make a left turn by changing $\tilde{q}$ to $(p,-1)$.

(b) We next consider the case where $k-2\leq depth(a)\leq k-1$ and $d'=-1$. The machine $N$ writes $B$ over $\tilde{a}$, enters $(T_{Bv},p,-1)$ from $\tilde{q}$, and makes a left turn. The tape cell holding $c_L$ becomes a new left fringe.
In the case where $depth(a)=k-1$ and $d'=+1$, $N$ also writes $B$ over $\tilde{a}$ and changes $\tilde{q}$ to $(T_{wB},p,+1)$.

[2] Opposite to [1], we assume that $c_L=B$, implying that $\tilde{q}$ has the form $(T_w,q,d)$.

(a) Consider the case where $depth(a)\leq k-2$ and $d'=+1$.
We make $N$ write $[T_w,b,T_v]$ over $[\setempty,a,T_v]$. Letting $(s,d'')$ denote $D_{\delta}(\setempty\mmid p,+1\mmid T_v)$, we force $N$ to enter inner state $(T_v,s,d'')$ from $\tilde{q}$.
If $d''=+1$, then $N$ moves to the right. In contrast, if $d''=-1$, then $N$ makes a storage stationary move because the current tape cell is a left fringe of a blank region and $N$'s storage-tape head returns to this fringe after entering this blank region.
In the case where $depth(a)\leq k-3$ and $d'=-1$, $N$ writes $[T_w,b,T_v]$ over $\tilde{a}$. If $(s,d'')$ expresses the outcome of $D_{\delta}(T_w\mmid p,-1\mmid \setempty)$, then $N$ enters inner state $(T_w,s,d'')$. Moreover, when  $d''=+1$, $N$ makes a storage stationary move and, when $d''=-1$, by contrast, $N$ makes a left turn.

(b) Consider the case where either ($k-2\leq depth(a)\leq k-1$ and $d'=-1$) or  ($depth(a)=k-1$ and $d'=+1$). The machine $N$ writes $B$ over $[\setempty,a,T_v]$. We force $N$ to enter $(T_{wBv},s,d'')$ and to move in direction $d''$, where $(s,d'')$ is $D_{\delta}(T_w\mmid p,+1\mmid T_v)$.

%%%

(3) Next, we consider the case where $\tilde{a}$ has the form $[T_u,a,\setempty]$. It thus follows that $c_L$ equals $B$, $\tilde{q}$ has the form $(T_w,q,d)$, and $c_R$ cannot be $B$.

(a) Consider the case where $depth(a)\leq k-2$ and $d'=+1$. We force $N$ to  change $\tilde{q}$ to $(p,d')$ and to write $[T_w,b,\setempty]$ over $[T_u,a,\setempty]$ by updating $T_u$ to $T_w$. This update is necessary because a blank region containing $c_L$ might have been modified earlier, and thus $T_w$ may not correctly ``represent'' the blank region.
Next, we consider the case where $depth(a)\leq k-3$ and $d'=-1$. We force $N$ to write $[T_w,b,\setempty]$ over $[T_u,a,\setempty]$. We then calculate $(s,d'')= D_{\delta}(T_w\mmid p,-1\mmid \setempty)$. If $d''=+1$, then $N$ enters inner state $(T_w,s,+1)$ and makes a storage stationary move. By contrast, if $d''=-1$, then $N$ enters $(T_w,s,-1)$ and makes a left turn.

(b) In the case where $k-2\leq depth(a)\leq k-1$ and $d'=-1$, $N$ writes $B$ over $[T_u,a,\setempty]$. Now, we set $(s,d'')$ to be $D_{\delta}(T_w\mmid p,-1\mmid T_B)$. We then change $N$'s inner state from $\tilde{q}$ to $(T_{wB},s,d'')$ and make $N$ move in direction $d''$.
In contrast, when $depth(a)=k-1$ and $d'=+1$, since $N$ does not make a turn, $N$ writes $B$ over $\tilde{a}$ and changes its inner state $\tilde{q}$ to $(T_{wB},p,+1)$.

%%%

(4) Next, we examine the case where $\tilde{a}$ has the form $[T_u,a,T_v]$. From this, we conclude that $c_L=c_R=B$, and thus $\tilde{q}$ is of the form $(T_w,q,d)$.

(a) Consider the case where $depth(a)\leq k-2$ and $d'=+1$.  In this case, $N$ writes $[T_w,b,T_v]$ over $[T_u,a,T_v]$ by updating $T_u$ to $T_w$.
Letting $(s,d'')$ be $D_{\delta}(\setempty\mmid p,+1\mmid T_v)$, if $d''=+1$, then $N$ changes $\tilde{q}$ to $(T_v,s,+1)$ and move to the right. On the contrary, when $d''=-1$, $N$ enters $(T_v,s,-1)$ from $\tilde{q}$ and makes a storage stationary move.

(b) Consider the case where $depth(a)\leq k-3$ and $d'=-1$.  We then force $N$ to write $[T_w,b,T_v]$ over $[T_u,a,T_v]$ by updating $T_u$ to $T_w$. Assume that $D_{\delta}(T_w\mmid p,-1\mmid \setempty)=(s,d'')$. If $d''=+1$, then $N$ changes $\tilde{q}$ to $(T_w,s,+1)$ and makes a storage stationary move. When $d''=-1$, however, $N$ enters $(T_w,s,-1)$ and makes a left turn.

(c) Next, we consider the case where either ($k-2\leq depth(a)\leq k-1$ and $d'=-1$) or ($depth(a)=k-1$ and $d'=+1$). Assume that $D_{\delta}(T_w\mmid p,d'\mmid T_v)=(s,d'')$. In this case, $N$ writes $B$ over $[T_u,a,T_v]$ and moves in direction $d''\in\{+1,-1\}$ by entering inner state $(T_{wBv},s,d'')$.

\s

(II) In the second case where $\tilde{M}$ changes inner state $(q,d)$ to $(p,d')$ with $d=-1$ and $d'\neq0$, and it writes $b$ over $a$ on the storage tape, we apply a similar  transformation explained in (I) but taking an opposite head direction.

(III) In the case where $d\neq0$ and $N$ is currently scanning the frozen blank symbol $B$, $N$ keeps the current inner state as well as the storage-tape head direction $d$ taken at the previous step. Since $d\neq0$, this movement of $N$ prohibits $N$ from making the frozen blank turn and from changing its inner state while reading $B$.

(IV) Let us consider the case where $\tilde{M}$ makes a stationary move on reading non-$B$ symbol $a$. In this case, $N$ also makes a stationary move. If $\tilde{q}$ is of the form $(q,d)$ or $(T_w,q,d)$, $N$ changes $\tilde{q}$ to $(p,0)$ or $(T_w,p,0)$, respectively.

\ms

The above construction of $N$ shows that $N$ makes no frozen blank turn and that $N$ is weak depth-susceptible. Thus, the proof is completed.
\end{proofof}

%%%%%

Lemma \ref{blank-sensitive-to-weak} deals with $k$-sda's.
Here, we intend to expand the scope of the lemma to aux-$k$-sna's as follows. An aux-$k$-sna is said to be \emph{(frozen) blank-sensitive} if, while reading $B$ on a storage tape, its input-tape and auxiliary-tape heads make only stationary moves (but inner states may change).
Moreover, an aux-$k$-sna is \emph{weakly depth-susceptible} if its input-tape and auxiliary-tape heads are stationary while scanning any symbol $a$ in $\Gamma^{(k)}$ and, whenever $a=B$, the aux-$k$-sna cannot
alter its inner state.

\begin{lemma}\label{relax-FBS}
Given an aux-$k$-sna $M$ running in polynomial time using log space, if $M$ is (frozen) blank-sensitive, then there exists another aux-$k$-sna $N$ running in polynomial time using log space such that $L(M)=L(N)$, $N$ is weakly depth-susceptible, and $N$ makes no frozen blank turn.
\end{lemma}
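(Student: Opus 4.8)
The plan is to lift the construction of $N$ from the proof of Lemma~\ref{blank-sensitive-to-weak} to the auxiliary setting, essentially verbatim, while checking that the extra bookkeeping needed for the auxiliary tape does not interfere with the core mechanism. The key observation is that the entire argument of Lemma~\ref{blank-sensitive-to-weak} operates on the \emph{storage tape} and on the inner-state component only; the (frozen) blank-sensitivity hypothesis is precisely the guarantee that, while the storage-tape head sits inside a blank region, the input-tape head makes only stationary moves, so the computation inside a blank region depends only on the inner state and the storage-head direction (plus, now, the auxiliary-tape configuration). Since the defining condition of (frozen) blank-sensitivity for aux-$k$-sna's additionally requires the \emph{auxiliary-tape} head to be stationary on $B$, the same depth-first-search summary computed by the matrices $T_w$ and the relation $D_\delta$ remains valid: no portion of the auxiliary-tape content changes and no auxiliary-head motion occurs while the storage head traverses a blank region.

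First I would carry over the $\tilde{M}$-normalization (augmenting inner states by the storage-head direction in $D_2$) and the definitions of $S_\delta$, $T_w$, and $D_\delta$ with no change, since these are purely storage-side objects. Second, I would define $N$ exactly as in (I)--(IV) of the proof of Lemma~\ref{blank-sensitive-to-weak}, now as an aux-$k$-sna: each transition of $N$ additionally carries the auxiliary-tape action (symbol written, head direction) of the simulated step of $\tilde{M}$. The only point requiring care is that, when $N$ ``precomputes'' the exit state $(s,d'')$ of a blank-region traversal via $D_\delta$, the simulated computation of $\tilde{M}$ inside the blank region performs \emph{no} auxiliary-tape operations (by (frozen) blank-sensitivity), so $N$ can legitimately collapse that traversal into a single storage move while leaving the auxiliary head and content untouched; this is exactly the behavior already imposed on the input head in the original construction. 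Thus $N$ makes no frozen blank turn and is weakly depth-susceptible by the same case analysis.

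Third, I would check the resource bounds. The state set of $N$ grows by a factor of $2^{\Theta(\ell^2)}$ where $\ell = |Q\times D_2|$, a constant, so $N$ still has a finite state set; the storage alphabet of $N$ is enlarged by the bounded tagging $[T_u,a,\setempty]$, $[\setempty,a,T_v]$, $[T_u,a,T_v]$, again a constant blow-up; and the auxiliary tape is used identically, so the $O(\log n)$ space bound is preserved. For the time bound, collapsing blank-region traversals into single storage moves can only decrease the number of steps, and between two consecutive non-collapsed steps $N$ does a bounded amount of extra work; hence $N$ still runs in polynomial time. Finally, $L(N)=L(M)$ follows because $N$ faithfully simulates $\tilde{M}$ step by step outside blank regions and reproduces $\tilde{M}$'s entry/exit behavior at blank regions via $D_\delta$, and $L(\tilde{M})=L(M)$.

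The main obstacle I anticipate is a subtle one rather than a deep one: verifying that the precomputation embodied in $D_\delta$ is still \emph{well-defined} in the auxiliary model, i.e. that the outcome of a blank-region traversal really is a function of $(T_w, p, d')$ (and $T_v$) alone, not of the suppressed auxiliary-tape contents. This hinges on the fact that blank-sensitivity forbids the auxiliary head from moving on $B$, so the auxiliary configuration is literally frozen during the traversal and cannot feed back into the storage-side dynamics; once this is stated cleanly, the rest is the routine but lengthy transcription of the four-case construction from the proof of Lemma~\ref{blank-sensitive-to-weak}, which need not be repeated in full.
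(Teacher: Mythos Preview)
Your proposal is correct and follows essentially the same approach as the paper: the paper's proof is a two-sentence remark that the transformation of Lemma~\ref{blank-sensitive-to-weak} carries over verbatim because (frozen) blank-sensitivity keeps the auxiliary-tape head stationary on $B$, and your write-up is simply a careful unpacking of that remark with the resource-bound checks made explicit. One small sharpening: the auxiliary symbol under the head does feed into the transitions inside a blank region, so $S_\delta$, $T_w$, and $D_\delta$ should be regarded as parameterized by the fixed pair $(\sigma,\tau)$ rather than being literally independent of the auxiliary contents---but since that pair is frozen by blank-sensitivity and visible to $N$ at the fringe, this changes nothing in your argument.
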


\begin{yproof}
In the proof of Lemma \ref{blank-sensitive-to-weak}, we transform a (frozen) blank-sensitive $k$-sna into the desired $k$-sna stated in the lemma.
Since the auxiliary-tape of weakly depth-susceptible aux-$k$-sna $M$ does not move while scanning $B$, we can apply to $M$ the same transformation of the $k$-sna.
\end{yproof}

%%%
\subsection{Complexity of LOG$k$SNA}\label{sec:LOGkSNA-aux-sna}

We wish to formally introduce the notation of $\mathrm{LOG}k\mathrm{SNA}$ for any integer $k\geq1$.
For this purpose, we first refer to \emph{log-space, polynomial-time computable functions}, which are functions computed by deterministic Turing machines (or DTMs) equipped with read-only input tapes, rewritable work tapes, and write-once\footnote{An output tape is \emph{write once} if its tape head never moves to the left and, whenever it writes a non-empty symbol, it must move to the adjacent blank cell.} output tapes running in polynomial time using only $O(\log{n})$ work tape space. For convenience, we use the notation $\fl$ to denote the collection of all such functions.
Given two languages $L_1$ and $L_2$, $L_1$ is said to be \emph{$\dl$-m-reducible to} $L_2$, denoted $L_1\Lmreduces L_2$, if there exists a function $f$ (called a \emph{reduction function}) in $\fl$ satisfying that, for any input $x$, $x\in L_1$ is equivalent to $f(x)\in L_2$.

\begin{definition}
Given a family $\CC$ of languages, the notation $\mathrm{LOG}(\CC)$ indicates the collection of all languages $L$ such that there are languages $K$ in $\CC$  for which $L$ is $\dl$-m-reducible to $K$. We often abbreviate $\mathrm{LOG}(\CC)$ as $\mathrm{LOG}\CC$.
\end{definition}

In the deterministic case, Yamakami \cite{Yam21} studied $\mathrm{LOG}k\mathrm{SDA}$. Here, we are particularly interested in  $\mathrm{LOG}k\mathrm{SNA}$.
It was shown in \cite{Yam21} that $\logdcfl\subseteq \logksda{k}\subseteq \logksda{(k+1)} \subseteq \mathrm{SC}^{k}$ for any integer $k\in\nat^{+}$.
Let $\mathrm{NSC}^k$ denote the nondeterministic version of $\mathrm{SC}^{k}$.

\begin{proposition}\label{LOGkSNA-NSC}
For any $k\in\nat^{+}$, $\logcfl\subseteq \logksna{k} \subseteq \logksna{(k+1)} \subseteq \mathrm{NSC}^k$.
\end{proposition}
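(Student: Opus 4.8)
The plan is to establish the chain $\logcfl\subseteq \logksna{k} \subseteq \logksna{(k+1)} \subseteq \mathrm{NSC}^k$ one inclusion at a time, following closely the template of the deterministic case in \cite{Yam21} but keeping track of the nondeterministic branching. First I would observe that the middle inclusion $\logksna{k}\subseteq \logksna{(k+1)}$ is essentially immediate: any $k$-sna is in particular a $(k+1)$-sna, since the depth-$k$ requirement only forces freezing after $k$ visits and a machine that never performs the $(k+1)$st rewrite trivially satisfies the depth-$(k+1)$ requirement as well (one must check that the boundary conditions $\Gamma^{(k)}=\{{\rhd},B\}$ versus $\Gamma^{(k+1)}=\{{\rhd},B\}$ cause no friction, which they do not after renaming depths). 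Hence $k\mathrm{SNA}\subseteq (k+1)\mathrm{SNA}$, and since $\mathrm{LOG}(\cdot)$ is monotone under set inclusion, the middle containment follows.

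For the leftmost inclusion $\logcfl\subseteq \logksna{k}$, since $\logcfl=\mathrm{LOG}(\cfl)$ and $\mathrm{LOG}(\cdot)$ is monotone, it suffices to show $\cfl\subseteq k\mathrm{SNA}$ for every $k\geq1$. For $k\geq2$ this is Corollary to Proposition \ref{new-result-FBS}, namely $\mathrm{2SNA}_{FBS}=\cfl$, combined with the trivial inclusion $\mathrm{2SNA}_{FBS}\subseteq 2\mathrm{SNA}\subseteq k\mathrm{SNA}$ via the depth-monotonicity just discussed; for $k=1$ a direct simulation of a one-way nondeterministic pushdown automaton by a $1$-sna (which can only read each storage cell once, i.e.\ behaves like a pushdown store with push-and-never-revisit) handles the base case, or one simply notes $\logcfl$ is already closed under $\dl$-m-reductions so the reduction-closure makes the $k=1$ subtlety disappear. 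The cleanest route is: $\cfl=\mathrm{2SNA}_{FBS}\subseteq k\mathrm{SNA}$, therefore $\logcfl\subseteq \logksna{k}$.

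The substantive inclusion is the rightmost one, $\logksna{k}\subseteq \mathrm{NSC}^k$. Here I would mirror the deterministic argument from \cite{Yam21} showing $\logksda{k}\subseteq \mathrm{SC}^k$, adding nondeterminism throughout. The key steps: (i) since $\mathrm{NSC}^k$ is closed under $\dl$-m-reductions (a routine fact: compose the logspace reduction with the $n^{O(1)}$-time, $O(\log^k n)$-space nondeterministic machine), it is enough to show $k\mathrm{SNA}\subseteq \mathrm{NSC}^k$; (ii) given a $k$-sna $M$, partition its storage tape into cell blocks of $c\ceilings{\log n}$ cells as described in the excerpt, so that the storage head's behaviour on each block is summarized by a ``surface configuration'' of size $O(\log n)$, and the number of distinct storage-head positions (measured in section time) along any computation path is polynomially bounded because each block can be swept only $O(1)$ times before freezing; (iii) simulate $M$ on a polynomial-time nondeterministic Turing machine that maintains only the surface configuration together with the current block contents, recursively resolving the behaviour inside deeper blocks — this recursion has depth $O(k)$ and at each level of recursion one needs $O(\log n)$ space to record a block's content, yielding total space $O(k\log n)=O(\log n)$ for fixed $k$; wait, that would give $\mathrm{NSC}^1$, so in fact the $\log^k$ factor arises from the alternation/recursion structure exactly as in \cite{Yam21}, where the $k$ nested levels of storage rewriting each contribute a $\log n$ factor to the work-space needed to store the ``pending'' block histories — I would follow their bookkeeping verbatim, replacing deterministic transitions by nondeterministic ones and ``the unique computation'' by ``some accepting computation''. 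The main obstacle I anticipate is step (iii): faithfully carrying over the space accounting of \cite{Yam21} so that the nested simulation of the depth-$k$ storage tape uses exactly $O(\log^k n)$ work space while remaining in nondeterministic polynomial time, in particular verifying that nondeterministic guesses internal to a frozen block can be re-synchronized with the outer simulation without an extra space blow-up. Since the frozen-blank structure and the ``no frozen blank turn'' normal form (available here via Lemma \ref{relax-FBS} after the obvious one-way reformulation, or directly by the $k$-sna analogue of Lemma \ref{blank-sensitive-to-weak}) guarantee that once the head enters a frozen region it exits deterministically in a precomputable direction, this re-synchronization is exactly as in the deterministic case, so I expect the argument of \cite{Yam21} to transfer with only cosmetic changes.
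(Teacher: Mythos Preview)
The paper does not actually prove this proposition: it merely states it after the sentence ``By the nondeterminization of $k\mathrm{SDA}$, we obtain the following statement,'' implicitly deferring to the deterministic argument $\logdcfl\subseteq \logksda{k}\subseteq \logksda{(k+1)}\subseteq\mathrm{SC}^k$ established in \cite{Yam21}. Your plan---adapt \cite{Yam21} step by step, replacing deterministic transitions by nondeterministic ones---is therefore exactly the route the paper intends, and for $k\geq2$ your outline is sound.

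Two local wobbles are worth flagging. First, your handling of $k=1$ does not work: a $1$-sna has $\Gamma^{(1)}=\{\rhd,B\}$, so after a single visit every storage cell is frozen blank and the device cannot simulate a pushdown store; and ``$\logcfl$ is closed under $\dl$-m-reductions'' says nothing about $\logcfl\subseteq\logksna{1}$. (The proposition as stated for $k=1$ would yield $\logcfl\subseteq\mathrm{NSC}^1$, which is not known; the paper's own introduction restricts the analogous deterministic claim to $k\geq2$, so this edge case is a looseness in the statement rather than in your proof.) Second, in step~(iii) your first space count $O(k\log n)$ is the wrong mechanism---the $\log^k n$ bound in \cite{Yam21} comes from maintaining, for each of the at most $k$ pending depth levels, a record whose size itself grows polylogarithmically with the level below, not from $k$ independent $O(\log n)$ slots. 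You catch this yourself and defer to \cite{Yam21}'s bookkeeping, which is the right call; the nondeterministic guesses do not interfere with that accounting because, as you note, the frozen-region behaviour is oblivious to input once the normal form of Lemma~\ref{blank-sensitive-to-weak} is in place.
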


%%%%%%%%
%%%%%%%%
\section{Useful Properties of aux-$k$-sna's}\label{sec:useful}

As a preparation to establishing a close relationship between aux-$k$-sna's and cascading circuits in Section \ref{sec:circuit-family}, here we discuss useful properties associated with computations of aux-$k$-sna's. In what follows, let
$M = (Q,\Sigma,\{{\rhd},{\lhd}\}, \{\Gamma^{(e)}\}_{e\in[0,k]_{\integer}}, \Theta,\delta, q_0,Q_{acc},Q_{rej})$ denote an arbitrary aux-$k$-sna together with its runtime bound $t(n)$ and work space bound $s(n)$ and consider its computation graph generated on an  input $x$ of length $n$. We further assume that $M$ makes no frozen blank turn on all inputs.
For simplicity, we also assume, without loss of generality, that (1) $M$ must read all input symbols, including the endmarkers (at least once), and (2) before halting, the storage-tape head moves back to $\rhd$.

In our intended simulations between aux-$k$-sna's and cascading circuits explained in Section \ref{sec:circuit-family}, unfortunately, we cannot utilize the notion of ``configurations'' because their (encoded) size is too large to store in $O(\log{n})$ space-bounded auxiliary (work) tapes.
For this very reason, we introduce another tool that partially describes  ``configurations'', which are conventionally called \emph{surface configurations}. Since a computation path is a series of configurations, it also induces a series of corresponding surface configurations.
Formally, a surface configuration $C$ is of the form  $(q,l_1,l_2,w,l_3,a)$, which is directly induced from a configuration $C'=(q,l_1,l_2,w,l_3,u)$ by taking $a=u_{(l_3)}$. This configuration $C'$ is referred to as an \emph{underlying configuration} of $C$.
We remark that each surface configuration completely ignores the content $u$ of the storage tape except for the $l_3$th tape cell. For later convenience, we write $head_3(C)$ instead of $l_3$.
Different from a configuration, such a surface configuration $C$ is  generally obtained from the information on the surface configuration at the previous step and also the previous content of storage-tape cell $l_3$ because a tape head modifies the tape cell before it moves away in a specific direction at the next step. When this tape cell is already frozen blank, nevertheless, the latter information is not necessary since the tape cell is frozen forever.
As special surface configurations, we consider the \emph{initial surface configuration} $P_0$ and \emph{accepting surface configurations} $P_{acc}$ defined as $P_0=(q_0,0,0,\rhd,0,\rhd\Box^{m_2})$ and $P_{acc}=(q_{acc},0,0,\rhd,0,w)$ for any string $w\in \rhd (\Theta-\{\rhd\})^{m_2}$.

Slightly abusing the terminology, we succinctly say that a surface configuration $C$ is a \emph{left turn} (resp., a \emph{right turn}) if $M$'s storage-tape head moves to the left (resp., right) from the current tape cell specified in $C$, knowing that the storage-tape head has come from the left (resp., the right). This last part can be achieved by remembering the storage-tape head direction using inner states as done in the proof of Lemma \ref{blank-sensitive-to-weak}.

Given two surface configurations $P$ and $R$, assuming their underlying configurations $P'$ and $R'$, respectively, we loosely write $P\vdash R$ and $P\vdash_t R$  if $P'$ and $R'$ satisfy $P'\vdash R'$ and $P'\vdash_t R'$. It is important to remember that, for two given surface configurations $P$ and $R$, the relation ``$P\vdash R$'' heavily depends on the choice of their underlying configurations.

%%%

For two surface configurations $P=(q,l_1,l_2,w,l_3,a)$ and $P'=(p,l'_1,l'_2,w',l'_3,b)$, if $l_3=l'_3$, then $P$ and $P'$ are said to \emph{have the same depth}. We say that $P$ and $P'$ are
\emph{storage consistent} from $P$ to $P'$, denoted $P\leadsto_{\delta} P'$,  if they have the same depth and there exists a quintuple $(q',c,d_1,d_2,d_3)$ satisfying that  $(q',c,b,d_1,d_2,d_3)\in\delta(q,x_{(l_1)},w_{(l_2)},a)$. Recall that, whenever storage-tape cells are accessed by moving the tape head, their tape symbols must be modified unless they are frozen.
The notion of storage consistency therefore implies that, between $P$ and $P'$, there is no intermediate surface configuration that is storage consistent with both $P$ and $P'$ as long as $depth(a)<k$.

%%%%

Our goal is to construct a logspace-uniform family $\CC=\{G_n\}_{n\in\nat}$ of $k$-cascading circuits by
transforming the computation graph into another one using specific types of sextuples of surface configurations.
Each circuit $G_n$ is constructed layer by layer inductively from its root to leaves. For this purpose, we first construct a skeleton of the circuit $G_n$ and then modify it to obtain the desired circuit $G_n$.
Meanwhile, we intend to ignore cascading blocks.

In the construction, we assign ``labels'' to constructed gates so that the roles of these gates are clear from their labels. Those labels are called \emph{(surface) configuration duos} and
\emph{(surface) configuration trios}.
A \emph{(surface) configuration duo} has the form $(P,R,t)$ that satisfies the following condition: $P$ and $R$ are surface configurations, they are storage consistent from $P$ to $R$, $t$ ($\in\nat$) denotes the distance between $P$ and $R$ in a certain computation path from $P$ to $R$,  and either (i) $P=R$ and $t=0$ or (ii) $P\neq R$ and $t$ is even.
In particular, assuming that $M$'s storage-tape head returns to the start cell (i.e., cell $0$) and halts, an \emph{accepting duo} means a configuration duo of the form $(P_0,P_{acc},t)$ for the initial surface configuration $P_0$, an accepting surface configuration $P_{acc}$, and an even integer $t\geq1$.
In a similar way, we define \emph{(surface) configuration trios} of the form $(P,Q,R,s,t)$ by requiring that $(P,Q,s)$ and $(Q,R,t)$ are both (surface) configuration duos.

A configuration duo $(P,R,t)$ is called \emph{realizable} if there exist their underlying configurations $P'$ and $R'$ generated by $M$ on $x$, respectively, and there exists a computation path of $M$ on $x$ from $P'$ to $R'$ in  exactly $t$ steps for which the storage-tape cell number of any other configuration on this computation path is larger than $head_3(P')$ and $head_3(R')$. The configuration duos $(C_3,C_{11},8)$ and $(C_{15},C_{17},2)$ obtained from Fig.~\ref{fig:storage-tape-head-move} are realizable but $(C_9,C_{13},4)$ is not.
This realizability relation forms a tree-like structure among configuration duos. Fig.~\ref{fig:computation-tree} illustrates this tree-like structure depicting the realizability relationships among configuration duos induced by the movement given in Fig.~\ref{fig:storage-tape-head-move}.
Such a tree-like structure whose root is an accepting surface configuration correctly represents an accepting path.
In this tree-like structure, two configuration duos $(P_1,R_1,t_1)$ and $(P_2,R_2,t_2)$ are said to be \emph{linked} if $R_1$ and $R_2$ have the same depth and there exists a computation path from $R_1$ to $P_2$. Similarly, we say that (surface) configuration trio $(P_1,Q_1,R_1,s_1,t_1)$ is \emph{linked} to $(P_2,R_2,t_2)$ if $(Q_1,R_1,t_1)$ is linked to $(P_2,R_2,t_2)$. In contrast, $(P_1,R_1,t_1)$ is \emph{linked} to $(P_2,Q_2,R_2,s_2,t_2)$ if $(P_1,R_1,t_1)$ is linked to $(P_2,Q_2,s_2)$.

\begin{lemma}\label{tree-like-realizable}
$M$ accepts $x$ iff there exists a tree-like structure induced by the realizability relation such that its root is an accepting duo $(P_0,P_{acc},\mathrm{Time}_M(|x|))$ and is realizable.
\end{lemma}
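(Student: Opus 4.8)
The plan is to establish the two directions of the "iff" by relating accepting computation paths of $M$ on $x$ to the tree-like structure built from the realizability relation among configuration duos. The underlying intuition, which I would make precise, is that the movement of the storage-tape head along a computation path is governed by a bracket-matching (depth-first) discipline: whenever the head descends to a storage cell of greater index and later returns to the same index, the intervening moves form a self-contained "excursion," and such excursions nest exactly like the children of a node in a tree. A realizable configuration duo $(P,R,t)$ is precisely the record of one such excursion (a sub-path from $P'$ to $R'$ in $t$ steps staying strictly to the right of $head_3(P')=head_3(R')$), and the linked-duo relation records how consecutive excursions at the same depth are concatenated.

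For the forward direction, I would start from an accepting computation path $\pi$ of $M$ on $x$, which by our simplifying assumptions begins at $P_0$, ends at an accepting surface configuration $P_{acc}$ with the storage-tape head back on cell $0$, and has length $\mathrm{Time}_M(|x|)$. First I would decompose $\pi$ according to the storage-head position: call a maximal sub-path whose interior configurations all have storage-cell index strictly greater than that of its two endpoints a \emph{primitive excursion}; its endpoints have equal depth by the no-frozen-blank-turn assumption together with the depth-$k$ requirement forcing a genuine return to the same cell. Each primitive excursion yields a realizable duo $(P,R,t)$, and because the head must modify each non-frozen cell on visiting it, these duos are storage consistent in the required sense. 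Recursing inside each excursion produces the child duos, and the linked relation stitches together the excursions that occur consecutively at one depth. The root duo is $(P_0,P_{acc},\mathrm{Time}_M(|x|))$, which is an accepting duo, and the tree it generates is exactly the realizability tree. I would verify the parity condition ($t$ even when $P\neq R$) from the observation that a non-trivial excursion leaves and re-enters a cell, contributing an even number of section-time steps.

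For the converse, given a realizable tree-like structure rooted at an accepting duo $(P_0,P_{acc},\mathrm{Time}_M(|x|))$, I would reconstruct an accepting computation path by induction on the tree, from leaves upward: a leaf duo $(P,R,t)$ with $P=R$, $t=0$ is realized by the empty path, and a realizable duo is realized (by definition of realizability) by an actual $t$-step computation path of $M$ on $x$ that stays strictly to the right of $head_3(P)$; splicing the paths realizing the children in the order dictated by the linked relation, and inserting the single descending/ascending transitions that enter and leave each excursion, assembles a legitimate computation path of $M$ on $x$ from $P_0'$ to $P_{acc}'$ of the correct length. Since $P_{acc}$ is an accepting surface configuration, $M$ accepts $x$. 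The main obstacle, and the place where I would spend the most care, is the bookkeeping at the boundary between adjacent excursions at the same depth: I must argue that "storage consistency" plus "linked" exactly captures which transition $M$ is forced (or allowed, in the nondeterministic case) to make when its head sits on a cell of depth $<k$ versus a frozen cell of depth $k$, so that the reconstructed sequence of surface configurations lifts to a genuine sequence of underlying configurations with a consistent full storage-tape content — this is where the earlier remark that "$P\vdash R$ depends on the choice of underlying configurations" has real bite, and where the depth-$k$ requirement and the no-frozen-blank-turn assumption must be invoked to pin the content down uniquely wherever it matters.
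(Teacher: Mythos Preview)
The paper does not give an explicit proof of this lemma; it is stated and left as evident from the definitions, with intuition supplied by Figures~\ref{fig:storage-tape-head-move}--\ref{fig:computation-tree} and with the subsequent Lemmas~\ref{realizable} and~\ref{recursion-property} formalizing the recursive decomposition your forward direction relies on. Your forward argument (decompose an accepting path into nested ``excursions'', each a realizable duo, and recurse) is correct and is exactly what Lemma~\ref{recursion-property} packages.

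For the converse, however, you are doing more work than the paper's definitions require. By the definition of \emph{realizable}, the root duo $(P_0,P_{acc},\mathrm{Time}_M(|x|))$ being realizable already means there exist underlying configurations $P_0'$ and $P_{acc}'$ and an actual computation path of $M$ on $x$ from $P_0'$ to $P_{acc}'$ of the stated length with the storage head never going left of cell~$0$; since $P_0'$ is the initial configuration and $P_{acc}'$ is accepting, $M$ accepts $x$ immediately. No bottom-up splicing of children is needed, and the bookkeeping obstacle you flag (reconciling storage-tape contents across sibling excursions so that their witnessing paths glue to a single consistent path) is thereby sidestepped entirely. That obstacle is genuine if one insists on reconstructing the path from the leaves, and your instinct that it is the delicate point is correct---but the lemma as stated lets you read the path off the root alone.
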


%%%%%%
%%%%%%

\begin{figure}[t]
\centering
\includegraphics*[height=6.6cm]{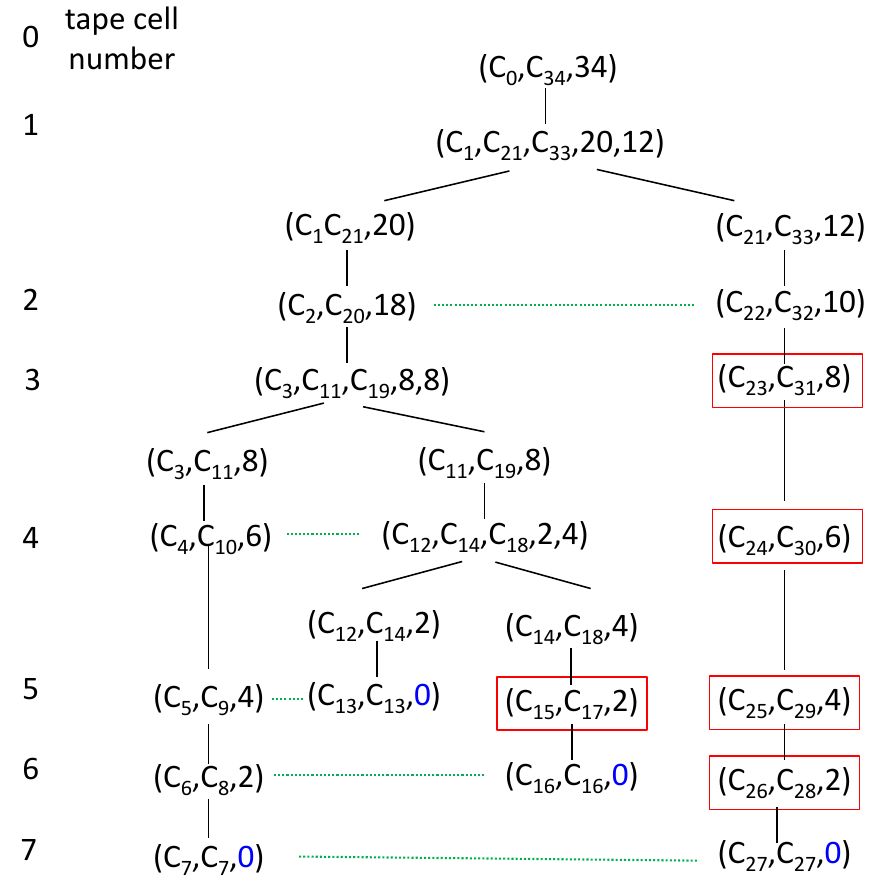}
\caption{A tree-like structure that represents the transitional relationships of all surface configuration duos corresponding to the movement depicted in  Fig.~\ref{fig:storage-tape-head-move}. A configuration duo inside a blue box indicates that its currently-scanned storage-tape cells are all frozen blank. A green dotted line indicates a link between two configuration duos. Here, the node $(C_{12},C_{14},2)$ depends on $(C_{13},C_{13},0)$ and $(C_{12},C_{14},C_{18},2,4)$, which is linked from $(C_{4},C_{10},6)$, and $(C_{16},C_{16},0)$ depends on $(C_{15},C_{17},2)$ and also it is linked from   $(C_6,C_{8},2)$. On the contrary, $(C_{15},C_{17},2)$ depends only on $(C_{14},C_{18},4)$ and $(C_{16},C_{16},0)$ but is not linked from  $(C_{13},C_{13},0)$ because the storage tape cells of $(C_{15},C_{17},2)$ are already frozen blank.}\label{fig:computation-tree}
\end{figure}

%%%%%
%%%%%

A \emph{layered block} of configuration duos is a series $\QQ= \{(P_j,P_{j+1},t_j)\}_{j\in[m]}$ of configuration duos with an odd number $m$ for which $P_i$ and $P_j$ are storage consistent for any pair $i,j\in[m+1]$. %When $m\geq3$, we say that $\QQ$ is \emph{valid} if $P_{j+1}$ is a right turn for all $j\in[m-2]$.
Such a layered block $\QQ$ is said to be \emph{realizable} if there exists a computation path such that it starts at $P_1$, passes $P_2,P_3,\ldots,P_{m}$ one by one in order in exactly $t_1,t_2,\ldots,t_{m}$ steps, respectively, and finally ends at $P_{m+1}$ and that, for each index $j\in[m]$, $(P_j,P_{j+1},t_j)$ is realizable along this computation path.
As a quick example,  the block of configuration duos $\QQ=\{(C_4,C_{10},6), (C_{1},C_{21},20), (C_{21},C_{33},12)\}$ and $\{(C_3,C_{11},8),(C_{11},C_{19},8)\}$ depicted in  Fig.~\ref{fig:computation-tree} are both realizable.

We expand the scope of the binary relation $\vdash$ to layered blocks of configuration duos in the following way. Let $(P,R,t)$ denote a target configuration duo and let $\QQ= \{(P_j,P_{j+1},t_j)\}_{j\in[m]}$ be any layered block.
We write $\QQ \vdash (P,R,t)$ if (i) when $m=1$, it follows that  $t=t_1+s+s'$, $P\vdash_{s} P_1$, and $P'_2\vdash_{s'} R$  for certain numbers $s,s'\in\nat^{+}$ and (ii) when $m\geq3$, it follows that $t=\sum_{j\in[m]}t_j +s+s'$, $P\vdash_{s} P_1$, and  $P_{m+1}\vdash_{s'} R$ for certain numbers $s,s'\in\nat^{+}$.
Moreover,  we write $\vdash (P,R,i)$ if $P=R$, $i=0$, and $P$ is a left turn.

%%%

The following statement is immediate.

\begin{lemma}\label{realizable}
Let $\QQ$ denote any layered block $\{(P_j,P_{j+1},t_{j})\}_{j\in[m]}$. If $\QQ$ is realizable and $\QQ\vdash (P,R,t)$, then $\QQ'= \QQ\cup\{(P,P_1,s),(R_{m+1},R,s')\}$ is also realizable for certain numbers $s,s'\geq1$.
\end{lemma}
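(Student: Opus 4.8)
The plan is to unwind the definition of the relation $\QQ \vdash (P,R,t)$ and then simply stitch together the witnessing computation path of $\QQ$ with the two short computation segments promised by that relation. First I would fix a computation path $\pi$ witnessing that $\QQ = \{(P_j,P_{j+1},t_j)\}_{j\in[m]}$ is realizable: $\pi$ starts at (an underlying configuration of) $P_1$, passes through $P_2,\dots,P_m$ in exactly $t_1,\dots,t_m$ steps, ends at $P_{m+1}$, and on each sub-stretch from $P_j$ to $P_{j+1}$ every intermediate configuration has storage-tape head position strictly larger than $head_3(P_j)=head_3(P_{j+1})$ (all the $P_j$ being storage consistent, they share a common depth). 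Next, from $\QQ\vdash(P,R,t)$ I extract, in the case $m\geq 3$, numbers $s,s'\in\nat^+$ with $P\vdash_s P_1$ and $P_{m+1}\vdash_{s'} R$ and $t=\sum_{j}t_j+s+s'$ (the case $m=1$ is identical after renaming $P'_2$ as $P_{m+1}=P_2$). Concatenating a computation segment realizing $P\vdash_s P_1$, then $\pi$, then a segment realizing $P_{m+1}\vdash_{s'} R$ gives a single computation path $\pi'$ from $P$ to $R$ of length exactly $s+\sum_j t_j+s'=t$.

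The content of the claim is then that $\QQ' = \QQ\cup\{(P,P_1,s),(R_{m+1},R,s')\}$ is realizable, i.e. that these two new duos are realizable along $\pi'$ and that the whole enlarged family still satisfies the storage-consistency bookkeeping required of a layered block (strictly speaking, $\QQ'$ is no longer a layered block in the literal sense, but ``realizable'' for it means: each of its members is realizable along one common computation path, which is exactly $\pi'$). For the duo $(P,P_1,s)$: it is a configuration duo because $P$ and $P_1$ are storage consistent (they have the same depth $head_3(P)=head_3(P_1)$, since the segment $P\vdash_s P_1$ by the structure of the $\vdash$-relation between layered blocks leaves the head at the same cell, and $s$ is even by the duo conventions) and $s$ is the distance along $\pi'$; it is realizable because along the initial stretch of $\pi'$ from $P$ to $P_1$ every intermediate configuration has head position $>head_3(P)=head_3(P_1)$ — this is precisely what it means for $P$ to ``pass control into'' the layered block at depth $head_3(P_1)$, and it is guaranteed by the way $\QQ\vdash(P,R,t)$ was defined (the entry segment dips strictly below in time but strictly above in tape position, by no frozen blank turn and the depth-$k$ monotonicity). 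The symmetric argument handles $(R_{m+1},R,s')$ — here $R_{m+1}$ should be read as $P_{m+1}$, the exit point of the block.

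The main obstacle I expect is not any calculation but a bookkeeping/notation issue: the statement writes $R_{m+1}$ where the layered block $\QQ$ was indexed with $P_j$'s, so I must first reconcile $R_{m+1}$ with $P_{m+1}$ (and, in the $m=1$ case, $P'_2$ with $P_2$), and then carefully argue that the two new short segments genuinely have the ``strictly deeper in between'' property that realizability demands. That property is where the hypotheses ``$M$ makes no frozen blank turn'' and the depth-$k$ requirement do real work: once the storage-tape head descends to depth $head_3(P_1)$ and the block is processed at that depth, the head cannot revisit a shallower cell without making a turn there, and the only turns occur at the boundary configurations $P,P_1,P_{m+1},R$ themselves. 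So the plan is: (i) set up $\pi$; (ii) extract $s,s'$ and build $\pi'$, checking the length adds to $t$; (iii) verify the two new triples are configuration duos; (iv) verify realizability of each new duo along $\pi'$ using the no-frozen-blank-turn and depth-monotonicity properties; and (v) note that all original members of $\QQ$ remain realizable along $\pi'$ since $\pi$ is an uninterrupted sub-path of $\pi'$ at the relevant depth. Step (iv) is the only one requiring genuine care; everything else is routine unwinding.
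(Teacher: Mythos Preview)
Your plan follows exactly the paper's approach: extract the witnessing path for $\QQ$, extract $s,s'$ from $\QQ\vdash(P,R,t)$, concatenate, and declare $\QQ'$ realizable. The paper's own proof is a three-sentence version of your steps (i)--(ii) and then simply asserts ``Hence, $\QQ'$ is realizable'' without any of your steps (iii)--(v); the lemma is flagged as ``immediate.''

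One genuine slip in your step (iii): you assert $head_3(P)=head_3(P_1)$, but this is false. In the tree-like structure (cf.\ the proof of Lemma~\ref{recursion-property} and Fig.~\ref{fig:computation-tree}) the layered block $\QQ$ lives one cell \emph{deeper} than $(P,R,t)$: $head_3(P_1)=head_3(P)+1$, and the step $P\vdash_s P_1$ is typically a single move of the storage-tape head to the right. Consequently $(P,P_1,s)$ and $(P_{m+1},R,s')$ are not configuration duos in the strict sense (storage consistency requires equal depth), and $\QQ'$ is not a layered block. You already half-noticed this (``strictly speaking, $\QQ'$ is no longer a layered block in the literal sense'') and then contradicted yourself two lines later. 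The paper's proof is equally informal on this point and evidently intends ``realizable'' for $\QQ'$ in the loose sense you describe: all members are witnessed along one common computation path, namely your $\pi'$. With that reading, your concatenation argument is already complete after step (ii); the extra verifications in (iii)--(iv), and in particular the appeal to no-frozen-blank-turn, are not needed for the lemma as the paper uses it.
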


\begin{proof}
Let $\QQ=\{(P_j,P_{j+1},t_{j})\}_{j\in[m]}$ and assume that $Q$ is realizable. Thus, there is a computation path from $P_1$ to $P_{m+1}$. Since $\QQ\vdash(P,R,t)$, there are two numbers $t,t'\geq1$ and a computation path  on which $P\vdash_{s} P_1$ and $P_{m+1}\vdash_{s'} R$ with $t=\sum_{j\in[m]}t_j+s+s'$.
We then define $\QQ'= \QQ\cup\{(P,P_1,s),(R_{m+1},R,s')\}$.
Hence, $\QQ'$ is realizable.
\end{proof}

The following basic property holds for aux-$k$-sna's.

\begin{lemma}\label{recursion-property}
Fix a computation path of the aux-$k$-sna $M$ on input $x$. Let $(P,R,t)$ denote any realizable configuration duo of $M$ on $x$ with $t\geq2$ along this computation path. There exists a unique layered block $\QQ= \{(P'_j,P'_{j+1},t_j)\}_{j\in[m]}$ with $m\geq1$ or $\QQ=\setempty$ such that $\QQ$ is realizable along this computation path and $\QQ\vdash (P,R,t)$.
\end{lemma}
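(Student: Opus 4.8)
\textbf{Proof plan for Lemma \ref{recursion-property}.}
The plan is to analyze a fixed realizable configuration duo $(P,R,t)$ with $t \geq 2$ by examining the storage-tape head movement along the fixed computation path witnessing realizability, and to extract from it the canonical ``next level down'' in the tree-like structure of Lemma \ref{tree-like-realizable}. First I would set $d = head_3(P) = head_3(R)$ (these are equal because $P$ and $R$ are storage consistent, hence have the same depth); by definition of realizability, every intermediate configuration on the witnessing computation path from $P'$ to $R'$ sits on a storage-tape cell strictly to the right of cell $d$, i.e.\ with head-position $> d$. Consider the computation path after the first step out of $P$: the storage-tape head leaves cell $d$ (it must, since storage consistency with $depth(a)<k$ forbids an intermediate surface configuration at the same depth, and the $t\geq 2$ case combined with the no-frozen-blank-turn assumption forces a genuine move), arriving at cell $d+1$. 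I would then track the successive \emph{first returns} to cell $d+1$: let $P'_1$ be the surface configuration immediately after leaving $P$, and let $P'_2, P'_3, \ldots$ be the configurations each time the head next comes back to cell $d+1$ having descended into the region $\geq d+2$ and returned. This partitions the segment of the computation path strictly between the first and last visit to cell $d+1$ into maximal excursions, and the endpoints of these excursions are exactly the $P'_j$. By construction each consecutive pair $(P'_j, P'_{j+1})$ is storage consistent (same depth $d+1$), each $(P'_j,P'_{j+1},t_j)$ is realizable along the path (all intermediate cells are $> d+1$), and the parity condition for a configuration duo holds because an excursion into the region to the right and back takes an even number of steps. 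The number $m$ of such pieces is odd precisely because the head enters cell $d+1$, makes some excursions, and must leave cell $d+1$ to return to cell $d$ on the way back to $R$; I would verify the parity bookkeeping using the definition of ``section time'' and the structure in Fig.~\ref{fig:storage-tape-head-move}.

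Next I would check that $\QQ = \{(P'_j,P'_{j+1},t_j)\}_{j\in[m]}$ satisfies $\QQ \vdash (P,R,t)$. Here $P \vdash_s P'_1$ with $s$ the length of the initial descent from cell $d$ to cell $d+1$ (a single section-time step, so $s\in\nat^{+}$), and $P'_{m+1} \vdash_{s'} R$ with $s'$ the length of the final ascent from cell $d+1$ back to cell $d$; and $t = \sum_{j\in[m]} t_j + s + s'$ by additivity of path length. The $m=1$ versus $m\geq 3$ clauses in the definition of $\QQ\vdash(P,R,t)$ are handled uniformly by this decomposition once one notes the $m=1$ clause refers to the endpoints $P_1$ and $P_2$ of the single duo and matches $P'_1, P'_2$ here. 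The degenerate case $\QQ = \setempty$ arises exactly when the head, after leaving cell $d$, goes directly (with no intervening return to cell $d$) to cell $d+1$ and then immediately back, realizing $\vdash(P'_1,P'_1,0)$-type behavior; in that situation I would record $\QQ = \setempty$ and verify the statement degenerately, or fold it into $m=1$ with a trivial duo $(P'_1,P'_1,0)$ depending on which convention the paper uses elsewhere (the statement explicitly allows both $m\geq 1$ and $\QQ=\setempty$, so I would just match cases).

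For \emph{uniqueness}, I would argue that any layered block $\QQ$ with $\QQ$ realizable along this fixed computation path and $\QQ \vdash (P,R,t)$ must coincide with the one just constructed. The key point is that the definition of $\QQ \vdash (P,R,t)$ forces $P \vdash_s P_1$ for some $s\geq 1$ and $P_{m+1}\vdash_{s'} R$ for some $s'\geq 1$, and since $P$ and $R$ have head-depth $d$ while all the $P_j$ share a common depth (they are pairwise storage consistent), realizability along the fixed path pins that common depth to $d+1$: it cannot be $\leq d$ because all intermediate configurations strictly between $P$ and $R$ lie on cells $> d$, and it cannot be $> d+1$ because the path must pass through cell $d+1$ to reach deeper cells, and the first such passage is forced to be $P_1$ (any earlier configuration on cell $d+1$ would contradict either $P\vdash_s P_1$ being a genuine descent or the maximality implicit in realizability of the intermediate pieces). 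Once the depth and the entry/exit points are fixed, the excursion decomposition of the path segment is unique, so the $P_j$ and the $t_j$ are determined, giving $\QQ$ uniquely (as a sequence, hence the layered block is unique).

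\textbf{Main obstacle.} The delicate part will be the uniqueness argument and the precise parity/section-time bookkeeping: one has to be careful that ``realizable along this computation path'' together with $\QQ\vdash(P,R,t)$ genuinely forces the common depth of the $P_j$ to be $d+1$ and forces the $P_j$ to be the first-return points, rather than some coarser or finer decomposition. I expect this to hinge on reading off from the figure/definitions that a realizable duo at cell $d+1$ cannot be subdivided (no intermediate storage-consistent surface configuration when $depth < k$), which is exactly the remark following the definition of storage consistency; invoking that remark carefully, together with the no-frozen-blank-turn hypothesis standing in force throughout Section~\ref{sec:useful}, should close the gap. The existence direction is essentially a direct unwinding of the definitions and Lemma \ref{realizable}, so it is the uniqueness and the $\QQ=\setempty$ boundary case that deserve the most attention.
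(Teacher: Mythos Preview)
Your proposal is correct and follows essentially the same approach as the paper: both decompose the realizable duo $(P,R,t)$ at cell $d$ by listing the successive visits to cell $d+1$, with the paper phrasing the case split via presence or absence of right turns at that cell and you phrasing it via first returns, which yield the same partition. Your treatment of uniqueness and of the boundary case is in fact more careful than the paper's, which dispatches uniqueness in a single line (``the uniqueness of $\QQ$ comes from the fact that a computation path is fixed'') and wraps the construction in an induction on $l_3$ whose hypothesis is never actually invoked.
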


\begin{proof}%{Lemma \ref{recursion-property}}
We fix a given computation path of $M$ and assume that  $P=(q,l_1,l_2,w,l_3,u)$ and $R=(q',l'_1,l'_2,w',l'_3,u')$, where  $head_3(P)=l_3$ and $head_3(R)=l'_3$.
Since $(P,R,t)$ is a configuration duo, $head_3(P)=head_3(R)$ follows.
Assume that $(P,R,t)$ is realizable along this computation path. The uniqueness of $\QQ$ comes from the fact that a computation path is fixed.

We prove by induction on the value of $l_3$ that the lemma's conclusion is true for all $P$ and $R$ with $head_3(P)=head_3(R)=l_3$. Assume by induction hypothesis that the lemma's conclusion is true for the value $l_3$. Let us consider all configurations associated with $l_3-1$. On the computation path  between $P$ and $R$, $M$'s storage-tape head visits cell $l_3-1$ at least twice.

(1) Assume that $P=R$ and $t=0$.  Since $(P,R,t)$ is realizable,  $(P,R,t)$ must be a left turn. We then define $\QQ=\setempty$, and thus we obtain $\vdash(P,R,t)$.

(2)  In what follows, we assume that $P\neq R$ and $t\geq2$. This implies that $(P,R,t)$ is not a turn.

(a) Consider the case where there is no right turn at $l_3-1$ on the computation path between $P$ and $R$. We can choose two surface configurations $P'_1$ and $P'_2$ with $head_3(P'_1)=head_3(P'_2)=l_3-1$ such that $P\vdash P'_1$ and $P'_2\vdash R$ along the given computation path. Let $\QQ=\{(P'_1,P'_2,t-2)\}$. It follows that $\QQ\vdash (P,R,t)$.

(b) Consider the case where there are right turns at $l_3-1$. Choose $P'_1,P'_2,\ldots,P'_{m+1}$ of surface configurations such that $P\vdash P'_1$, $P'_{m+1}\vdash R$, and $P'_j$ is a right turn at $l_3-1$ for any index $j\in[m+1]$. Let $t_j$ denote the number of steps from $P'_j$ to $P'_{j+1}$ along the given computation path. We define $\QQ=\{(P'_j,P'_{j+1},t_j)\}_{j\in[m]}$. This series $\QQ$ is realizable  and we also obtain $\QQ\vdash(P,R,t)$.
\end{proof}

%%%%

For any aux-$k$-sna making no frozen blank turn, the number of turns at each storage-tape cell is upper-bounded on any computation path of the aux-$k$-sna on every input.

\begin{lemma}\label{bound-sna-turn}
For any aux-$k$-sna, if it makes no frozen blank turn, then it makes less than or equal to $\ceilings{k/2}$  turns at each storage-tape cell on any computation path of $M$ on each input.
\end{lemma}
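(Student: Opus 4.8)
The plan is to follow the depth of the symbol written on a single storage-tape cell as the computation proceeds, and to show that each turn at that cell forces this depth to increase by at least $2$ while it is capped at $k$.

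Fix a computation path of the aux-$k$-sna $M$ on an input, fix a storage-tape cell $c$, and — to avoid the degenerate endmarker cell — take $c\geq1$ (this is the only case needed in the later circuit construction, where cell $0$ plays the role of the tree root). Indexing by actual steps of the path, write $\mathit{dp}_s$ for the depth of the symbol then standing on cell $c$. Two facts are immediate from the depth-$k$ requirement and the stationary requirement. (i) $\mathit{dp}_s$ is non-decreasing in $s$: any rewrite of cell $c$ — by a step that moves the head ($\Gamma^{(\min\{e+1,k\})}$), by a turn ($\Gamma^{(\min\{e+2,k\})}$), or by an intrinsic stationary move ($\Gamma^{(\min\{e+1,k\})}$) — produces a symbol whose depth is at least the old depth $e$, while an extrinsic stationary move leaves the symbol untouched. (ii) $\mathit{dp}$ starts at $0$, since before the first visit cell $c$ holds $\Box\in\Gamma^{(0)}$. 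Moreover, because endmarkers never overwrite non-endmarker symbols, for $c\geq1$ cell $c$ never carries $\rhd$, so whenever $\mathit{dp}_s=k$ the symbol on cell $c$ is the frozen blank $B$.

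Now let the turns of the storage-tape head at cell $c$ along this path occur, in temporal order, while reading symbols of depths $e_1,e_2,\ldots,e_t$ (that is, $e_i=\mathit{dp}_{s_i}$ where $s_i$ is the step executing the $i$-th turn, taken after any intervening intrinsic stationary move at cell $c$). Each such turn is not a frozen blank turn and, for $c\geq1$, does not read $\rhd$; hence the read symbol lies in $\Gamma^{(e)}$ with $e<k$, so $e_i\leq k-1$. The $i$-th turn rewrites cell $c$ with a symbol of depth $\min\{e_i+2,k\}$, so by (i) we have $e_{i+1}\geq\min\{e_i+2,k\}$ whenever the $(i+1)$-st turn exists; combined with $e_{i+1}\leq k-1<k$ this yields $\min\{e_i+2,k\}<k$, forcing $\min\{e_i+2,k\}=e_i+2$ and therefore $e_{i+1}\geq e_i+2$. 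Iterating gives $e_t\geq e_1+2(t-1)\geq 2(t-1)$, and combining with $e_t\leq k-1$ gives $2(t-1)\leq k-1$, i.e.\ $t\leq\floors{(k+1)/2}=\ceilings{k/2}$ by integrality of $t$.

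The whole argument is bookkeeping on top of the model's definitions, so the only real care is in matching the three clauses of the depth-$k$/stationary requirements to the depth increments and in checking that the clamping $\min\{\cdot,k\}$ cannot secretly permit one extra turn once depth $k-1$ has been reached; fixing once and for all that $e_i$ is the depth at the instant the turning transition fires (which can only be raised by an intervening intrinsic stationary move, never lowered) keeps this clean. I do not expect any genuine obstacle beyond this care.
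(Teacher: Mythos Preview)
Your argument is correct and follows the same idea as the paper's proof: each turn at a cell raises the stored symbol's depth by (at least) $2$ via the depth-$k$ requirement, the depth is capped at $k$, and once depth $k$ is reached the cell holds $B$ so any further turn would be a forbidden frozen blank turn. The paper states this in a two-sentence sketch, whereas you spell out the monotonicity of $\mathit{dp}_s$, the chain $e_{i+1}\geq e_i+2$, and the arithmetic $t\leq\ceilings{k/2}$ explicitly; your exclusion of cell $0$ (the $\rhd$-cell) is a reasonable caveat that the paper's own proof also silently relies on.
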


\begin{yproof}
Let $M$ denote any aux-$k$-sna, running in polynomial time using log space, which makes no frozen blank turn. Note that, for each storage-tape cell $j$, if $M$ makes more than $k/2$ turns at cell $j$ (ignoring just passing through cell $j$), cell $j$ becomes frozen blank. After cell $j$ becomes frozen blank, $M$'s storage-tape head cannot make any turn at cell $j$ because such a turn becomes a frozen blank turn.
\end{yproof}

We strengthen Lemma \ref{bound-sna-turn} for (frozen) blank-sensitive  aux-$k$-sna's. We say that an aux $k$-sna $M$ is \emph{right-turn restricted} if, for any input $x$ and  on any computation path of $M$ on $x$, $M$ makes right turns no more than once at each storage-tape cell.
Fig.~\ref{fig:storage-tape-head-move} depicts a computation path of a right-turn restricted aux-$4$-sna. The following lemma shows a conversion of a (frozen) blank-sensitive aux-$k$-sna into a weakly depth-susceptible, right-turn restricted aux-$k$-sna.

\begin{lemma}\label{once-right-turn}
For any (frozen) blank-sensitive aux-$k$-sna $M$ running in polynomial time and log space, there always exists an aux-$k$-sna $N$ running in polynomial time and log space satisfying that $L(M)=L(N)$, $N$ is weakly depth-susceptible, and $N$ is right-turn restricted.
\end{lemma}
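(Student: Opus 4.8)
The plan is to start from the output of Lemma \ref{relax-FBS}: given a (frozen) blank-sensitive aux-$k$-sna $M$, we may already assume $M$ is weakly depth-susceptible and makes no frozen blank turn, while still running in polynomial time using logarithmic work space and recognizing $L(M)$. The remaining task is therefore to convert such an $M$ into an equivalent aux-$k$-sna $N$ that is additionally \emph{right-turn restricted}, i.e.\ makes at most one right turn at each storage-tape cell on every computation path, without destroying weak depth-susceptibility, the no-frozen-blank-turn property, or the time/space bounds. The key observation is that, once $M$ is weakly depth-susceptible and makes no frozen blank turn, by Lemma \ref{bound-sna-turn} it makes at most $\ceilings{k/2}$ turns at each cell, so a fortiori at most $\ceilings{k/2}$ right turns; the goal is to compress these into a single right turn.

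The main idea I would use is a ``blank-skipping'' style precomputation applied on the right-hand side, analogous to the construction in the proof of Lemma \ref{blank-sensitive-to-weak}. First I would analyze the structure of a right turn at a cell $c$: by weak depth-susceptibility, when $N$'s storage-tape head is to the right of $c$ scanning symbols that it is about to freeze (the cells of depth near $k$), the input- and auxiliary-tape heads are stationary and, on frozen blanks, even the inner state is frozen; moreover the region to the right of $c$ that will be revisited between two consecutive right turns at $c$ eventually becomes a blank region (because each visit pushes those cells toward depth $k$). So the behavior of $M$ between leaving $c$ to the right and returning to $c$ is, apart from a bounded-depth prefix, a deterministic, input-independent excursion into a soon-to-be-frozen region. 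Just as $T_w$-matrices were used to summarize such excursions, I would have $N$ carry, in its finite control (or in the storage symbols decorating the fringe cells, exactly as the decorated symbols $[T_u,a,T_v]$ were used), enough information to detect in advance, at the moment of the first right turn at $c$, whether the head will ever come back for a second right turn, and if so, with what inner state, and collapse the whole sequence of bounded excursions into one. Concretely, $N$ simulates $M$ but, when $M$ would make a right turn at $c$, $N$ instead launches a simulation of $M$'s subsequent moves in the region to the right of $c$ until either $M$ descends below this region on the left (returning past $c$ — a single right turn suffices), or $M$ comes back to $c$ and would turn right again; in the latter case $N$ continues the internal simulation rather than materializing the intermediate right turn, iterating at most $\ceilings{k/2}$ times (a constant), and only performs one actual right turn when the net effect is known. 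Because at most a constant number of such excursions occur at each cell and each is of bounded depth except on already-frozen-blank cells (which are handled by the $T_w$-summaries as in Lemma \ref{blank-sensitive-to-weak}), the internal simulation costs only $O(1)$ extra inner states and no extra work-tape space; polynomial time is preserved since the total work is proportional to $M$'s original work.

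The steps in order would be: (1) invoke Lemma \ref{relax-FBS} to reduce to a weakly depth-susceptible, no-frozen-blank-turn aux-$k$-sna $M$; (2) fix notation for the ``excursion'' of the storage-tape head strictly to the right of a cell $c$ between two successive visits to $c$, and show via weak depth-susceptibility that the input- and auxiliary-tape heads do not move during such an excursion and that the visited cells monotonically increase in depth; (3) define, analogously to $S_\delta$, $T_w$, and $D_\delta$ from the proof of Lemma \ref{blank-sensitive-to-weak}, a finite summary of the outcome of such an excursion (does it ever return to $c$, and if so in which inner state), decorating the fringe storage symbols to the right of $c$ with the relevant $T_w$-matrix; (4) construct $N$ to simulate $M$ step by step, except that when $M$ performs a right turn at $c$, $N$ uses the precomputed summary to fold together the (at most $\ceilings{k/2}$) consecutive right-turn excursions at $c$ into a single right turn, updating the storage content exactly as $M$ would after all of them; (5) verify that $N$ still makes no frozen blank turn (the folding only removes turns), is still weakly depth-susceptible (the input/aux heads move exactly as $M$'s did, and the summarized portions are input-independent), is right-turn restricted by construction, runs in polynomial time and log space, and satisfies $L(N)=L(M)$.

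The hard part will be step (4): making the folding bookkeeping precise while respecting the depth-$k$ requirement and the stationary requirement of $N$. When $N$ ``skips'' the intermediate right turns at $c$, the depths of the storage cells both at $c$ and to its right must end up exactly as $M$ would have left them after all the real excursions, including the fact that each right turn at $c$ counts as a ``double visit'' and pushes $c$'s depth up by two; I must ensure $N$ replays these depth increments (freezing $c$ with $B$ at the correct moment) even though $N$'s head physically turns there only once. The clean way is to let $N$'s storage symbols record a pending-depth counter (bounded by $k$, hence finite) for cell $c$ and for the fringe cells to the right, so that when $N$ finally makes its one right turn at $c$ it overwrites $c$ with the symbol of the correct final depth (possibly $B$), exactly mirroring the cumulative effect; this is routine but delicate, and it must be checked that it does not accidentally create a frozen blank turn or violate weak depth-susceptibility. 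A secondary subtlety is handling the boundary between ``bounded-depth'' excursions and excursions into already-frozen-blank regions: the latter must go through the $T_w$-summary machinery of Lemma \ref{blank-sensitive-to-weak}, so I would import that apparatus wholesale rather than re-deriving it.
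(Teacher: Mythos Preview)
Your reduction via Lemma \ref{relax-FBS} in step (1) is correct, but step (2) contains a genuine error that breaks the whole plan. You claim that weak depth-susceptibility forces the input- and auxiliary-tape heads to be stationary during an excursion strictly to the right of a cell $c$. This is false: weak depth-susceptibility only freezes those heads while the storage head scans a symbol in $\Gamma^{(k)}$. During an excursion to the right of $c$ the head may reach fresh cells of depth $0$ (or any depth below $k$), where the input and auxiliary heads move freely and the inner state evolves in an input-dependent way. Hence the excursion is \emph{not} summarizable by a finite $T_w$-style matrix, and your folding in step (4) cannot be carried out in the finite control. If instead $N$ physically replays the excursions on its storage tape, its head must physically return to $c$ from the right and leave to the right each time---those are precisely the right turns you are trying to eliminate. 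The ``pending-depth counter'' device addresses depth bookkeeping but not this structural obstruction.

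The paper's proof takes an entirely different and much simpler route: it \emph{blows up} each storage cell $i$ of $M$ into a block of $k$ consecutive cells of $N$, representing $M$'s symbol $\sigma$ at cell $i$ after $j$ right turns by the string $B^{j}\track{\sigma}{1}\sigma^{k-j-1}$ on block $i$. Each right turn of $M$ at cell $i$ is then simulated by $N$ making its right turn at a \emph{different} cell within block $i$ (the leftmost non-$B$ cell, which then becomes $B$), so $N$ automatically makes at most one right turn per physical cell. Left turns are handled by sweeping to the right end of the block and back. No summarization of excursions is needed; the input/aux heads simply track $M$'s. After this construction one re-applies the machinery of Lemma \ref{relax-FBS} to restore weak depth-susceptibility. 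Your approach attempts to compress multiple right turns temporally; the paper instead spreads them spatially, which avoids the input-dependence problem altogether.
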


\begin{yproof}
Take an arbitrary (frozen) blank-sensitive aux-$k$-sna $M$ that runs in polynomial time using log space. By
Lemma \ref{relax-FBS}, it is possible to assume that $M$ is weakly depth-susceptible and makes no frozen blank turn.
Firstly, we wish to construct an aux-$k$-sna $N$ for the language $L(M)$ such that $N$ is (frozen) blank-sensitive and right-turn restricted.

Let $\Gamma=\bigcup_{e\in[0,k]_{\integer}} \Gamma^{(e)}$ denote a storage alphabet of $M$. We make a correspondence between cell $i$ of $M$'s storage tape and a block of $k$ cells indexed from $(i-1)k+1$ to $ik$ of $N$'s storage tape. In what follows, whenever $M$'s storage-tape head moves from the left (resp., the right) to cell $i$, we make $N$'s storage-tape head move from the left (resp., the right) to cell $(i-1)k+1$ (resp., $ik-1$). For convenience, we call this block ``block $i$''.
More precisely, a tape symbol $\sigma$ (except for $\rhd$, $\Box$, and $B$) written at cell $i$ of $M$ is expressed as a string of the form $B^j\track{\sigma}{1}^{j}\sigma^{k-j-1}$ written on cells indexed between $(i-1)k+1$ and $ik$, where $j$ is the number of right turns made by $M$ at cell $i$. For this operation, we need to count
the number of $B$s in each block. Since $k$ is a constant, this counting can be carried out using only $N$'s inner states.
We further add extra steps to $N$ in the following way to guarantee the right-turn restriction.

(a) Consider the case where $M$'s storage-tape head comes from the left to cell $i$. In this case, $N$'s storage-tape head also moves from the left and it is now scanning the leftmost cell of block $i$. If $depth(\sigma)<k-1$ and $M$'s tape head changes the current symbol, say, $\sigma$ to $\tau$ and moves to the right, then we also move $N$'s tape head to the right by rewriting $B^j\track{\sigma}{1}\sigma^{k-j-1}$ to $B^j\track{\tau}{1}\tau^{k-j-1}$. In the case of $depth(\sigma)=k-1$, since $M$ changes $\sigma$ to $B$,  $B^j\track{\sigma}{1}\sigma^{k-j-1}$ is changed to $B^k$.
On the contrary, if $depth(\sigma)<k-2$ and $M$'s tape head makes a left turn by changing $\sigma$ to $\tau$, we move $N$'s tape head rightward to the right end of block $i$ by changing $B^j\track{\sigma}{1}\sigma^{k-j-1}$ to $B^j\track{\hat{\sigma}}{1}\hat{\sigma}^{k-j-2}\sigma$, make a left turn at the end of block $i$, and change  $B^j\track{\hat{\sigma}}{1}\hat{\sigma}^{k-j-2}\sigma$ to $B^j\track{\tau}{1}\tau^{k-j-1}$, where $\hat{\sigma}$ is a new symbol associated with $\sigma$ and its depth is $depth(\sigma)+1$. This is possible because $depth(\tau)=depth(\sigma)+2<k$.
If $k-2\leq depth(\sigma)\leq k-1$, then $M$'s tape head, in contrast, changes $\sigma$ to $B$. Correspondingly, $N$ first changes $B^j\track{\sigma}{1}\sigma^{k-j-1}$ to $B^j\track{\hat{\sigma}}{1}\hat{\sigma}^{k-j-2}\sigma$, makes a left turn at the right end of block $i$, and changes it to $B^k$.

(b) Consider the case where $M$'s storage-tape head comes from the right to cell $i$. Note that $N$'s storage-tape head also moves from the right to the right end of block $i$. If $depth(\sigma)\leq k-2$, $M$'s tape head changes $\sigma$ to $\tau$, and it moves to the left, then we move $N$'s tape head leftward by rewriting $B^j\track{\tau}{1}\tau^{k-j-1}$ over $B^j\track{\sigma}{1}\sigma^{k-j-1}$. In contrast, if $depth(\sigma)\leq k-3$ and $M$ makes a right turn, then we  move $N$'s tape head to the left by changing $\sigma$ to $\tau$ one by one until scanning $\track{\sigma}{1}$. The machine $N$ then makes a right turn and rewrites the rest of the block from $\track{\sigma}{1}\sigma^{k-j-1}$ to $B\track{\tau}{1}\tau^{k-j-2}$.
If $depth(\sigma)\geq k-2$ and $M$'s tape head makes a right turn, then $N$  writes $B^k$. Similarly, if $depth(\sigma)=k-1$ and $M$'s tape head makes a left turn, then $N$ overwrites $B^k$.

(c) If $M$'s tape head makes a storage stationary move at cell $i$, then $N$'s tape head does the same.

By the above construction of $N$, we conclude that $N$ is (frozen) blank sensitive and right-turn restricted. Moreover, it makes no frozen blank turn. As in the proof of Lemma \ref{relax-FBS}, we conduct the simulation of $N$ by another aux-$k$-sna, say, $N'$. It thus follows that $N'$ is weakly depth-susceptible and makes no frozen blank turn. Moreover,  examining this simulation shows that $N'$ is also right-turn restricted.
\end{yproof}

%%%%%%%%
%%%%%%%%
\section{Cascading Circuit Families}\label{sec:circuit-family}

The main goal of this work is to give a circuit characterization of languages in  $\auxsna\mathrm{depth,\!space,\!time}(2k,O(\log{n}),n^{O(1)})_{FBS}$. Toward this goal,  we formally introduce  a key concept of ``cascading'' Boolean circuits in Section \ref{sec:cascading}. The precise statement of the characterization is given in Section \ref{sec:simulation}.

%%%%%
\subsection{Semi-Unbounded Fan-in Cascading Circuits}\label{sec:cascading}

A \emph{Boolean circuit} for inputs of size $n$ is an acyclic directed graph, in which all nodes (called \emph{gates}) are labeled with Boolean operators except for indegree-$0$ and outdegree-$0$ nodes, which are respectively called \emph{input gates} and \emph{output gates}. Edges in the circuit are often called \emph{wires}.
The \emph{size} of a circuit is the total number of gates in it and the \emph{depth} is the length of the longest path from an input gate to an output gate. We mostly consider families $\{C_n\}_{n\in\nat}$ of circuits of size polynomial in $n$ and depth logarithmic in $n$. The \emph{fan-in} (resp., \emph{fan-out}) of a gate is the number of incoming wires to (resp., outgoing wires from) the gate.
In general, a gate is said to have \emph{bounded fan-in} if the fan-in of the gate is upper-bounded by a fixed constant independent of $n$. Otherwise, the gate has \emph{unbounded fan-in}. Similarly, we define \emph{(un)bounded fan-out} gates.

As fundamental gates, we use AND ($\wedge$), OR ($\vee$), and negation ($\overline{x}$).
As customary in circuit complexity theory, AND as well as OR gate takes only one output and the negation is applied only to input variables.
We also use a special, bounded fan-in, unbounded fan-out AND gate, distinctively denoted AND$_{(\omega)}$, where the subscript ``$(\omega)$'' indicates the property of ``unbounded'' fan-out, and we reserve the notation AND for an AND gate of fan-out  $1$.
Although it is possible to replace an AND$_{(\omega)}$ gate by a number of AND gates together with  appropriately duplicated subcircuits rooted at this  AND$_{(\omega)}$ gate, the size of the resulting circuit may be considerably larger than the original one. These AND$_{(\omega)}$ gates are used to build cascading (sub)circuits.

In what follows, nevertheless, we use three types of gates:
AND gates of fan-in $2$ and fan-out $1$, AND$_{(\omega)}$ gates of bounded fan-in and unbounded fan-out\footnote{More precisely, we demand that the fan-in of each AND$_{(\omega)}$ is at most a fixed constant, say, $c$ independent of $n$ for $G_n$ and the fan-out is at least $1$. In the special case of AND$_{(\omega)}$ having fan-in $2$ and fan-out $1$, AND$_{(\omega)}$ is the same as AND. Even though, we tend to keep the notation AND$_{(\omega)}$ to distinguish it from AND.}, and OR gates of unbounded fan-in\footnote{For later convenience, we allow each OR gate to take only one input although the use of such an OR gate is redundant.} and fan-out $1$. Since the negations of input bits are integrated into  parts of inputs,  there are no explicit use of NOT gates.
A family of  \emph{semi-unbounded fan-in circuits} refers to a circuit family for which there is a constant $\ell\geq1$ such that any path from the root to a leaf node in any circuit in the family has at most $\ell$ consecutive gates of AND.

It should be remarked that any gate at depth $i$ takes inputs from gates at depth less than $i$. We assume that a circuit is \emph{layered}\footnote{This does not mean that the circuit is ``leveled''; that is, all gates at level $i$ take inputs from only gates at level $i-1$. See, e.g., Fig.~\ref{fig:cascading-block} for an illustration of a layered circuit.} so that
all gates are placed in certain layers and all gates placed in the same layer are indexed from left to right so that we can easily find the right and the left adjacent gates (if any) of any gate in the same layer. See Fig.~\ref{fig:circuit-simulation} for a quick example of layered circuits.

%%%%%%
%%%%%%

\begin{figure}[t]
\centering
\includegraphics*[height=5.5cm]{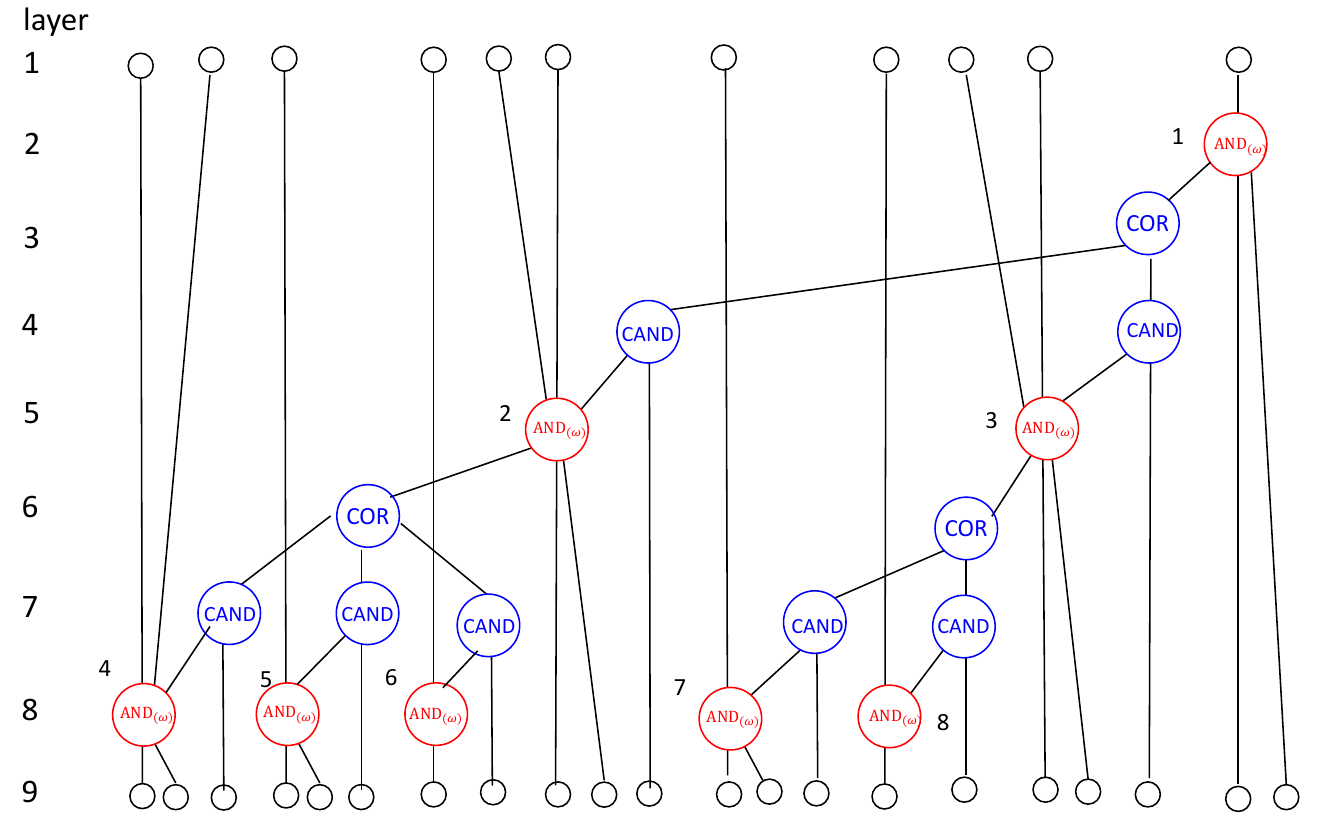}
\caption{A cascading block, which is a subcircuit composed of  AND$_{(\omega)}$, CAND,  and COR gates marked as large circles. The smaller circles at layers $1$ and $9$ indicate output gates and input gates of this subcircuit, respectively. The link length of this subcircuit is $3$. The numbers adjacent to some circles indicate gate labels. For example, gate 4 as well as gates $5$ and $6$ is linked to gate 2, which is further linked to gate 1. In contrast, gates $5$ and $7$ are not linked. Similarly, gates $5$ and $6$ are not linked.}\label{fig:cascading-block}
\end{figure}

%%%%%
%%%%%

\begin{definition}
A \emph{cascading block} $D$ is a special layered subcircuit composed of AND$_{(\omega)}$, AND, and OR gates (the last two of which are distinctively called CAND and COR gates) with numerous input and output gates satisfying the following conditions.  Let $i$ be any natural number.

\renewcommand{\labelitemi}{$\circ$}
\begin{enumerate}\vs{-1}
  \setlength{\topsep}{-2mm}%
  \setlength{\itemsep}{1mm}%
  \setlength{\parskip}{0cm}%

\item[(1)] Gates in the first and the last layers are used as outputs and inputs of this block. These gates should be replaced by appropriate gates when the block is inserted into a larger circuit.

\item[(2)] At layer 2, there is only one AND$_{(\omega)}$ gate and its fan-out is $1$.

\item[(3)] The subcircuit $D$ contains AND$_{(\omega)}$ gates only at layer $3i+2$,  COR gates at layers $3i+3$, and CAND gates at layer $3i+4$.

\item[(4)] Each  AND$_{(\omega)}$ gate at layer $3i+5$ is directly connected  to (possibly) output gates at the first layer and exactly one CAND gate at layer $3i+4$.

\item[(5)] Each  AND$_{(\omega)}$ gate at layer $3i+2$ is directly  connected from (possibly) input gates at the last layer and  exactly  one COR gate at layer $3i+3$.

\item[(6)] Each COR gate at layer $3i+3$ is directly connected only from CAND gates at layer $3i+4$.

\item[(7)] For any two distinct AND$_{(\omega)}$ gates, there is no more than one directed path between them.
\end{enumerate}\vs{-1}
\end{definition}

The \emph{bottom AND$_{(\omega)}$ gates} of $D$ refer to the AND$_{(\omega)}$ gates at the bottom layer and the unique AND$_{(\omega)}$ gate at the top layer is called the \emph{top AND$_{(\omega)}$ gate} of $D$. Fig.~\ref{fig:cascading-block} illustrates an example of cascading block.

We say that two AND$_{(\omega)}$ gates are \emph{linked} if there is a directed path from one of them at a higher layer to the other at a lower layer passing through CAND and COR gates and (possibly) other AND$_{(\omega)}$ gates. Such a path is called a \emph{link} between them.
A link is said to be \emph{direct} if no other AND$_{(\omega)}$ gate lies in the link.
In a special case where an AND$_{(\omega)}$ gate $C$ is linked to no other AND$_{(\omega)}$ gates, $C$ is considered to form a cascading block of a single AND$_{(\omega)}$ gate of fan-in $1$ and this gate is not necessarily connected to COR gates.

The \emph{link length} of a cascading block $D$ is the maximum number of AND$_{(\omega)}$ gates (not including CAND gates) that appear in any path from an input gate to an output gate of $D$.
The cascading block of Fig.~\ref{fig:cascading-block} has link length $3$.
The three AND$_{(\omega)}$ gates at layer 8 from the left (i.e., gates 4, 5,  and 6) are all linked to the leftmost AND$_{(\omega)}$ gate at layer 5 (i.e., gate 2). Similarly, the two AND$_{(\omega)}$ gates at layer $8$ from the right (i.e., gates 7 and 8) are linked to the rightmost AND$_{(\omega)}$ gate at layer $5$ (i.e., gate 3).

%%%%%%
%%%%%%

\begin{figure}[t]
\centering
\includegraphics*[height=7.0cm]{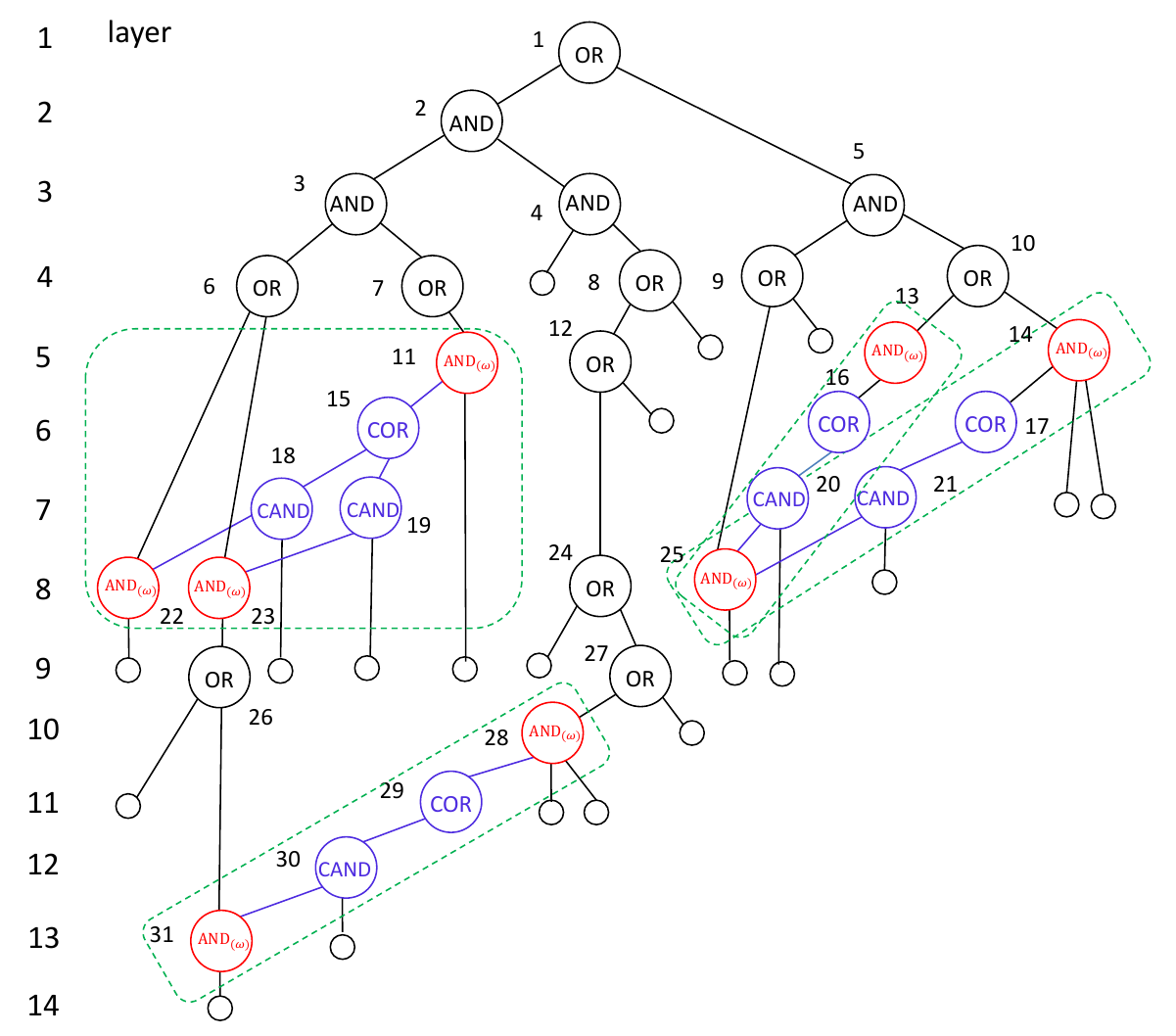}
\caption{A simple $2$-cascading circuit. Small circles are input gates of the circuit. The numbers adjacent to circles are gate labels. Five cascading blocks excluding their top and bottom layers (since the corresponding gates are already replaced by other gates) are marked by green dotted boxes. For the left cascading blocks, gate 3 is the minimum common ancestor of gates 11 and 23, and gate 2 is the minimum common ancestor of gates 28 and 31.
The subcircuit rooted at gate 2 is a cascading semi-circuit. Similarly, the subcircuit rooted at gate 5 is also a cascading semi-circuit composed of two cascading blocks. A basis subcircuit is made up of gates $1$, $2$, and $5$.}\label{fig:circuit-simulation}
\end{figure}

%%%%%
%%%%%

Now, we wish to embed those cascading blocks into a larger circuit, which we intend to call a \emph{cascading circuit}. When two cascading blocks $A$ and $B$ are embedded into such a circuit, $A$ is said to be an \emph{ancestor} of $B$ (or $B$ is a \emph{descendant} of $A$) in this circuit if the following condition holds: for any two AND$_{(\omega)}$ gates $g_1$ and $g_2$ in $A$ and $B$, respectively, if $g_1$ and $g_2$ are connected by a path, then $g_1$ must be an ancestor of $g_2$ in this circuit (which is viewed as a graph).
Those two cascading blocks $A$ and $B$ of cascading lengths $k_{A}$ and $k_{B}$, respectively, are further said to be \emph{orderly} if, for any number $i\in[\min\{k_{A},k_{B}\}]$,  the $i$th AND$_{(\omega)}$ gate $C$ in $A$ is an ancestor of the $i$th AND$_{(\omega)}$ gate in $B$.

A cascading block $D$ in a circuit is \emph{the leftmost} if there is no cascading block $C$ that is located on the left side of $D$, namely, for any two gates $g_C$ and $g_D$ taken from $C$ and $D$, respectively, if they are on the same layer, then $g_C$ is located on the left side of $g_D$.

When we delete all input gates from a given circuit $G$, the resulting circuit is called the \emph{leaf-free version} of $G$. Given a circuit $G$, the \emph{leafless fan-in} of an AND$_{(\omega)}$ gate of $G$ refers to the number of inputs wired directly to this gate except for the inputs wired  directly from input gates. Notice that the leafless fan-in may be zero if all inputs to this gate are wired directly from input gates of $G$.

Given a cascading block $D$, the \emph{cascading length} of $D$ is defined to be one plus the sum of the leafless fan-ins of any AND$_{(\omega)}$ gates of $D$ except for the top AND$_{(\omega)}$ gate. In particular, when $D$'s link length is $1$, the cascading length of $D$ equals $1$. If all AND$_{(\omega)}$ gates of a cascading block have leafless fan-ins $1$, then the cascading length exactly matches the link length.

To simplify a later description of circuits, we delete some portions of the circuits. Given a circuit $G$, for each OR gate (including COR gates) of $G$, we choose exactly one child and delete all subcircuits rooted at the children that are not chosen as well as their associated connecting wires. The resulting subcircuits constitute only of OR gates of fan-in $1$, AND$_{(\omega)}$ gates, and AND gates. We call such subcircuits \emph{decisive fragments} of $G$. An example of such decisive fragments of the subcircuit rooted at gate 2 in Fig.~\ref{fig:circuit-simulation} is depicted as  Fig.~\ref{fig:unlinked-version}(1).
Given a circuit $G$ containing cascading blocks, its \emph{unlinked version} is the circuit obtained directly
from $G$ by deleting all CAND and COR gates and their connecting wires. For the subcircuit rooted at gate 2 in Fig.~\ref{fig:circuit-simulation}, its unlinked version is depicted as Fig.~\ref{fig:unlinked-version}(2).

%%%%%%
%%%%%%

\begin{figure}[t]
\centering
\includegraphics*[height=6.3cm]{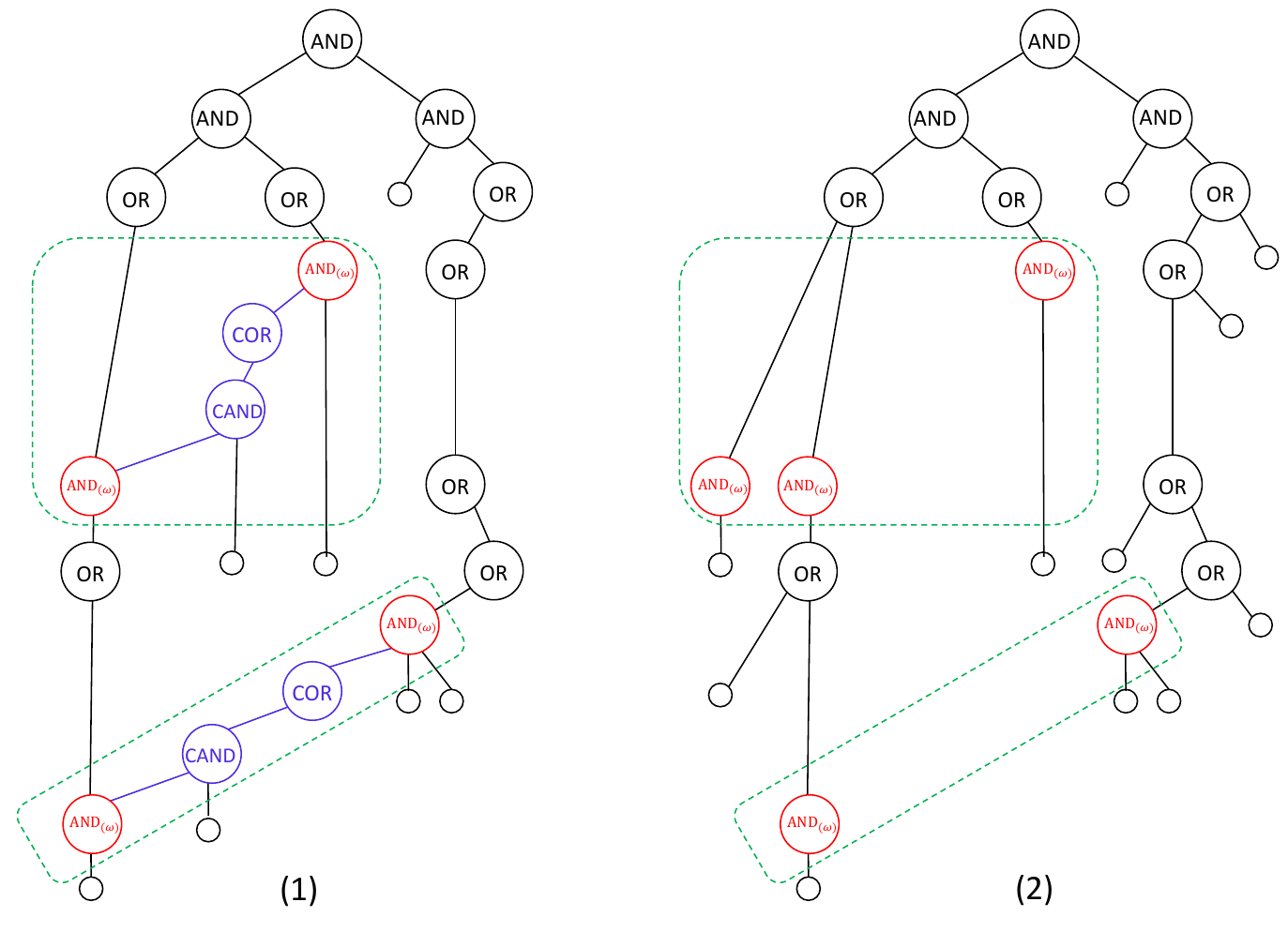}
\caption{(1) A decisive fragment of the subcircuit rooted at gate 2 of Fig.~\ref{fig:circuit-simulation}. All OR gates (including COR gates) have fan-in 1 and fan-out 1. The two cascading blocks marked by the green dotted boxes are orderly.
(2) The unlinked version of the same subcircuit rooted at gate 2 of   Fig.~\ref{fig:circuit-simulation}. This unlinked version forms a tree.}\label{fig:unlinked-version}
\end{figure}

%%%%%
%%%%%

Before giving a formal definition of cascading circuits, we define ``cascading semi-circuits'', which constitute crucial parts of cascading circuits.
When we embed a cascading block into a larger circuit, we refer to this block in the circuit using the \emph{block top layer number} (resp., \emph{block-bottom layer number}), which is the layer number of the top (resp., bottom) AND$_{(\omega)}$ gate of the block in the circuit. In Fig.~\ref{fig:circuit-simulation}, for the cascading block containing three  AND$_{(\omega)}$ gates $11$, $22$, and $23$, its block-top layer number and block-bottom layer number are $5$ and $8$, respectively.

\begin{definition}
A \emph{cascading semi-circuit} is a circuit whose decisive fragments $C$ satisfy the following ten conditions.

\begin{enumerate}\vs{-2}
  \setlength{\topsep}{-2mm}%
  \setlength{\itemsep}{1mm}%
  \setlength{\parskip}{0cm}%

%\item[(1)] All cascading blocks of $C$ must have length at most $k$.

\item[(1)] The unlinked version of $C$ must be a tree.

\item[(2)] At each layer of $C$, there is at most one cascading block of length $\geq2$ and, if any, it must contain the leftmost node at the   layer.

\item[(3)] In the unlinked, leaf-free version of $C$, every path from each bottom node to the root must pass through at least one cascading block.

\item[(4)] The root of $C$ is the minimum common ancestor of two AND$_{(\omega)}$ gates of a certain cascading block of $C$.

\item[(5)] For any two linked AND$_{(\omega)}$ gates in $C$, their minimum common ancestor must be either an AND gate or an AND$_{(\omega)}$ gate.

\item[(6)] In any cascading block $D$ of $C$, all AND$_{(\omega)}$ gates of $D$ have leafless fan-in at most $2$.

\item[(7)] Every AND$_{(\omega)}$ gate $g$ in $C$ has at most two outputs. If $g$ has exactly two outputs, then one of them must be connected to a CAND gate in a certain cascading block.

\item[(8)] For any two cascading blocks $A$ and $B$ of $C$, if $A$ is an ancestor of $B$, then $A$ and $B$ must be orderly.

\item[(9)] For any two cascading blocks $A$ and $B$ of $C$, if $A$ is an ancestor of $B$, then the block-bottom layer number of $A$ is larger than the block-top layer number of $B$.

\item[(10)] At any layer $\ell$ between the block-top layer number and the block-bottom layer number of a cascading block $D$ of $C$ having  cascading length $\geq2$, there is no gate that is located on the right side of $D$ at layer $\ell$ and has leafless fan-in $\geq2$.
\end{enumerate}
\end{definition}

We succinctly call by a \emph{$k$-cascading semi-circuit} a cascading semi-circuit whose cascading length is at most $k$.

\begin{definition}
A \emph{basic subcircuit} refers to a tree-form subcircuit composed only of OR and AND gates. A \emph{cascading circuit} is built from a basis subcircuit by connecting its leaves to cascading semi-circuits. A \emph{$k$-cascading circuit} refers to a cascading circuit whose cascading semi-circuits all have cascading lengths at most $k$.
\end{definition}

Fig.~\ref{fig:circuit-simulation} illustrates a simple cascading circuit composed of two cascading semi-circuits (namely, the subcircuits rooted at gates $2$ and $5$) connected to a basis subcircuit consisting only of gates $1$, $2$, and $5$.

The \emph{alternation} of a circuit is the maximum number of times when the gate types switch between AND, AND$_{(\omega)}$, and OR in any path from the output gate to an input gate. The \emph{size} of a circuit is the total number of gates and wires (which connect between gates) in it.

Hereafter, we are interested in families of cascading circuits indexed by natural numbers. To discuss such circuit families, we usually demand an appropriate uniformity condition. Of various uniformity notions in the literature, e.g., \cite{Ruz81}, we here use the ``logspace'' uniformity. For any family $\CC=\{C_n\}_{n\in\nat}$ of Boolean circuits, $\CC$   is said to be \emph{logspace-uniform} if there exists a DTM $M$ with a read-only input tape, a rewritable index tape, multiple work tapes, and a write-once output tape such that $M$ takes input of the form $1^n$ and produces $\pair{C_n}$ in $n^{O(1)}$ runtime using $O(\log{n})$ work space, where $\pair{C_n}$ is a standard binary encoding of $C_n$ \cite{Ruz81}.

\begin{definition}
The notation $\mathrm{CCIRcasc,\!alt,\!size}(k,d(n),s(n))$ expresses the collection of all languages solvable by logspace-uniform families of semi-unbounded fan-in $k$-cascading circuits of at most $d(n)$ alternations and at most $s(n)$ size.
\end{definition}

%%%%%%%%
%%%%%%%%
\subsection{Relationship between Auxiliary SNAs and Cascading Circuit Families}\label{sec:simulation}

Sudborough \cite{Sud78} characterized $\logcfl$ in terms of  polynomial-time log-space auxiliary pushdown automata. Founded on Ruzzo's ATM characterization of $\logcfl$ \cite{Ruz80}, Venkateswaran \cite{Ven91} further established a close tie between auxiliary pushdown automata and  semi-unbounded fan-in circuit families.
More precisely, polynomial-time log-space auxiliary pushdown automata have the same computational power as logspace-uniform families of semi-unbounded fan-in circuits of logarithmically-bounded alternation and polynomial size;  in other words, $\logcfl$ coincides with $\mathrm{SAC}^1$.
This current work expands his result to aux-$k$-sna's and semi-unbounded fan-in $k$-cascading circuits.

In this work, we present an exact characterization of aux-$k$-sna's in terms of $k$-cascading circuits by demonstrating direct simulations between these computational models.

\begin{theorem}\label{circuit-character}
For any positive integer $k$, $\auxsna\mathrm{depth,\!space,\!time}(2k,O(\log{n}),n^{O(1)})_{FBS}$ coincides with
$\mathrm{CCIRcasc,\!alt,\!size}(k,O(\log{n}),n^{O(1)})$.
\end{theorem}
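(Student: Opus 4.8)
The plan is to establish the two inclusions separately, each by a direct simulation, using the combinatorial machinery on surface configuration duos and layered blocks developed in Section~\ref{sec:useful}. By Lemma~\ref{once-right-turn}, we may assume throughout that every aux-$2k$-sna in question is weakly depth-susceptible, makes no frozen blank turn, and is right-turn restricted; the last property is what will keep the cascading length bounded by $k$ (since each storage-tape cell is visited by at most $\lceil 2k/2\rceil = k$ turns, and by right-turn restriction at most one of those is a right turn). This is the reason the parameter on the automaton side is $2k$ while the parameter on the circuit side is $k$: the $k$ right turns at a cell correspond exactly to the $k$ AND$_{(\omega)}$ gates strung along a single link of cascading length $k$.

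\textbf{From aux-$2k$-sna's to $k$-cascading circuits.}
First I would build, for each input length $n$, a logspace-uniform circuit $G_n$ whose gates are labeled by surface configuration duos $(P,R,t)$ and trios $(P,Q,R,s,t)$ as set up before Lemma~\ref{tree-like-realizable}. The top OR/AND structure (the basis subcircuit) guesses an accepting duo $(P_0,P_{acc},\mathrm{Time}_M(|x|))$; by Lemma~\ref{tree-like-realizable}, $M$ accepts $x$ iff such a realizable tree-like structure exists. The recursion engine is Lemma~\ref{recursion-property}: a realizable duo $(P,R,t)$ with $t\ge2$ is realized by a \emph{unique} layered block $\QQ=\{(P_j',P_{j+1}',t_j)\}_{j\in[m]}$ with $\QQ\vdash(P,R,t)$, and a layered block decomposes (via the ``linked'' relation between consecutive duos at the same depth) into a sequence of sub-duos chained through surface configurations at the next deeper cell level. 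Each such decomposition step is realized by one AND$_{(\omega)}$ gate: its leafless fan-in (one per right turn below it, by Lemma~\ref{realizable} / the structure of $\QQ$ when $m\ge3$) determines how many further cascading steps hang off it, and summing these leafless fan-ins plus one gives exactly the cascading length. A layered block with $m$ members at a cell becomes a link of AND$_{(\omega)}$ gates threaded through COR/CAND gates — precisely a cascading block — and the ``$t = \sum t_j + s + s'$'' bookkeeping is routed through the CAND/COR layers $3i+3,3i+4$. Verifying conditions (1)--(10) of a cascading semi-circuit then reduces to: (i) the unlinked version is a tree because the realizability relation is tree-like (the figure~\ref{fig:computation-tree} picture); (ii) leftmost-ness and the right-side-empty condition (2),(10) follow because the leftmost branch of each layered block is the one carrying the chained right turns; (iii) orderliness (8) and the layer-number condition (9) come from the monotone decrease of $head_3$ along the recursion; (iv) leafless fan-in $\le2$ (6),(7) holds because each duo spawns at most its unique layered block plus one ``$\vdash$'' continuation. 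Uniformity is routine: a log-space machine can enumerate surface configurations (each fits in $O(\log n)$ space since $s(n)=O(\log n)$), check the $\leadsto_\delta$ and linked predicates, and emit the wiring. The time bound $t(n)=n^{O(1)}$ bounds all the $t$-parameters, so the circuit has polynomial size; the alternation count is $O(\log n)$ because each level of the $head_3$-recursion that passes through a genuine AND/OR alternation halves (up to constants) the number of distinct storage cells remaining, exactly as in Ruzzo's and Venkateswaran's analyses.

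\textbf{From $k$-cascading circuits to aux-$2k$-sna's.}
Conversely, given a logspace-uniform family of semi-unbounded fan-in $k$-cascading circuits of $O(\log n)$ alternation and polynomial size, I would design an aux-$2k$-sna that evaluates $G_n$ by a depth-first traversal, using the log-space auxiliary tape to hold the current gate index (the uniformity machine supplies gate types and wiring on demand) and using the depth-$2k$ storage tape to hold the recursion stack of ``pending OR-choices / AND-conjuncts,'' exactly as an aux-npda does for $\mathrm{SAC}^1$. The new ingredient is the cascading blocks: when the traversal enters a link of $j\le k$ AND$_{(\omega)}$ gates, the automaton must verify all $j$ leafless-fan-in subgoals; it does so by pushing onto a single block of $c\lceil\log n\rceil$ storage-tape cells, then revisiting that block once per AND$_{(\omega)}$ gate on the link — a left-to-right sweep, a turn, a right-to-left sweep, etc. Each full traversal of such a block counts as two visits in the depth accounting, so $j\le k$ right-turn-generating AND$_{(\omega)}$ gates cost depth $\le 2k$, which is why depth $2k$ suffices. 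The semi-unbounded fan-in condition (at most $\ell$ consecutive AND gates on any root-to-leaf path) ensures the ``AND-product'' portions of the recursion have constant depth, so they can be unrolled into the finite control and the genuine stack growth happens only at OR gates and cascading links — keeping the storage usage tied to the $O(\log n)$ alternation and hence keeping the automaton polynomial-time. Frozen blank sensitivity is respected because whenever the storage-tape head sits on a frozen cell ($B$) the automaton is merely skipping back over an already-completed block, during which the input head does nothing.

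\textbf{Main obstacle.}
The hard part will be the forward direction, specifically making the cascading-block bookkeeping match conditions (6)--(10) of a cascading semi-circuit \emph{on the nose} rather than merely up to a constant blow-up: one must show that a right-turn-restricted, weakly depth-susceptible aux-$2k$-sna's storage history decomposes so that at each layer at most one nontrivial cascading block appears and it is leftmost, with all AND$_{(\omega)}$ gates having leafless fan-in $\le2$. This requires a careful induction on $head_3$ (as in Lemma~\ref{recursion-property}) tracking simultaneously the number of right turns remaining at each cell and the position of the ``active'' subcomputation, together with an argument that extra branching can always be pushed leftward — essentially a normal-form lemma for storage histories. Getting the $2k\leftrightarrow k$ correspondence exact (not off by a constant factor) hinges entirely on this combinatorial normalization; everything else is a faithful transcription of the classical $\logcfl = \mathrm{SAC}^1$ arguments.
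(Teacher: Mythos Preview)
Your proposal takes essentially the same route as the paper: after normalizing via Lemma~\ref{once-right-turn} to weakly depth-susceptible, right-turn-restricted aux-$2k$-sna's with no frozen blank turn, you prove the two inclusions by exactly the direct simulations the paper packages as Lemmas~\ref{aux-sna-to-circuit} and~\ref{circuit-to-aux-sna} --- gate labels built from surface-configuration duos and trios, with layered blocks at a fixed cell level becoming cascading blocks in the forward direction, and a modified depth-first evaluation that records cascading-block state on the storage tape in the reverse direction. Your identification of the ``main obstacle'' (verifying conditions (6)--(10) on the nose, via the right-turn restriction and the induction on $head_3$) also matches where the paper spends its effort.

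One point to correct: your stated reason for the $O(\log n)$ alternation bound --- that each genuine AND/OR alternation ``halves the number of distinct storage cells remaining'' --- is not the right mechanism and is not what Ruzzo--Venkateswaran do (they halve the \emph{time} parameter $t$, not a cell count; the storage tape here can be polynomially long). In the paper's construction the recursion on $head_3$ proceeds one cell at a time, so the circuit may well have polynomial \emph{depth}; the $O(s(n))$ \emph{alternation} bound instead comes from the fact that the layer-by-layer skeleton consists only of OR and AND$_{(\omega)}$ gates, and genuine AND/OR type switches arise solely through the CAND/COR gates inserted inside cascading blocks (and in the basis subcircuit). You should rework that paragraph accordingly; the rest of your plan is sound and coincides with the paper's argument.
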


Toward the proof of Theorem \ref{circuit-character},  we split the proposition into two independent assertions, which are further generalized as two lemmas: Lemmas \ref{aux-sna-to-circuit} and \ref{circuit-to-aux-sna}.
The first lemma asserts the conversion of aux-$k$-sna's into $k$-cascading circuit families whereas the second lemma asserts the conversion of $k$-cascading circuit families into  aux-$k$-sna's with (frozen) blank sensitivity.

\begin{lemma}\label{aux-sna-to-circuit}
Let $k\geq1$. Let $s,t$ be any bounding functions. For any weakly depth-susceptible aux-$2k$-sna $M$ with right-turn restricted and no frozen blank turn running in time $t(n)$ using space $s(n)$, there exists a logspace-uniform family of semi-unbounded fan-in $k$-cascading circuits of alternation $O(s(n))$ and size $O(t(n))\cdot 2^{O(s(n))}$ that solves $L(M)$.
\end{lemma}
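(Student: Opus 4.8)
The plan is to encode the accepting computations of the weakly depth-susceptible aux-$2k$-sna $M$ into a logspace-uniform family of $k$-cascading circuits, using the combinatorial machinery of surface configuration duos and trios from Section~\ref{sec:useful}. By Lemma~\ref{once-right-turn} and Lemma~\ref{relax-FBS} the hypotheses already give us a machine that makes no frozen blank turn, is right-turn restricted, and is weakly depth-susceptible, so by Lemma~\ref{bound-sna-turn} each storage-tape cell is visited with at most $k$ turns. The key idea is to build, for each input length $n$, a circuit $G_n$ whose gates are labeled by realizable configuration duos $(P,R,t)$ (and trios $(P,Q,R,s,t)$), where an accepting duo $(P_0,P_{acc},\mathrm{Time}_M(n))$ sits at the root; a gate labeled $(P,R,t)$ evaluates to $1$ on input $x$ exactly when that duo is realizable along some computation path of $M$ on $x$. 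Lemma~\ref{tree-like-realizable} tells us this root gate being satisfied is equivalent to $M$ accepting $x$, so $G_n$ computes $L(M)$ once wired correctly.

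The recursive structure comes from Lemma~\ref{recursion-property}: a realizable duo $(P,R,t)$ with $t\ge 2$ decomposes uniquely into a realizable layered block $\QQ=\{(P'_j,P'_{j+1},t_j)\}_{j\in[m]}$ with $\QQ\vdash(P,R,t)$, together with the two ``connector'' steps $P\vdash_s P'_1$ and $P'_{m+1}\vdash_{s'}R$. So the gate for $(P,R,t)$ is an OR (over the choices of $\QQ$, of $s$, $s'$, and of the connecting surface configurations) of an AND of: the gates for the $m$ members of $\QQ$; the gate(s) checking the connector transitions, which are local checks on $\delta$; and — crucially — the \emph{linking} information coming from Figure~\ref{fig:computation-tree}, where one duo is linked to the next via a computation path at the shallower depth. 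The layered block, being a series of duos that are all pairwise storage consistent and form a ``cascade'' down the storage tape, is precisely what a cascading block is designed to represent: the AND$_{(\omega)}$ gates carry the successive duos $(P'_1,P'_2,t_1),(P'_2,P'_3,t_2),\dots$ whose right turns at cell $l_3-1$ string them together, the unbounded fan-out of AND$_{(\omega)}$ handles the links to descendant blocks, and the CAND/COR gates between consecutive AND$_{(\omega)}$ gates enforce exactly the link constraints of conditions (4)--(7) in the definition of cascading block. The cascading length is controlled by the number of right turns, which is $\le k$ for a depth-$2k$ machine (by Lemma~\ref{bound-sna-turn}, $\lceil 2k/2\rceil=k$), so we obtain $k$-cascading semi-circuits. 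The basis subcircuit on top handles the top-level OR/AND combinatorics of accepting duos.

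For the quantitative bounds: a surface configuration consists of an inner state, three head positions, the $O(\log n)$-bit auxiliary tape content, and one storage symbol, so there are $2^{O(s(n))}$ of them and hence $2^{O(s(n))}\cdot\mathrm{poly}$ labels; together with the time parameter $t\le t(n)$ this gives $O(t(n))\cdot 2^{O(s(n))}$ gates, matching the claimed size. The alternation bound is $O(s(n))$ because, tracing from the root through the tree-like realizability structure, each ``descent'' in storage depth corresponds to a bounded number of AND/OR alternations, and the recursion-depth of the realizability tree is $O(t(n))$ — but after merging runs along the cascade into AND$_{(\omega)}$ chains (which are ignored when counting alternations, by the definition of alternation for cascading circuits), the genuine AND/OR switches reduce to $O(s(n))$; here I would invoke Ruzzo-style balancing of the computation tree so that the OR-branching depth is logarithmic in the number of surface configurations, i.e.\ $O(s(n))$. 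Logspace uniformity is straightforward: given $1^n$, the generating DTM enumerates labels (surface configurations plus small parameters), each fitting in $O(\log n)$ space, and checks the local $\delta$-conditions and the structural cascading conditions (1)--(10) of a cascading semi-circuit by inspection, all in logspace.

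\textbf{Main obstacle.} The hard part will be verifying that the circuit we build genuinely satisfies \emph{all ten} conditions defining a cascading semi-circuit — in particular the orderliness condition~(8) (ancestor cascading blocks must have their $i$th AND$_{(\omega)}$ gates respectively dominating), condition~(9) on nesting of block-top/block-bottom layer numbers, and the ``nothing heavy to the right'' condition~(10). These are exactly the constraints that make the recursion of Lemma~\ref{recursion-property} translate into a \emph{well-formed layered} cascading structure rather than an arbitrary DAG, and making them hold simultaneously requires a careful, canonical choice of how the unique decomposition $\QQ\vdash(P,R,t)$ is laid out in layers — essentially we must always place the deepest (leftmost-in-storage) recursive call in the leftmost position and stack the right turns downward, then argue by induction on storage depth that right-turn restrictedness (Lemma~\ref{once-right-turn}) forbids any competing heavy gate to the right. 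The rest of the argument — correctness of the evaluation, the counting bounds, and uniformity — is routine bookkeeping once this layout discipline is fixed.
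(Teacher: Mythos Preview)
Your proposal is correct and takes essentially the same approach as the paper: both label gates by surface configuration duos and trios, recursively decompose via Lemma~\ref{recursion-property} into layered blocks, encode the links between same-level duos as cascading blocks (with the right-turn restriction of Lemma~\ref{once-right-turn} bounding cascading length by $k$), and appeal to Lemma~\ref{tree-like-realizable} for overall correctness. The only minor divergence is that you invoke Ruzzo-style balancing for the $O(s(n))$ alternation bound, whereas the paper's direct (unbalanced) construction relies instead on the fact that AND$_{(\omega)}$ gates are excluded from the alternation count, so the long OR--AND$_{(\omega)}$--OR spine contributes nothing; the paper's own treatment of this point is quite terse, and either route suffices.
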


%%%%

\begin{lemma}\label{circuit-to-aux-sna}
Let $k\geq1$. Let $d,s$ be any bounding functions satisfying $d(n),s(n)\geq n$ for all $n\in\nat$. For any uniform family of semi-unbounded $k$-cascading circuits of alternation at most $d(n)$ and size at most $s(n)$, there exists a (frozen) blank-sensitive aux-$2k$-sna  $M$ that simulates it using $O(d(n)+\log{s(n)})$ space in $O(s(n))$ time.
\end{lemma}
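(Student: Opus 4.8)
The plan is to have the aux-$2k$-sna $M$ carry out a nondeterministic depth-first ``proof-tree'' traversal of the circuit $C_n$, in the spirit of the classical simulation of semi-unbounded fan-in circuits by auxiliary pushdown automata underlying $\logcfl=\mathrm{SAC}^1$, but with the unrestricted \emph{storage} tape (rather than a pushdown stack) recording the list of gates whose verification is still pending, and with the cascading blocks encoded into the turn pattern of the storage-tape head. On input $x$ of length $n$, $M$ keeps on its auxiliary tape the label of the gate it is currently verifying together with scratch space to run the logspace-uniformity DTM as a subroutine, so it can query the type and the children of any gate of $C_n$ in $O(\log s(n))$ space. Verification starts at the output gate and recurses: at an OR gate $M$ nondeterministically guesses a child evaluating to $1$ and moves to it, recording nothing new; at an ordinary AND gate with children $c_1,c_2$, $M$ writes a fresh record $(c_2,\text{pending})$ on the next cell block of the storage tape, recurses into $c_1$, and upon returning (its storage head having come back to this block) rewrites the record to $(c_2,\text{left-done})$ and recurses into $c_2$, then ``pops'' the block by departing leftward and returns up; at an input gate $M$ moves its input-tape head to the relevant position of $x$ and accepts that leaf iff the corresponding literal is $1$. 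If every leaf of the guessed tree is accepted and $M$ finally returns to the output gate with an empty storage region, it accepts. Correctness is the standard observation that accepting computation paths of $M$ biject with monotone certificates (``proof trees'') witnessing $C_n(x)=1$, which is exactly the membership condition for the language solved by the family; this mirrors, in reverse, the construction of Lemma~\ref{aux-sna-to-circuit}.

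The crux, and the source of the depth bound $2k$, is the treatment of cascading blocks. Each cascading semi-circuit hanging off a leaf of the basis subcircuit is simulated inside one contiguous region of the storage tape. A chain of linked AND$_{(\omega)}$ gates of a cascading block contains at most $k$ of them (the cascading length is $\le k$), and $M$ visits the boundary cell of the corresponding storage region once for the top AND$_{(\omega)}$ gate and makes one turn there for each further linked AND$_{(\omega)}$ gate met on the way down, each turn recording that gate's remaining obligations: the at-most-two leafless inputs (condition~(6)) and, by condition~(7), the unique CAND-connected output branch that must also be verified. Since a turn is a double visit, $\le k$ turns plus the final leftward departure keep the cell's depth at $\le 2k$, matching the depth-$2k$ requirement, and Lemma~\ref{bound-sna-turn} confirms this is internally consistent. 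The bookkeeping that makes such a traversal a \emph{legal single} storage history — nested cascading blocks linearized with their turns interleaved correctly — is precisely what conditions (2), (8), (9), (10) of the cascading-semi-circuit definition provide: (9) orders ancestor/descendant blocks by layer so their ``lifetimes'' nest, (8) (orderliness) aligns the $i$th AND$_{(\omega)}$ gates so the $i$th turn of the outer region matches the $i$th turn of the inner one, (2) forces at most one long cascade per layer, and (10) forbids a competing high-leafless-fan-in gate to the right of an open cascade, which would otherwise demand a second simultaneously-open region; conditions (1), (3)--(5) ensure the unlinked version is a tree rooted at a minimum common ancestor, so the recursion terminates and decodes cleanly.

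For (frozen) blank sensitivity and the absence of frozen blank turns, the design choice is that $M$ never performs any ``real work'' while reading the frozen blank symbol $B$: its storage-tape head sits on a $B$ cell only while sliding back across an already-popped (hence frozen) region toward the next pending record, and during every such step the input-tape and auxiliary-tape heads are held stationary and the inner state merely records ``still skipping $B$''. Moreover, whenever $M$ goes up out of a region it makes its turn on the last non-$B$ boundary cell and only then moves strictly leftward through the $B$ region, so no turn ever lands on a $B$ cell. Hence $M$ is (frozen) blank sensitive and makes no frozen blank turn.

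For the resource bounds, the auxiliary tape holds a gate index plus uniformity-machine scratch ($O(\log s(n))$ cells) together with an $O(d(n))$-cell record of the sequence of AND-branch choices along the current path of the basis subcircuit and the current position inside an open cascade, giving work space $O(d(n)+\log s(n))$, while the storage tape is unrestricted; since $M$ enters and leaves the subcircuit rooted at each gate only a bounded number of times and does $O(\log s(n))$ auxiliary overhead there, it halts within $O(s(n))$ steps once $s(n)$ is read as counting wires as well as gates. The step I expect to be the main obstacle is exactly the faithful encoding of an arbitrary cascading semi-circuit's nested cascading blocks into one legal depth-$2k$ storage history of $M$: checking that conditions (1)--(10) truly force the traversal to be serializable with the right turn multiplicities and no layer conflicts, and, conversely, that every accepting storage history of $M$ decodes back to a genuine proof tree of $C_n$.
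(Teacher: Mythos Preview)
Your approach is essentially the same as the paper's: a nondeterministic depth-first proof-tree traversal in which ordinary AND gates push their pending child onto the storage tape, OR gates are resolved by a nondeterministic guess, and the chain of linked AND$_{(\omega)}$ gates in a cascading block is encoded via the turn pattern of the storage-tape head so that cascading length $k$ becomes storage depth $2k$. The paper differs only in being more concrete about the cascading-block bookkeeping --- it records on the storage tape the \emph{guessed} next linked AND$_{(\omega)}$ gate and later, when the tree traversal actually arrives there, checks the recorded label for a match --- and in keeping \emph{all} branch obligations on the storage tape so that the auxiliary tape holds just the current gate label and uniformity scratch, giving $O(\log s(n))$ work space rather than your $O(d(n)+\log s(n))$ (both within the stated bound).
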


Meanwhile, we postpone the proofs of Lemmas \ref{aux-sna-to-circuit} and \ref{circuit-to-aux-sna} and return to Theorem \ref{circuit-character}. Remember that Section \ref{sec:lemma-proof} will provide the proof of Lemma \ref{aux-sna-to-circuit} and Section \ref{sec:lemma-proof-II} will do the proof of Lemma \ref{circuit-to-aux-sna}.
By combining these generalized lemmas, we instantly obtain the theorem by the following argument.

\begin{proofof}{Theorem \ref{circuit-character}}
To show that $\auxsna\mathrm{depth,\!space,\!time}(k,O(\log{n}),n^{O(1)})_{FBS}$ is included in $\mathrm{CCIRcasc,\!alt,\!size}(k,O(\log{n}),n^{O(1)})$, let us consider an arbitrary language $L$ in $\auxsna\mathrm{depth,\!space,\!time}(k,O(\log{n}),n^{O(1)})_{FBS}$ and take any (frozen) blank-sensitive aux-$k$-sna $M$ that recognizes $L$ in $n^{O(1)}$ time and using $O(\log{n})$ space.
Lemma \ref{relax-FBS} ensures that $M$ can be further assumed to be weakly depth-susceptible and make no frozen blank turn. By Lemma \ref{aux-sna-to-circuit}, we can simulate $M$ using a logspace-uniform family of semi-unbounded fan-in $k$-cascading circuits of depth $O(\log{n})$ and size $n^{O(1)}$. Therefore, $L$ belongs to $\mathrm{CCIRcasc,\!alt,\!size}(k,O(\log{n}),n^{O(1)})$.

Next, we intend to show that $\mathrm{CCIRcasc,\!alt,\!size}(k,O(\log{n}),n^{O(1)})$ is included in  $\auxsna\mathrm{depth,\!space,\!time}(k,O(\log{n}),n^{O(1)})_{FBS}$. Consider an arbitrary language $L$ in $ \mathrm{CCIRcasc,\!alt,\!size}(k,O(\log{n}),n^{O(1)})$. There is a logspace-uniform family $\{C_n\}_{n\in\nat}$ of semi-unbounded fan-in $k$-cascading circuits of size $n^{O(1)}$ and depth $O(\log{n})$ for $L$.
By Lemma \ref{circuit-to-aux-sna}, this circuit family can be simulated by an appropriate aux-$k$-sna with (frozen) blank sensitivity running in $n^{O(1)}$ time using $O(\log{n})$ space. This implies that $L$ is in $\auxsna\mathrm{depth,\!space,\!time}(k,O(\log{n}),n^{O(1)})_{FBS}$.
\end{proofof}

%%%
\subsection{Proof of Lemma \ref{aux-sna-to-circuit}}\label{sec:lemma-proof}

In Section \ref{sec:simulation}, we have proven Theorem   \ref{circuit-character} with the help of two supporting lemmas, Lemmas \ref{aux-sna-to-circuit} and \ref{circuit-to-aux-sna}.
This subsection provides the missing proof of Lemma  \ref{aux-sna-to-circuit}.
To begin with the actual proof, let us focus on a weakly depth-susceptible  aux-$k$-sna $M$ with runtime bound $t(n)$ and work space bound $s(n)$ and assume that $M$ is right-turn restricted and makes no frozen blank turn on all inputs.
Recall from Section \ref{sec:useful} a tree-like structure, which represents a computation path of $M$ on input $x$. By Lemma \ref{tree-like-realizable}, it suffices to ``describe'' such a tree-like structure using an appropriate logspace-uniform family $\CC=\{G_n\}_{n\in\nat}$ of $k$-cascading circuits.

%%%%%%
%%%%%%

\begin{figure}[t]
\centering
\includegraphics*[height=5.0cm]{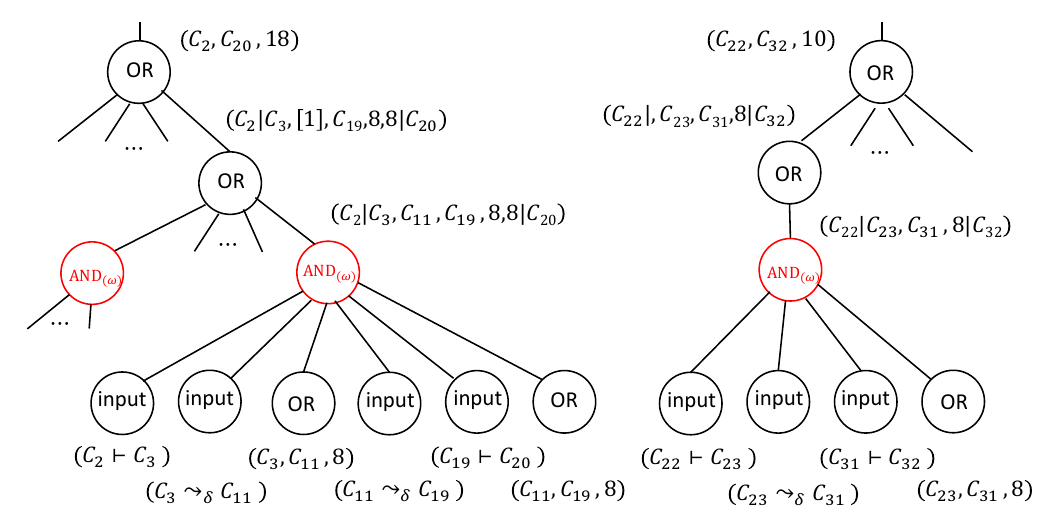}
\caption{Part of a circuit construction, where  $C_t$ denotes a surface configuration at section time $t$ described in Fig.~\ref{fig:storage-tape-head-move}. The AND$_{(\omega)}$ gate with label $(C_2\mmid C_3,C_{11},C_{19},8,8\mmid C_{20})$ has no link to/from any AND$_{(\omega)}$ gate. This situation corresponds to the node $(C_3,C_{11},C_{19},8,8)$ in Fig.~\ref{fig:computation-tree}. Input gates with labels ``$(C_i\leadsto_{\delta}C_j)$'', ``$(C_i\vdash C_j)$'', and ``$(C_i:RT)$'' take the truth values of ``$C_i\leadsto_{\delta}C_j$'', ``$C_i\vdash C_j$'', and ``$C_i$ is a right turn'', respectively.}\label{fig:circuit-description-01}
\end{figure}

%%%%%
%%%%%

%%%%%%
%%%%%%

\begin{figure}[t]
\centering
\includegraphics*[height=5.2cm]{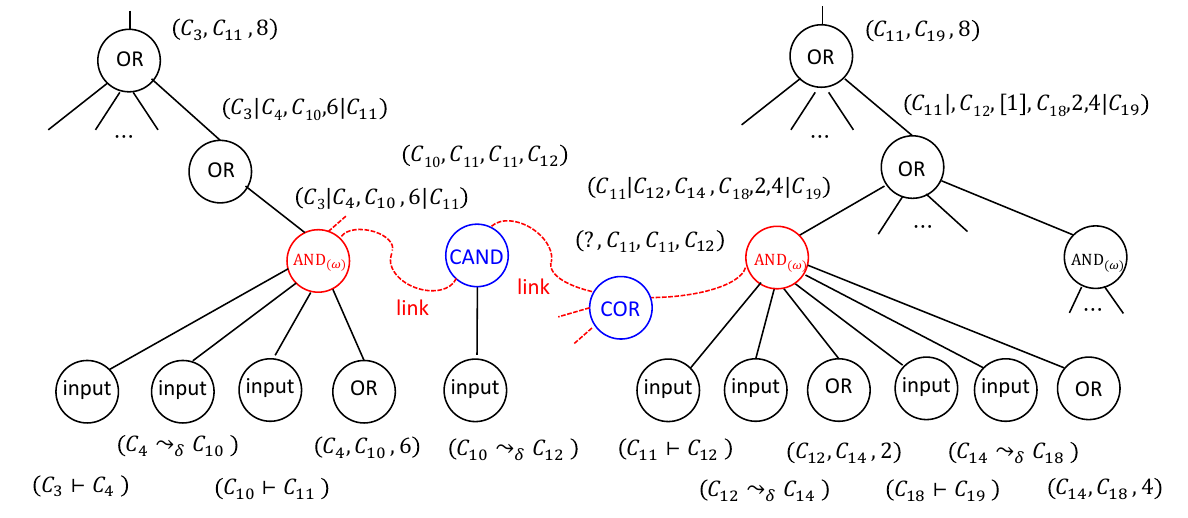}
\caption{Part of a circuit construction concerning linking, where  $C_t$ denotes a surface configuration at section time $t$ in Fig.~\ref{fig:storage-tape-head-move}. The two top OR gates are the same as the bottom OR gates in Fig.~\ref{fig:circuit-description-01}. Due to the space limitation, all  AND$_{(\omega)}$, COR, and CAND gates in a cascading block are intentionally depicted in the same  level.
The terminology
``COR'' refers to OR located in a cascading block.
Because the corresponding tape cell is not yet frozen, the link from the AND$_{(\omega)}$ gate labeled $(C_{3}\mmid C_{4},C_{10},6\mmid C_{11})$, for example, to the AND$_{(\omega)}$ gate with label $(C_{11}\mmid C_{12},C_{14},C_{18},2,4 \mmid C_{19})$ via a CAND gate and a COR gate corresponds to the green dotted line between $(C_4,C_{10},6)$ and $(C_{12},C_{14},C_{18},2,4)$ depicted in Fig.~\ref{fig:computation-tree}.}\label{fig:circuit-description-02}
\end{figure}

%%%%%
%%%%%

%%%%%%
%%%%%%

\begin{figure}[t]
\centering
\includegraphics*[height=4.8cm]{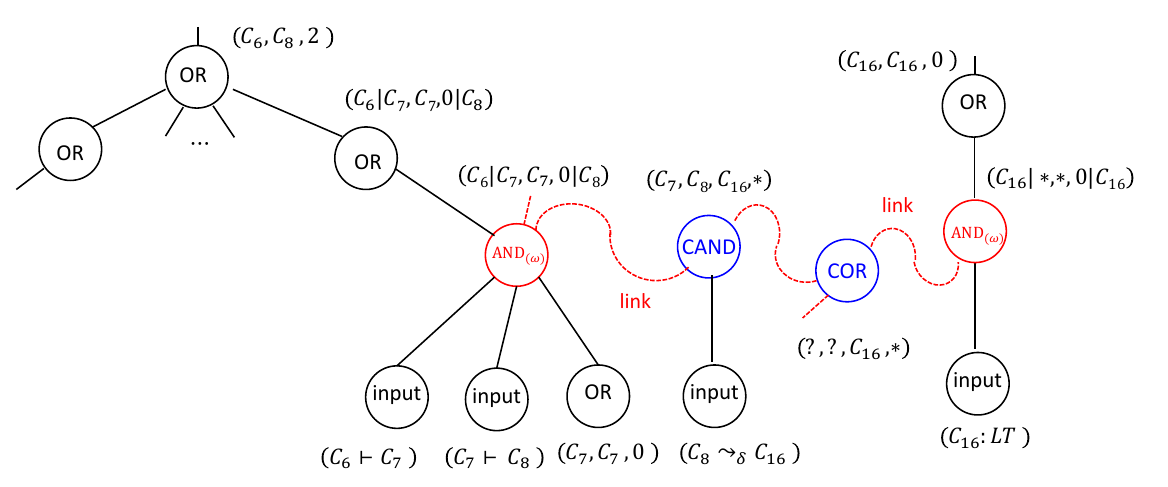}
\caption{Part of a circuit construction. The AND$_{(\omega)}$ gate labeled $(C_6\mmid C_7,C_7,0\mmid C_8)$ is linked to the AND$_{(\omega)}$ gate labeled $(C_{16}\mmid *,*,0\mmid C_{16})$ via a COR gate and a CAND gate.  This link corresponds to the green dotted line between $(C_6,C_8,2)$ and $(C_{16},C_{16},0)$ in Fig.~\ref{fig:computation-tree}.}\label{fig:circuit-description-third}
\end{figure}

%%%%%
%%%%%

Hereafter, we arbitrarily fix an input $x$ and set $n=|x|$. To simplify our construction of $G_n$, described below, we assume that $M$'s storage-tape head must return to the start cell (i.e., cell $0$) when $M$ halts. This can be achieved by moving the tape head leftward by at most $t(n)$ extra tape-head moves.
This implies that $\mathrm{Time}_M(x)\leq 4t(|x|)$. For simplicity, let  $t'(n)$ denote $4t(n)$ and let $SC$ stand for the set of all surface configurations of $M$ on $x$.

All examples given in the following argument correspond to  Fig.~\ref{fig:storage-tape-head-move} and Fig.~\ref{fig:computation-tree}--\ref{fig:circuit-description-02}.

%%%%

\ms

(I)
To obtain the desired circuit $G_n$, we first construct a ``skeleton'' of  $G_n$ by ignoring cascading blocks and later we will modify it to its complete form by adding missing cascading blocks. Therefore,
this skeleton lacks CAND gates as well as COR gates. These missing gates will be introduced later in (II).
The circuit $G_n$ is constructed inductively layer by layer from its root to leaves in the following fashion.

Let us recall that each computation path of $M$ on $x$ can be depicted in terms of its corresponding tree-like structure. As a quick example, a tree-like structure of Fig.~\ref{fig:computation-tree} represents a computation path of Fig.~\ref{fig:storage-tape-head-move}, which is produced by the movement of a storage-tape head.
We fix a computation path of $M$ on $x$. The subsequent construction of $G_n$ is intuitively intended to represent this tree-like structure, say, $\TT$.

The output gate of $G_n$ takes inputs from OR gates whose labels have the form  $(P_0,P_{acc},i)$ for all numbers $i\in[\mathrm{Time}_{M}(x)]$, where $P_0$ is the initial surface configuration of $M$ on the input $x$ and $P_{acc}$ denotes a ``unique'' accepting surface configuration of $M$ on $x$.
Note that, if $M$ halts in an accepting state along the computation path, then the configuration duo  $(P_0,P_{acc},\mathrm{Time}_{M}(x))$ is realizable if $P_{acc}$ corresponds to the accepting configuration of the computation path. In our example, if a movement of a tape head in Fig.~\ref{fig:storage-tape-head-move} ends in an accepting configuration, then $(C_0,C_{34},34)$ in Fig.~\ref{fig:computation-tree} is the configuration duo $(P_0,P_{acc},\mathrm{Time}_{M}(x))$.

%%%

As labels of input gates of $G_n$, we use labels of the form $(P\leadsto_{\delta}R)$, $(P\vdash R)$, $(P:LT)$, and $(P:RT)$, which evaluate the truth values of the corresponding statements: ``$P\leadsto_{\delta}R$'', ``$P\vdash R$'',
``$P$ is a left turn'', and ``$P$ is a right turn'', respectively.
Moreover, we assign the following labels to all other gates.
Each OR gate has a label of the form $(P,R,t)$, $(P'\mmid P_1,P_2,t\mmid R')$, and $(P'\mmid P_1,[1],P_2,t_1,t_2\mmid R')$ for $P,R,P_1,P_2\in SC$, $P',R'\in SC\cup\{*\}$, and $t\in[0,\mathrm{Time}_{M}(x)]_{\integer}$.
Each of AND and AND$_{(\omega)}$  gates has a label of the form $(P\mmid P_1,P_2,t\mmid R)$ and $(P\mmid P_1,P_2,P_3,t_1,t_2\mmid R)$ for $P_1,P_2\in SC$, $P,R\in SC\cup\{*\}$, and $t_1,t_2\in[0,\mathrm{Time}_{M}(x)]_{\integer}$.
Note that the maximum bit size to ``express'' these labels is $O(\log{n})$ since each surface configuration can be expressed using $O(\log{n})$ bits and $k$ is a constant.
Since $P$ and $R$ are allowed to take the value of $*$, it is convenient for us to define $depth(*)$ to be $k$ in the following argument.

In what follows, we wish to construct a segment of $G_n$, in particular,  between two OR gates whose labels are configuration duos of the form $(P,R,t)$.
The labels of these OR gates directly correspond to configuration duos that appear in the tree-like structure $\TT$.
Hence, our goal is to construct an appropriate subcircuit, which fills the gap between those two consecutive configuration duos.
Assume by induction hypothesis that we have already built an OR gate $g$ labeled with $(P,R,t)$.

(1) Let us consider the case of $P=R$ and $t=0$. This case directly corresponds to a leaf of $\TT$. A quick example of such a leaf  in Fig.~\ref{fig:computation-tree} is $(C_7,C_7,0)$ as well as $(C_{16},C_{16},0)$.

(i) We begin with the case of $depth(P)=k$.  This implies that $M$'s storage-tape head should make a left turn. However, this is a contradiction because $M$ makes no frozen blank turn.

(ii) We then assume that $depth(P)<k$. We first attach the AND$_{(\omega)}$ gate of label $(P\mmid *,*,0\mmid R)$ to $g$. To this AND$_{(\omega)}$ gate, we further attach the input gate labeled $(P:LT)$. In Fig.~\ref{fig:circuit-description-third}, since $depth(C_{18})<4$, the OR gate labeled $(C_{16},C_{16},0)$ is directly connected to the AND$_{(\omega)}$ gate labeled $(C_{16}\mmid *,*,0\mmid C_{16})$, which has the input gate labeled $(C_{16}:LT)$ as a child node.

(2) Next, we consider the case of $P\neq R$ and $t\geq2$. An example of such a case of the form $(P,R,t)$ is the node labeled $(C_2,C_{20},18)$ in Fig.~\ref{fig:computation-tree}.
To the gate $g$, we connect a number of OR gates labeled $(P\mmid P_1,P_2,t-2\mmid R)$ and $(P\mmid P_1,[1],P_2,t_1,t_2\mmid R)$  for all possible $P_1$ and $P_2$ of the same depth in $SC$, where  $t_1,t_2\in\nat^{+}$ and $t_1+t_2=t-2$.
We call each of these OR gates by $h$.
In what follows, we discuss two subcases associated
with the choice of $h$.

(i) Consider the case where $h$ has a label $(P\mmid P_1,P_2,t-2\mmid R)$.
To this gate $h$, we attach  a unique  AND$_{(\omega)}$ gate of label $(P\mmid P_1,P_2,t-2\mmid R)$.
As a quick example, in Fig.~\ref{fig:circuit-description-02}, $h$ is the OR gate labeled $(C_{3}\mmid C_4,C_{10},6\mmid C_{11})$, which has the AND$_{(\omega)}$ gate with the label $(C_3\mmid C_4,C_{10},6\mmid C_{11})$ as a child node.
We further demand that, whenever $t=2$, $P_1=P_2$ must follow because this is a leaf of $\TT$, and thus $M$ should make a left turn.
In Fig.~\ref{fig:circuit-description-third}, this special case of $h$ corresponds to the OR gate labeled $(C_6\mmid C_7,C_7,0\mmid C_8)$, which has the AND$_{(\omega)}$ gate with label $(C_6\mmid C_7,C_7,0\mmid C_8)$ as a child.
We call by $f$ this unique AND$_{(\omega)}$ gate.

(a)
Assume that $P_1$ (and thus $P_2$) is frozen blank (i.e., $P_1$'s storage-tape cell contains the frozen blank symbol $B$).
In Fig.~\ref{fig:circuit-description-02}, $f$ is the AND$_{(\omega)}$ gate labeled $(C_3\mmid C_4,C_{10},6\mmid C_{11})$ such that it further has the OR gate labeled $(C_{4},C_{10},6)$ as well as the input gates of label  $(C_{3}\vdash C_{4})$ as its children.
An example of $f$ in Fig.~\ref{fig:circuit-description-01} is the AND$_{(\omega)}$ gate with label $(C_{22}\mmid C_{23},C_{31},8\mmid C_{32})$, which has the OR gate of label $(C_{23},C_{31},8)$.

(b) In the case where $P_1$ is not frozen blank, we attach to $f$ the OR gate labeled $(P_1,P_2,t-2)$ alone (without any input gate).
To the gate $f$, we further attach the OR gate whose  label is of the form  $(P_1,P_2,t-2)$ as well as the input gates with labels of the form $(P\vdash P_1)$ and $(P_2\vdash R)$.

We remark that $(P,R,t)$ is realizable if $(P_1,P_2,t-2)$ is realizable, all input gates are true, and, moreover, linked COR gates are all true.

(ii) Assume that the label of $h$ is $(P\mmid P_1,[1],P_2,t_1,t_2\mmid R)$.  To the gate $h$, we attach a number of AND$_{(\omega)}$ gates of labels $(P\mmid P_1,P',P_2,t_1,t_2\mmid R)$ for all possible $P'\in SC$ of the same depth. This corresponds to the situation that $M$ makes a right turn.
We call such an AND$_{(\omega)}$ gate by $f$.
In our example in Fig.~\ref{fig:circuit-description-01},  if the label of $h$ is $(C_2\mmid C_3,[1],C_{19},8,8\mmid C_{20})$, then  the AND$_{(\omega)}$  gate of label $(C_2\mmid C_3,C_{11},C_{19},8,8\mmid C_{20})$ is connected to this $h$.

Furthermore, we attach two OR gates labeled $(P_1,P',t_1)$ and $(P',P_2,t_2)$ together with the input gates of labels $(P\vdash P_1)$ and $(P_2\vdash R)$. For example in Fig.~\ref{fig:circuit-description-02}, the AND$_{(\omega)}$ gate labeled $(C_{11}\mmid C_{12},C_{14},C_{18},2,4\mmid C_{19})$ is connected to two OR gates of labels $(C_{12},C_{14},2)$ and $(C_{14},C_{18},4)$.

%%%%

Assume that $\{(P'_j,P'_{j+1},t_j)\mid j\in[m]\}$ is a set of labels of all OR gates generated above. If all configuration duos in this set are  realizable and all other generated input gates are true, then $(P,R,t)$ is also realizable by Lemma \ref{realizable}.

\ms

This is the end of the construction of the skeleton of $G_n$.

\ms

(II)
Finally, we  inductively modify the above-constructed skeleton of $G_n$ and  introduce appropriate CAND and COR gates to connect between two linked AND$_{(\omega)}$ gates, say, $g_1$ and $g_2$, each of which has one of the labels of the form: $(P\mmid P_1,P_2,t\mmid R)$ and $(P\mmid P_1,P',P_2,t_1,t_2\mmid R)$.

(1) We first consider the case where $g_1$ has label $(P\mmid P_1,P_2,t\mmid R)$ and $g_2$ has either $(P'\mmid P'_1,P'_2,t'\mmid R')$ or $(P'\mmid P'_1,P'_2,P'_3,t'_1,t'_2\mmid R')$ as its label.
To this gate $g_1$, we attach an CAND gate whose label is $(P_2,R,P',P'_1)$, which has an input gate, as its child, with label $(P_2\leadsto_{\delta}P'_1)$ if $R=P'$, and with label $(R\leadsto_{\delta}P')$ if $R\neq P'$.
To the gate $g_2$, we further attach a COR gate with label $(?,R,P',P'_1)$ if $R=P'$, and with label $(?,?,P',P'_1)$ if $R\neq P'$.
This case is exemplified in Fig.~\ref{fig:circuit-description-02} with two AND$_{(\omega)}$ gates labeled $(C_{3}\mmid C_{4},C_{10},6\mmid C_{11})$ and $(C_{11}\mmid C_{12},C_{14},C_{18},2,4\mmid C_{19})$. Between them, there are the CAND gate labeled $(C_{10},C_{11},C_{11},C_{12})$ and the COR gate labeled $(?,C_{11},C_{11},C_{12})$.

(2) Consider the case where $g_1$'s label is $(P\mmid P_1,P_2,P_3,t_1,t_2\mmid R)$ and $g_2$'s label is $(P'\mmid P'_1,P'_2,t'\mmid R')$. To the gate $g_1$, we attach a CAND gate labeled $(P_3,R,P',P'_1)$, which further has an input gate, as a child, with label $(P_3\leadsto_{\delta}P'_1)$ if $R=P'$, and with label $(R\leadsto_{\delta}P')$ if $R\neq P'$. To the gate $g_2$, we attach the COR gate with label $(?,R,P',P'_1)$ if $R=P'$, and with label $(?,?,P',P'_1)$ if $R\neq P'$.
An example of this case is two AND$_{(\omega)}$ gates with labels $(C_6\mmid C_7,C_7,0\mmid C_8)$ and $(C_{16}\mmid *,*,0\mmid C_{16})$ in Fig.~\ref{fig:circuit-description-third}. There are the CAND gate labeled $(C_7,C_8,C_{16},*)$ and the COR gate labeled $(?,?,C_{16},*)$ between those two AND$_{(\omega)}$ gates.

%%%%

Since a COR gate is an OR gate and a CAND gate is an AND gate, if the AND$_{(\omega)}$ gate labeled $g_1$ is true and all input gates attached to its associated CAND gates are true, then the AND$_{(\omega)}$ gate labeled $g_2$ is true.

%%%%

\ms

Since there are only $2^{O(s(n))}$ configurations of $M$ and $\mathrm{Time}_M(x)\leq t'(n)$, the maximum number of ORs with labels of the form $(P,R,i)$ along any path from the root to a leaf is $O(t(n))\cdot 2^{O(s(n))}$. We also note that $G_n$ is semi-unbounded because of the bounded fan-in of AND and AND$_{(\omega)}$ gates and no consecutive use of them. Thus, the size of $G_n$ is upper-bounded by $O(t(n))\cdot 2^{O(s(n))}$.

Each cascading block reflects links among configuration duos on the same tape cell number of the tree-like structure. For example, the cascading block of  Fig.~\ref{fig:circuit-description-02} corresponds to the tape cell number 3 of the tree-like structure of Fig.~\ref{fig:computation-tree}.
Thus, the cascading length is upper-bounded by $k$.

We then argue that $\CC=\{G_n\}_{n\in\nat}$ correctly solves the language $L(M)$ and $n=|x|$. Assume that $x\in L(M)$. Take an accepting computation path of $M$ on $x$ and fix it.
Along this path, we need to claim  that $G_n(x)$ outputs $1$. This claim can be proven by induction on the construction process of $G_n$ from the computation path of $M$ on $x$ since the construction process precisely expresses the tree-like structure obtained along the computation path.

This completes the proof of Lemma \ref{sec:lemma-proof}.

%%%
\subsection{Proof of Lemma \ref{circuit-to-aux-sna}}\label{sec:lemma-proof-II}

In this subsection, we provide the
proof of Lemma \ref{circuit-to-aux-sna} to complete the proof of Theorem  \ref{circuit-character}.

We begin with the lemma's proof by taking an arbitrary logspace-uniform family $\CC=\{G_n\}_{n\in\nat}$ of semi-unbounded fan-in $k$-cascading circuits.
Since $\CC$ is logspace-uniform, there exists a log-space DTM $N$ that produces the binary encoding $\pair{G_n}$ of $G_n$ from the input $1^n$ for each index $n\in\nat$.
We wish to define an aux-$k$-sna $M$ so that it simulates the behaviors of each circuit $G_n$ on all inputs of length $n$. In particular, $M$'s storage tape will be used to simulate the evaluation of a cascading block.

As the label of a gate, for convenience, we use a positive integer. In what follows, for simplicity, we often identify a gate with its label. To store the information on each gate onto the storage tape, we use the following encoding scheme of gate information.
For a given gate $g$ whose children are gates $h_1,h_2,\ldots,h_m$ drawn in order from left to right, we define its encoding $\pair{g,h_1,h_2,\ldots,h_m}$ to be the string  $\pair{g,h_1}\pair{g,h_1}\cdots \pair{g,h_m}$ expressing the label of $g$, where $\pair{g,h_i}=\pair{g}\dollar\pair{h_i}\dollar\sigma\#$ for $\sigma\in\{0,1,2\}$ and $\pair{g}$ and $\pair{h_i}$ are the binary representations of labels of $g$ and $h_i$, respectively, and $\dollar$ and $\#$ are designated symbols not in $\{0,1,2\}$.
The last sign $\sigma$ indicates the gate type of OR, AND, and AND$_{(\omega)}$ if $\sigma$ is $0$, $1$, and $2$, respectively.
If there is no cascading block, then we can evaluate the outcome of $G_n$ in a standard way of performing depth-first search as in, e.g., \cite{Ven91}.

%%%

Let us consider an arbitrary gate of $G_n$. We employ a \emph{modified depth-first search} as an underlying strategy to evaluate the outcome of $G_n$ on length-$n$ input $x$.
We inductively follow paths of $G_n$ from the root to leaves and vice versa by storing the information on AND$_{(\omega)}$ gates onto blocks of cells of the storage tape.
Since there are polynomially many gates in $G_n$, the labels of gates of $G_n$ are expressed in $O(\log{s(n)})$ bits, and thus we can store them using a block of $O(\log{s(n)})$ consecutive cells of the storage tape as if this block is a single cell by sweeping the block of cells from left to right and from right to left, as noted in Section \ref{sec:k-sda's}. In the following description of $M$, the phrase ``cell block'' collectively refers to all these consecutive tape cells used to store the information on the label of each gate in $G_n$.

%%%%%%%

Remember that a cascading circuit can be split into a number of disjoint cascading semi-circuits hung on to a basic subcircuit, which is a simple  tree-form subcircuit built by AND or OR gates only. Let $C_n$ denote any cascading semi-circuit of $G_n$.
Hereafter, we will explain how to evaluate each gate in $C_n$.
To help understand the following evaluation procedure, we intend to exemplify each evaluation step for $C_n$ using the circuit example of  Fig.~\ref{fig:circuit-simulation}.
In particular, we focus on the subcircuit rooted at gate 2 of Fig.~\ref{fig:circuit-simulation}, which forms a $2$-cascading semi-circuit. In our evaluation procedure, we use an $O(\log{n})$ space-bounded auxiliary tape to store the current gate label and search the encoding $\pair{G_n}$ written on an input tape for its connecting gates. We begin with gate $2$ and continue traversing gates $3$ and $6$. During this process, we record the label of gate 2 together with gate $4$ on a storage tape and move to its left child, gate 3. This record will be used later to traverse a subcircuit rooted at gate $4$.
We further record the label of gate 3 together with gate $7$ and move to its left child, gate 6.
If we nondeterministically choose gate 23, then we follow a corresponding link to gate 11. Here, we record the label of gate 11 as well as gate $6$ (which is needed for a backtracking purpose). We then continue traversing through gates 26 to gate 31. If the value of gate 31 is ``true'', then we follow a link to gate 26 and record its label. From gate 31, we start backtracking upward to gate 3 as follows. We first evaluate gate 23. If its value is ``true'', then we revise the record on gates 23 and 6 by erasing the record of gate 6 and move to gate 3. On reaching gate 3, we erase the record on gates $3$ and $7$ written on the storage tape and then start traversing from gate 7 to gate 11. At gate 11, we read the record on gate 11 from the storage tape to check that we indeed reach gate 11. If this is the case, then we evaluate gate 11 and start backtracking upward to gate 3 and to gate 2. We then erase the record on gates $2$ and $4$. From gate 2, we traverse through gates $4$, $8$, $12$, $24$, and $27$.
It is important to note that there is no gate between gate $8$ and gate $28$ having leafless fan-in $\geq2$.
Since the modification of the auxiliary tape is strictly prohibited by the weak depth-susceptibility, we move the storage-tape head to a non-frozen-blank tape cell and start simulating gates $12$, $24$, and $27$.
When we reach gate 28, we retrieve the record on gate 28 and the value of gate 31 from the storage tape and evaluate the value of gate 28.
We then backtrack upward to gate 2 to complete the evaluation of the target subcircuit $C_n$.

In what follows, we intend to formalize the above evaluation procedure.

\ms

(1) Initially, we set $g$ to be the root (i.e., the output gate) of $C_n$. As the first move of $M$, we store (the label of) the root onto the initially-blank cell block of the storage tape and move the storage-tape head to the right.

(2) Assume that $g$ is an AND gate (of fan-in $2$ and fan-out $1$), which is not in any cascading block. There are exactly two incoming gates connected directly to $g$. Let $p_1$ and $p_2$ be these two gates appearing in $G_n$ in order from left to right. In Fig.~\ref{fig:circuit-simulation}, at gates 2 and 3, we record them onto the storage tape.

(a) If this is the first time visiting $g$, then we write $(g,p_2)$ onto an  initially-blank cell block. We then reset $g$ to be $p_1$ and move to the right for the next inductive step. This process is intended to evaluate the outcome of a subcircuit rooted at $p_1$.

(b) If we reach $g$ on the way of backtracking upward to the root, then we access the current cell block to retrieve the gate information of the form $(g,p_2)$. On reaching $g$, we reset $g$ to be $p_2$, delete the record on $(g,p_2)$ by replacing it by a series of $B$s, make a turn, and move forward to $p_2$ so that we will try to evaluate the outcome of a subcircuit rooted at $p_2$. Backtracking made after processing $p_2$ indicates that a subcircuit rooted at $g$ is already evaluated.

(3) Assume that $g$ is an OR gate, which is not in any cascading block but is  directly connected to input gates, OR gates, AND gates, and AND$_{(\omega)}$ gates. Let $h_1,h_2,\ldots,h_m$ denote all those gates, enumerated from left to right. In this case, we nondeterministically choose one of them, say, $h_i$, reset $g$ to be this $h_i$, and move forward.

(4) Assume that $g$ is an AND$_{(\omega)}$  gate (of fan-in $\leq c$ and fan-out $\geq1$).
It is possible to check whether or not the current AND$_{(\omega)}$ gate is in a certain cascading block of length $\geq2$. When it is in a cascading block of length $1$, nevertheless, we do not need to record its information onto the storage tape because it has leafless fan-in $1$ and fan-out $1$ and all connecting input gates can be evaluated directly from the input given to $C_n$. Thus, there is no need to read any record of an AND$_{(\omega)}$ gate from the storage tape.
By this moment, the record on a top AND$_{(\omega)}$ gate of a cascading block has been erased from the storage tape and only $B$s are seen on the storage tape.
It is important to note that we should avoid any movement of input-tape and auxiliary-tape heads while reading $B$ on the storage tape. For this purpose, we first move the storage-tape head to a non-$B$ tape cell before the start of evaluation of each gate and start evaluating the gate by making storage stationary moves.
Next, we consider the case where $g$ is in a certain cascading block of length $\geq2$, say, $D$.

(a) Assume that $g$ is the bottom AND$_{(\omega)}$ gate of $D$. If $g$ is visited for the first time, then we  nondeterministically choose a CAND gate, say, $h_1$ attached to $g$ and then move the storage-tape head rightward to $\Box$. There are a unique COR gate, say, $h_2$ connected directly from $h_1$ and another AND$_{(\omega)}$ gate, say, $g'$ connected from $h_2$. We then evaluate an input gate attached to $h_1$. If it is evaluated to be ``false'', then we immediately reject. Now, we assume that all input gates are evaluated to be ``true''. We first write $(f,g')$ on the storage tape, where $f$ is the gate visited just before reaching $g$. This information on $f$ is needed for backtracking the path that we have taken so far.
If $g$ is visited during the backtracking, then we read $(f,g')$ from the storage tape to find $f$ to return. We also change $(f,g')$ to $(g')$.

(b) Assume that $g$ is the top AND$_{(\omega)}$ gate of $D$. If $g$ is visited for the first time, then we read the information $(g',B)$ from the storage tape. We reject if $g\neq g'$. Otherwise, we evaluate all input gates connecting $g$ to evaluate $g$, change $(g',B)$ to frozen blank, and move to the connecting gate. If $g$ is visited during the backtracking, then we pass through $g$.

(c) Otherwise, let $g$ denote the $i$th AND$_{(\omega)}$ gate, say, $g_i$  from the bottom AND$_{(\omega)}$ gate but neither the bottom nor top AND$_{(\omega)}$ gate.
Firstly, we check if the record written on the storage tape is exactly  $(g_i)$. If not, we reject. Otherwise, we compute the $(i+1)$th AND$_{(\omega)}$ gate, say, $g_{i+1}$ connecting from $g$ and record $(f,g_{i+1})$ on the storage tape if all attached input gates to $g$ are evaluated to be ``true'', where $f$ is the gate visited just before reaching $g$. We then move to the next gate.

(5) Assume that $g$ is a leaf (i.e., an input gate) of $C_n$. Note that $g$ is of the form either $x$ or $\overline{x}$ for an input variable $x$. Thus, we can evaluate $g$ using the input given to $C_n$. If $g$ is evaluated to be ``false'', then we immediately enter a rejecting state and halt since this does not contribute to making $C_n$ output ``true''.
Otherwise, we force the storage-tape head to make a left turn and keep moving to the left as we backtrack a path of $C_n$ toward the root.

(6) In the case where the storage-tape head returns to the start cell, we accept if the root of $C_n$ is evaluated to be ``true'', and we reject otherwise.

\ms

The above simulation nondeterministically checks all paths of the cascading semi-circuit $C_n$ from its leaves to the root and evaluates all gates correctly.

By the above construction together with $N$, we conclude that $M$ is well-defined. Since $C_n$ has size at most $s(n)$, $M$ runs in time $O(s(n))$. The space usage of the auxiliary (work) tape of $M$
is $O(\log{s(n)})$. Notice that the content of each storage-tape cell is modified at most $k$ times and, after the $k$th modification, the cell becomes frozen forever.

Next, we will explain how to simulate the cascading circuit $G_n$, which may consist of two or more cascading semi-circuits. More precisely, we assume that $G_n$ is made up of a basis subcircuit, say, $B_n$ and a number of cascading semi-circuits, say, $C^{(1)},C^{(2)},\ldots,C^{(m)}$ attached to $B_n$. Note that these subcircuits $C^{(i)}$s are disjoint in $G_n$. To make our simulation simpler, we first modify $B_n$ to be \emph{leveled} by inserting extra OR gates of fan-in $1$ so that all leaves, which are connected to the roots of $C^{(1)},C^{(2)},\ldots,C^{(m)}$ of $B_n$, lie on the same level.
We first simulate $B_n$ by running depth-first search using a storage tape by recording the information on the choice of inputs to each AND gate whenever we encounter an AND gate. If we reach a leaf $g$ of $B_n$, then we start the aforementioned evaluation procedure to evaluate the cascading semi-circuit rooted at $g$.
Since any two cascading semi-circuits are disjoint, after each cascading semi-circuit is simulated and evaluated, we move the storage-tape head away to a new $\Box$, place a specific ``marker'' to indicate the start of a new simulation area, and begin a simulation of another cascading semi-circuit using only the tape area on the right side of the marker. In the end, we correctly simulate $G_n$.

%%%
%%%
\section{In Connection to P}\label{sec:P-connection}

In Section \ref{sec:circuit-family}, we have argued how families of polynomial-size semi-unbounded fan-in $k$-cascading circuits with logarithmic alternations characterize (frozen) blank-sensitive aux-$2k$-sna's running in polynomial time using only $O(\log{n})$ work space. Here, we turn our attention to $\p$ and argue how to relate cascading circuit families to $\p$. To generalize cascading circuits, we intend to remove the upper bounds of  cascading length and alternations of cascading circuit families. We then obtain a cascading-circuit characterization of $\p$.

\begin{proposition}\label{P-cascading-circuit}
The complexity class $\p$ coincides with $\mathrm{CCIRcasc,\!alt,\!size}(n^{O(1)}, n^{O(1)}, n^{O(1)})$.
\end{proposition}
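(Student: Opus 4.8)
The statement asserts two inclusions: $\mathrm{CCIRcasc,\!alt,\!size}(n^{O(1)},n^{O(1)},n^{O(1)})\subseteq\p$ and its reverse. The first is routine, and I would not even use the cascading machinery for it. Semantically a $k$-cascading circuit is an ordinary Boolean circuit (an $\mathrm{AND}_{(\omega)}$ or a CAND gate is an AND gate and a COR gate is an OR gate), and a family of size $n^{O(1)}$ has at most $n^{O(1)}$ gates and $n^{O(1)}$ wires, so every gate has polynomially bounded fan-in and fan-out. Given the logspace uniformity, on an input $x$ with $|x|=n$ a deterministic Turing machine can first output $\pair{G_n}$ and then evaluate the gates of $G_n$ in topological order, all in time $n^{O(1)}$; hence the recognized language lies in $\p$.

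For the reverse inclusion the plan is to feed a generic polynomial-time computation into the apparatus already built in Sections \ref{sec:basics}--\ref{sec:circuit-family}. Fix $L\in\p$ together with a deterministic Turing machine $M_0$ recognizing $L$ within $p(n)$ steps for a polynomial $p$. First I would construct an aux-$k$-sna $M$, where $k=k(n)$ is an even polynomial bound chosen so that $k>p(n)$ for all relevant $n$, which simulates $M_0$ step by step by keeping the contents of $M_0$'s work tape on $M$'s (depth-$k$) storage tape: the storage-tape head tracks $M_0$'s work-tape head exactly, so that every visit to a storage cell raises its depth by $1$ (by $2$ at a turn) while the tape symbol records the corresponding work symbol. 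Because $k$ exceeds the total number of steps, no storage cell ever reaches depth $k$; consequently nothing is ever frozen, $M$ makes no frozen blank turn, and $M$ is (vacuously) (frozen) blank sensitive. The auxiliary tape of $M$ is used only as $O(\log n)$ scratch space, so $M$ runs in time $n^{O(1)}$ with $L(M)=L$. Next, by Lemma \ref{relax-FBS} followed by Lemma \ref{once-right-turn}, $M$ can be turned into an equivalent aux-$k$-sna $M'$ that is weakly depth-susceptible, right-turn restricted, makes no frozen blank turn, runs in polynomial time, and uses $O(\log n)$ auxiliary space. Finally, applying Lemma \ref{aux-sna-to-circuit} to $M'$ produces a logspace-uniform family of semi-unbounded fan-in $(k/2)$-cascading circuits of alternation $O(\log n)$ and size $n^{O(1)}\cdot 2^{O(\log n)}=n^{O(1)}$; since $k/2$ is a polynomial and $O(\log n)$ is $n^{O(1)}$, this yields $L\in\mathrm{CCIRcasc,\!alt,\!size}(n^{O(1)},n^{O(1)},n^{O(1)})$, which establishes the equality.

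The main obstacle I expect is bookkeeping under a non-constant storage depth. The constructions behind Lemmas \ref{relax-FBS}, \ref{once-right-turn}, and \ref{aux-sna-to-circuit} were written with a constant $k$ in mind, whereas here $k$ is a polynomial in $n$; a few quantities those proofs keep in the finite control --- chiefly the count of frozen symbols inside a cell block in the proof of Lemma \ref{once-right-turn}, and the bit lengths of gate labels and surface configurations in the proof of Lemma \ref{aux-sna-to-circuit} --- must instead be carried on the $O(\log n)$-bounded auxiliary tape, and one has to verify that this reshuffling preserves the polynomial runtime, the $O(\log n)$ space bound, and the logspace uniformity of the resulting circuit family. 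One must also check that the step-by-step simulation of $M_0$ by $M$ always produces legal, depth-increasing storage moves obeying the depth-$k$ requirement and the stationary requirement of aux-$k$-sna's. Once these points are settled, the proposition follows; and since every one-way $k$-sna is a special case of an aux-$k$-sna captured by the right-hand side while $\p$ is closed under $\dl$-m-reductions, this also underlies the stated upper bound $\logksna{k}\subseteq\p$.
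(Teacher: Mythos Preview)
Your argument for $\mathrm{CCIRcasc,\!alt,\!size}(n^{O(1)},n^{O(1)},n^{O(1)})\subseteq\p$ is correct and is essentially the paper's: a cascading circuit is semantically an ordinary Boolean circuit of polynomial size, and a polynomial-time DTM can evaluate it gate by gate (the paper phrases this as evaluating layer by layer).

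For $\p\subseteq\mathrm{CCIRcasc,\!alt,\!size}(n^{O(1)},n^{O(1)},n^{O(1)})$, the paper takes a far simpler route than you do. It merely observes that an ordinary Boolean circuit built from AND and OR gates \emph{is} a cascading circuit with no cascading block; hence the classical characterization of $\p$ by logspace-uniform polynomial-size Boolean circuit families gives the inclusion in one line. No aux-$k$-sna's enter the argument at all.

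Your detour through Lemmas~\ref{relax-FBS}, \ref{once-right-turn}, and \ref{aux-sna-to-circuit} with a polynomially growing $k=k(n)$ is not merely a bookkeeping nuisance, as you suggest. The definition of an aux-$k$-sna fixes a finite family $\{\Gamma^{(e)}\}_{e\in[0,k]_{\integer}}$ of storage alphabets as part of the machine description, so $k$ must be a constant for a single machine; with $k=k(n)$ you would not have one aux-$k$-sna but an input-length-indexed family of machines, and none of the three lemmas is stated for such a family. Making your plan rigorous would require redefining the model and reproving those lemmas in a uniform-family setting---substantially more than moving a few counters onto the auxiliary tape. The paper's one-line observation sidesteps this entirely, and you should adopt it.
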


\begin{yproof}%{Proposition \ref{P-cascading-circuit}}
It is well-known that uniform families of polynomial-size Boolean circuits precisely characterize $\p$.
Firstly, we show that  $\p\subseteq \mathrm{CCIRcasc,\!alt,\!size}(n^{O(1)}, n^{O(1)}, n^{O(1)})$. This is obvious because a standard Boolean circuit is also a cascading circuit with no cascading block.

Secondly, we show that $\mathrm{CCIRcasc,\!alt,\!size}(n^{O(1)}, n^{O(1)}, n^{O(1)}) \subseteq \p$. Given a family $\{G_n\}_{n\in\nat}$ of $n^{O(1)}$-cascading circuits of $n^{O(1)}$ alternations and $n^{O(1)}$ size. We intend to evaluate the outcome of $G_n$ by simulating the behavior of $G_n$ on an appropriate DTM $M$ using $n^{O(1)}$ space and $n^{O(1)}$ time.
Since $G_n$ has at most $p(n)$ gates, we list up all these gates onto a work tape of a DTM in order. We start with input gates of $G_n$ and evaluate $G_n$ layer by layer. We write their values on a work tape as follows.
Note that AND gates are viewed as a special form of AND$_{(\omega)}$ gates. We thus evaluate each AND$_{(\omega)}$ gate to be true if all inputs to this gate are ``true'', and we evaluate each OR gate to be true if at least one of inputs to this gate is true. After all gates are evaluated, we check whether the output gate of $G_n$ is true. Since the above procedure can be carried out in polynomial time, $M$ is a polynomial-time DTM simulating $\{G_n\}_{n\in\nat}$.
\end{yproof}

An immediate consequence of Proposition \ref{P-cascading-circuit} is an upper-bound on the computational complexity of $\auxsna\mathrm{depth,\!space,\!time}(k,O(\log{n}),n^{O(1)})_{FBS}$.

\begin{corollary}\label{upepr-bound-P}
For any constant $k\in\nat^{+}$, $\auxsna\mathrm{depth,\!space,\!time}(k,O(\log{n}),n^{O(1)})_{FBS} \subseteq \p$.
\end{corollary}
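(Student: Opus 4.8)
The plan is to obtain the corollary by concatenating two results already established: the circuit characterization of Theorem~\ref{circuit-character} and the characterization $\p=\mathrm{CCIRcasc,\!alt,\!size}(n^{O(1)},n^{O(1)},n^{O(1)})$ of Proposition~\ref{P-cascading-circuit}. Apart from one short monotonicity remark needed to cover odd values of $k$, the argument is pure bookkeeping with the resource parameters.

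First I would reduce to even depth. For a constant $k\in\nat^{+}$, a (frozen) blank-sensitive aux-$k$-sna is, after a trivial relabeling of storage symbols, also a (frozen) blank-sensitive aux-$(k+1)$-sna: the depth-$k$ requirement is the restriction of the depth-$(k+1)$ requirement obtained by simply never using storage symbols of the top depth, and a cell frozen after at most $k$ visits is in particular frozen after at most $k+1$ visits, so neither the no-frozen-blank-turn behavior nor the blank-sensitivity condition is affected and the recognized language is unchanged. This yields
\[
\auxsna\mathrm{depth,\!space,\!time}(k,O(\log{n}),n^{O(1)})_{FBS}
\subseteq
\auxsna\mathrm{depth,\!space,\!time}(k+1,O(\log{n}),n^{O(1)})_{FBS},
\]
which is the very monotonicity already invoked for Proposition~\ref{LOGkSNA-NSC}. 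Hence it suffices to treat the case $k=2k'$ for a constant $k'\in\nat^{+}$.

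For even depth, Theorem~\ref{circuit-character} gives
\[
\auxsna\mathrm{depth,\!space,\!time}(2k',O(\log{n}),n^{O(1)})_{FBS}
=\mathrm{CCIRcasc,\!alt,\!size}(k',O(\log{n}),n^{O(1)}).
\]
Since any constant and also $O(\log n)$ are bounded by $n^{O(1)}$, and since $\mathrm{CCIRcasc,\!alt,\!size}(\cdot,\cdot,\cdot)$ is monotone in each of its three arguments (a logspace-uniform family of semi-unbounded fan-in $k'$-cascading circuits of $O(\log n)$ alternations and polynomial size is, verbatim, such a family with the larger parameters $n^{O(1)}$, $n^{O(1)}$, $n^{O(1)}$), we obtain
\[
\mathrm{CCIRcasc,\!alt,\!size}(k',O(\log{n}),n^{O(1)})
\subseteq
\mathrm{CCIRcasc,\!alt,\!size}(n^{O(1)},n^{O(1)},n^{O(1)})
=\p,
\]
the last equality being Proposition~\ref{P-cascading-circuit}. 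Chaining the three displays proves the corollary. The one step that calls for explicit care — and hence is the \emph{main obstacle}, though a minor one — is the depth-monotonicity claim for odd $k$: one should verify from the definitions of Section~\ref{sec:k-sda's} that passing from the depth-$k$ to the depth-$(k+1)$ requirement changes neither $L(M)$ nor the status of (frozen) blank sensitivity and of the no-frozen-blank-turn condition. Everything else is immediate.
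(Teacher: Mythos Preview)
Your proof is correct and follows essentially the same route as the paper: both arguments use Theorem~\ref{circuit-character} for even depths, embed the resulting cascading-circuit class into $\mathrm{CCIRcasc,\!alt,\!size}(n^{O(1)},n^{O(1)},n^{O(1)})=\p$ via Proposition~\ref{P-cascading-circuit}, and invoke the depth-monotonicity $\LL(k)\subseteq\LL(k+1)$ to handle odd $k$. The only cosmetic difference is order---the paper treats the even case first and then sandwiches the odd case, whereas you reduce odd to even up front---and you are slightly more explicit about why the monotonicity in $k$ preserves (frozen) blank sensitivity.
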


\begin{yproof}
For simplicity, we write $\LL(k)$ for $\auxsna\mathrm{depth,\!space,\!time}(k,O(\log{n}),n^{O(1)})_{FBS}$.
Firstly, we consider the even case $2k$ for an arbitrary integer $k\geq1$. By Proposition \ref{circuit-character},  $\LL(2k)$ coincides with $\mathrm{CCIRcasc,\!alt,\!size}(k, O(\log{n}), n^{O(1)})$. By Proposition \ref{P-cascading-circuit}, we conclude that $\LL(2k)$ is included in $\p$. Secondly, we consider the odd case $2k+1$. Since $\LL(2k)\subseteq \LL(2k+1) \subseteq \LL(2k+2)$, it follows that $\LL(2k+1)$ is also included in $\p$.
\end{yproof}

In the end, we remark that, since $\mathrm{LOG}(k\mathrm{SNA}_{FBS})\subseteq \auxsna\mathrm{depth,\!space,\!time}(k,O(\log{n}),n^{O(1)})_{FBS}$ holds by a simple simulation, Corollary \ref{upepr-bound-P} yields an upper bound of the computational complexity of $\mathrm{LOG}(k\mathrm{SNA}_{FBS})$ by $\p$.

%%%%%%%%%
%%%%%%%%%
\section{Brief Discussions and Future Research Directions}

We have studied aux-$k$-sna's, which are nondeterministic variants of aux-$k$-sda's proposed in \cite{Yam21}, and we have presented its characterization in terms of uniform families of polynomial-size semi-unbounded fan-in Boolean circuits with logarithmic alternations, called \emph{$k$-cascading circuits} built up from length-$k$ cascading blocks, composed of semi-unbounded fan-in gates together with AND$_{(\omega)}$ gates of bounded fan-in and unbounded fan-out.

For future research, we wish to list eight key open questions associated with aux-$k$-sna's and $k$-cascading circuits.

\renewcommand{\labelitemi}{$\circ$}
\begin{enumerate}%\vs{-2}
  \setlength{\topsep}{-2mm}%
  \setlength{\itemsep}{1mm}%
  \setlength{\parskip}{0cm}%

\item To achieve our characterization, in essence, we have required  two specific conditions of aux-$k$-sna's: weak depth-susceptibility and no frozen blank turn. As shown in Lemma \ref{relax-FBS}, these conditions are immediately induced from another notion of (frozen) blank sensitivity. We now wonder if these conditions can be further relaxed or eliminated completely. If so, the elimination of the conditions further simplifies the model of the aux-$k$-sna's.

\item The computational model of cascading circuits is a simple modification of the standard circuit model by supplementing cascading blocks. We dare to ask whether or not there are any ``natural'' computational models, completely different from cascading circuits, which precisely capture the behaviors of aux-$k$-sna's.

\item Our circuit characterization works for $\auxsna\mathrm{depth,\!space,\!time}(2k,O(\log{n}),n^{O(1)})_{FBS}$  with an arbitrary positive integer $k$. How can we modify the definition of cascading circuits to establish an exact characterization of  $\auxsna\mathrm{depth,\!space,\!time}(k,O(\log{n}),n^{O(1)})_{FBS}$ for any $k\in\nat^{+}$ in terms of circuit families?

\item The probabilistic variant of Hibbard's limited automata was discussed extensively in \cite{Yam19}. It is of importance to look into a similar probabilistic variant of $k$-sna's and explore its characteristic features for a better understanding of storage automata.

\item In Section \ref{sec:P-connection}, the removal of the upper bounds of cascading length and alternations significantly increases the computational complexity of cascading circuits. If we expand only the cascading length from $k$ to $O(\log^t{n})$, where $t$ is  any constant in $\nat^{+}$, then what is the computational complexity of $\mathrm{CCIRcasc,\!alt,\!size}(O(\log^t{n}),O(\log{n}),n^{O(1)})$?

\item We have introduced $\mathrm{LOG}k\mathrm{SNA}$ based on $k$-sna's.  Is it true that  $\mathrm{LOG}k\mathrm{SDA} \neq  \mathrm{LOG}k\mathrm{SNA}$ and $\mathrm{LOG}k\mathrm{SNA} \neq  \mathrm{LOG}(k+1)\mathrm{SNA}$ for any number $k\in\nat^{+}$?

\item Does $\auxsna\mathrm{depth,\!space,\!time}(k,O(\log{n}),n^{O(1)})_{FBS}$ coincide with $\mathrm{LOG}k\mathrm{SNA}_{FBS}$? If so, can we expand it to the case of $\mathrm{LOG}k\mathrm{SDA}$?

\item In \cite[Lemma 2.1]{Yam21}, $\dcfl$ is characterized by depth-susceptible 2-sda's, whereas we characterize $\dcfl$ using (frozen) blank-sensitive 2-sda's in Proposition \ref{new-result-FBS}. What is the relationship between these two notions over $k$-sda's for any $k\geq3$? More specifically, is it true that $k\mathrm{SDA}_{suseptible} = k\mathrm{SDA}_{FBS}$ for any $k\geq3$? Here, the subscript ``susceptible'' refers to the condition that underlying $k$-sda's are all depth-susceptible.

\item An additional memory device of ``counter'' is appended to pushdown automata in \cite{Yam23,Yam24} to enhance the computational power of the pushdown automata. Does the use of multiple counters help increase  the computational power of $k$-sna's?
\end{enumerate}

%%%%%%%%%%%%%%%%%
%%%%%%%%%%%%%%%%%
%%%%%%%%%%%%%
%%%%%%%%%%%%%
%%%%%%%%%%%%%
%\bs\bs
%\n{\bf Acknowledgements.}

%%%%%%%%%%%%%%%%%%%%%%%%%
%%%%%%%%%%%%%%%%%%%%%%%%%
%%%%%%%%%%%%%%%%%%%%%%%%%

\let\oldbibliography\thebibliography
\renewcommand{\thebibliography}[1]{%
  \oldbibliography{#1}%
  \setlength{\itemsep}{-2pt}%
}
\bibliographystyle{plain}
%\bibliographystyle{plainurl}

%%%%%%%%%%%%%%%%%%%%%%%%%%
%%%%%%%%%%%%%%%%%%%%%%%%%%
%%%%%%%%%%%%%%%%%%%%%%%%%%%%%%%%%%%%%%%%%%%%%%%%%%%
%%%%%%%%%%%%%%%%%%%%%%%%%%%%%%%%%%%%%%%%%%%%%%%%%%%
\end{document}